\newtheorem{theorem}{Theorem}
\newtheorem{lemma}[theorem]{Lemma}
\newtheorem{definition}[theorem]{Definition}
\newtheorem{problem}[theorem]{Problem}
\newtheorem{subproblem}[theorem]{Subproblem}
\newtheorem{observation}[theorem]{Observation}
\def\frechet{d_F}
\def\graphdist{d_P}
\DeclareMathOperator{\diameter}{diameter}
\DeclareMathOperator{\radius}{radius}
\DeclareMathOperator{\trough}{trough}
\DeclareMathOperator{\poly}{poly}
\title{Map matching queries on realistic input graphs under the Fr\'echet distance}
\author[1]{Joachim Gudmundsson}
\author[2]{Martin P. Seybold}
\author[3]{Sampson Wong}
\affil[1]{
     \small{University of Sydney, Australia.}
     \small{joachim.gudmundsson@sydney.edu.au}
}
\affil[2]{
     \small{University of Vienna, Austria.}
     \small{martin.seybold@univie.ac.at}
}
\affil[3]{
     \small{University of Copenhagen, Denmark.}
     \small{sawo@di.ku.dk}
}
\date{}
\begin{document}

\maketitle

\begin{abstract}
Map matching is a common preprocessing step for analysing vehicle trajectories. In the theory community, the most popular approach for map matching is to compute a path on the road network that is the most spatially similar to the trajectory, where spatial similarity is measured using the Fr\'echet distance. A shortcoming of existing map matching algorithms under the Fr\'echet distance is that every time a trajectory is matched, the entire road network needs to be reprocessed from scratch. An open problem is whether one can preprocess the road network into a data structure, so that map matching queries can be answered in sublinear time.

In this paper, we investigate map matching queries under the Fr\'echet distance. We provide a negative result for geometric planar graphs. We show that, unless SETH fails, there is no data structure that can be constructed in polynomial time that answers map matching queries in $O((pq)^{1-\delta})$ query time for any $\delta > 0$, where~$p$ and~$q$ are the complexities of the geometric planar graph and the query trajectory, respectively. We provide a positive result for realistic input graphs, which we regard as the main result of this paper. We show that for $c$-packed graphs, one can construct a data structure of $\tilde O(cp)$ size that can answer $(1+\varepsilon)$-approximate map matching queries in $\tilde O(c^4 q \log^4 p)$ time, where $\tilde O(\cdot)$ hides lower-order factors and dependence on~$\varepsilon$.
\end{abstract}

\section{Introduction}

Location-aware devices have enabled the tracking of vehicle trajectories. In urban environments, vehicle trajectories align with an underlying road network. However, imprecision in the Global Positioning System introduces errors into the trajectory data. Map matching aims to mitigate the effects of these errors by computing a path on the underlying road network that best represents the vehicle's trajectory. See Figure~\ref{fig:introduction1}.

\begin{figure}[ht]
	\centering
	\includegraphics{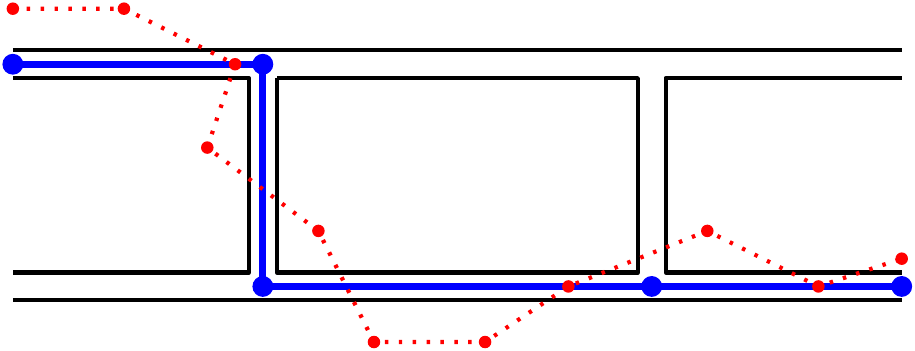}
	\caption{A road network (black), a noisy trajectory (red), and its matched path (blue).}
	\label{fig:introduction1}
\end{figure}

Map matching is a common preprocessing step for analysing vehicle trajectories. As such, numerous map matching algorithms have been proposed across multiple communities (e.g. in the Urban Planning, Geographic Information Systems, and Databases communities). Map matching was the focus of the 2012 ACM SIGSPATIAL Cup~\cite{DBLP:conf/gis/AliKRT12}. For an overview of the extensive literature on map matching, see the surveys~\cite{DBLP:conf/adc/ChaoXH020,DBLP:journals/urban/HashemiK14,DBLP:journals/itsm/KubickaCMN18,quddus2007current,DBLP:conf/gis/WeiWFZ13,DBLP:books/sp/zheng11}. In the theory community, by far the most popular approach is to embed the road network and the trajectory into the Euclidean plane, and to compute a path on the road network that is the most spatially similar to the trajectory \cite{DBLP:journals/jal/AltERW03,DBLP:conf/vldb/BrakatsoulasPSW05,DBLP:journals/tsas/ChambersFWW20,DBLP:conf/alenex/ChenDGNW11,DBLP:journals/comgeo/GudmundssonS15,DBLP:conf/gis/ShigezumiAMI15,DBLP:conf/ssdbm/WenkSP06}, where spatial similarity is measured using the Fr\'echet distance~\cite{DBLP:journals/ijcga/AltG95}. Formally, the map matching problem under the Fr\'echet distance is defined as follows. 

\begin{problem}[Map matching]
\label{problem:map_matching}
Given a graph~$P$ and a trajectory~$Q$ in the plane, compute a path~$\pi$ in~$P$ that minimises $\frechet(\pi,Q)$, where $\frechet(\cdot,\cdot)$ denotes the Fr\'echet distance.
\end{problem}

In a seminal paper by Alt, Efrat, Rote and Wenk~\cite{DBLP:journals/jal/AltERW03}, the authors study Problem~\ref{problem:map_matching} on geometric planar graphs. They provide an $O(pq \log p)$ time algorithm, where~$p$ is the complexity of the graph and~$q$ is the complexity of the trajectory. Their idea is to construct a free space surface, which is a generalisation of the free space diagram~\cite{DBLP:journals/ijcga/AltG95}, and then to perform a sweep line algorithm where a set of reachable points is maintained at the sweep line's current position.

Alt~et~al.~\cite{DBLP:journals/jal/AltERW03}'s algorithm forms the basis of several existing implementations~\cite{DBLP:conf/vldb/BrakatsoulasPSW05,DBLP:conf/alenex/ChenDGNW11,DBLP:conf/gis/ShigezumiAMI15,wei2013map,DBLP:conf/ssdbm/WenkSP06}. Brakatsoulas~et~al.~\cite{DBLP:conf/vldb/BrakatsoulasPSW05} implement Alt~et~al.~\cite{DBLP:journals/jal/AltERW03}'s algorithm and experimentally compare it to a linear-time heuristic and an algorithm minimising the weak Fr\'echet distance. In their experiments, forty-five vehicle trajectories, each with approximately one hundred edges, are mapped onto an underlying road network with approximately ten thousand edges. Their experiments conclude that out of the three algorithms, Alt~et~al.~\cite{DBLP:journals/jal/AltERW03}'s provides the best map matching results but is the slowest. Subsequent papers focus on improving the practical running time of the algorithm~\cite{DBLP:conf/gis/ShigezumiAMI15,DBLP:conf/ssdbm/WenkSP06}. We show that a significantly faster algorithm for geometric planar map matching is unlikely to exist, unless SETH fails.

Traditional analysis focuses on worst case instances, which are unlikely to occur in practice. By making realistic input assumptions, we can circumvent these worst-case instances, and provide bounds that better reflect running time on realistic input. In computational movement analysis, the most popular realistic input assumption is $c$-packedness. A set of edges is $c$-packed if the total length of edges inside any ball is at most~$c$ times the radius of the ball. Given two $c$-packed trajectories of complexity $n$, one can $(1+\varepsilon)$-approximate their Fr\'echet distance in $O(c\varepsilon^{-1/2} \log(1/\varepsilon)n + cn \log n)$ time~\cite{DBLP:journals/ijcga/BringmannK17,DBLP:journals/dcg/DriemelHW12}, circumventing the $\Omega(n^{2-\delta})$ lower bound for all $\delta > 0$ implied by SETH~\cite{DBLP:conf/focs/Bringmann14,DBLP:conf/soda/BuchinOS19}. 

Chen, Driemel, Guibas, Nguyen and Wenk~\cite{DBLP:conf/alenex/ChenDGNW11} study map matching on realistic input graphs and realistic input trajectories. They provide a $(1+\varepsilon)$-approximation algorithm that runs in $O((p+q) \log (p+q) + (\phi q + cp) \log pq \log (p+q) + (\phi \varepsilon^{-2} q + c \varepsilon^{-1} p) \log (pq))$ time, where the graph is $\phi$-low-density and has complexity~$p$, and the trajectory is $c$-packed and has complexity~$q$. A graph is $\phi$-low-density if, for any ball of radius $r$, the number of edges with length at least $r$ that intersect the ball is at most $\phi$. Note that a $c$-packed graph is $2c$-low-density~\cite{DBLP:journals/dcg/DriemelHW12}. Chen~et~al.~\cite{DBLP:conf/alenex/ChenDGNW11} implement their algorithm to map trajectories, each with at most one hundred edges, onto an underlying road network with approximately one million edges. Their experiments show significant running time improvements compared to previous work. On large road networks, their map matching algorithm runs in under a second, whereas previous algorithms~\cite{DBLP:conf/vldb/BrakatsoulasPSW05,DBLP:conf/ssdbm/WenkSP06} require several hours.

For both general graphs and realistic input graphs, a shortcoming of the existing map matching algorithms is that, every time a trajectory is matched, the entire road network needs to be reprocessed from scratch. This is reflected in the linear dependence on~$p$ in the running times of Alt~et~al.~\cite{DBLP:journals/jal/AltERW03} and Chen~et~al.~\cite{DBLP:conf/alenex/ChenDGNW11}, where~$p$ is the complexity of the input graph. If we were to build a data structure for efficient map matching queries, then we would remove the need to reprocess the graph every time a new trajectory is mapped. Formally, the query version of the map matching problem is given below.

\begin{restatable}[Map matching queries]{problem}{mapmatchingqueriesproblem}
\label{problem:mapmatchingqueriesproblem}
Given a graph~$P$ in the plane, construct a data structure so that given a query trajectory~$Q$ in the plane, the data structure returns $\min_{\pi} \frechet(\pi,Q)$, where $\pi$ ranges over all paths in~$P$ and $\frechet(\cdot,\cdot)$ denotes the Fr\'echet distance.
\end{restatable}

To the best of our knowledge, our paper is the first to study map matching queries under the Fr\'echet distance. Obtaining a data structure for Problem~\ref{problem:mapmatchingqueriesproblem} with query time that is sublinear in the complexity of the graph is stated as an open problem in~\cite{DBLP:conf/alenex/ChenDGNW11} and in~\cite{DBLP:journals/comgeo/GudmundssonS15}. 

\subsection{Contributions}

In this paper, we investigate map matching queries under the Fr\'echet distance. An open problem proposed independently by~\cite{DBLP:conf/alenex/ChenDGNW11} and~\cite{DBLP:journals/comgeo/GudmundssonS15} asks whether it is possible to preprocess a graph into a data structure so that map matching queries can be answered in sublinear time.

We provide a negative result in the case of geometric planar graphs. We show that, unless SETH fails, there is no data structure that can be constructed in polynomial preprocessing time, that answers map matching queries in $O((pq)^{1-\delta})$ query time for any $\delta > 0$, where~$p$ and~$q$ are the complexities of the graph and the query trajectory, respectively. Our negative result shows that preprocessing does not help for geometric planar map matching.

We provide the first positive result in the case of realistic input graphs. Our data structure has near-linear size in terms of~$p$, and its query time is polylogarithmic in terms of~$p$. We consider the following theorem to be the main result of our paper.

\begin{restatable}{theorem}{main}
\label{theorem:main}
Given a $c$-packed graph~$P$ of complexity~$p$, one can construct a data structure of $O(p \log^2 p + c \varepsilon^{-4} \log(1/\varepsilon) p \log p)$ size, so that given a query trajectory~$Q$ of complexity~$q$, the data structure returns in $O(q \log q \cdot (\log^4 p + c^4 \varepsilon^{-8} \log^2 p))$ query time a $(1+\varepsilon)$-approximation of $\min_{\pi} \frechet(\pi,Q)$ where $\pi$ ranges over all paths in~$P$ and $\frechet(\cdot,\cdot)$ denotes the Fr\'echet distance. The preprocessing time is~$O(c^2 \varepsilon^{-4} \log^2(1/\varepsilon) p^2 \log^2 p)$.
\end{restatable}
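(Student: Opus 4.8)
The plan is to combine three ingredients: (i)~a reduction of Problem~\ref{problem:mapmatchingqueriesproblem} to $O(\log(1/\varepsilon))$ \emph{decision} queries of the form ``is $\min_{\pi}\frechet(\pi,Q)\le r$?'' in a constant-factor window around the optimum; (ii)~the $c$-packed curve simplification machinery of Driemel, Har-Peled and Wenk~\cite{DBLP:journals/dcg/DriemelHW12} and its graph analogue used by Chen~et~al.~\cite{DBLP:conf/alenex/ChenDGNW11}, applied to the query curve~$Q$ at query time and to the input graph~$P$ at preprocessing time; and (iii)~a hierarchical data structure on the (multiscale) simplified graph that, given one segment of~$Q$ and a compactly encoded reachable set, advances reachability across the corresponding slab of the Alt~et~al.~\cite{DBLP:journals/jal/AltERW03} free-space surface in $\poly(c,1/\varepsilon)\cdot\operatorname{polylog}(p)$ time, i.e.\ without touching the slab cell by cell.

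First I would compute a constant-factor estimate $\hat r$ of $r^\ast:=\min_{\pi}\frechet(\pi,Q)$ from an auxiliary structure (e.g.\ combining nearest-edge queries on~$P$, which give $r^\ast\ge\max_i\operatorname{dist}(a_i,P)$, with a coarse path-witnessed upper bound), and then binary-search on $\log r$ within a constant-width window of $\hat r$ using $O(\log(1/\varepsilon))$ decision queries; this confines all further work to the scale $r=\Theta(r^\ast)$ and hence makes the query cost independent of the spread of~$P$. For a decision query at threshold~$r$ I retrieve the precomputed $\Theta(\varepsilon r)$-simplification~$P'$ of~$P$ (stored for a logarithmic number of geometrically spaced scales, with $\varepsilon r$ snapped to the nearest one) and compute the $\Theta(\varepsilon r)$-simplification~$Q'$ of~$Q$ on the fly in $O(q\log q)$ time. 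The simplification bounds give $|\frechet(\pi,Q)-\frechet(\pi,Q')|=O(\varepsilon r)$ and, via the graph version, $\min_{\pi'\subseteq P'}\frechet(\pi',Q')=\min_{\pi\subseteq P}\frechet(\pi,Q)\pm O(\varepsilon r)$, so deciding on $(P',Q')$ at threshold $(1+O(\varepsilon))r$ suffices; moreover, since $P'$ is a $\Theta(\varepsilon r)$-simplification of a $c$-packed graph, any ball of radius $O(r)$ meets only $O(c\varepsilon^{-1})$ edges of~$P'$.

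I would then run the free-space-surface reachability computation of Alt~et~al.~\cite{DBLP:journals/jal/AltERW03}, but \emph{incrementally over the segments of~$Q'$}: after processing a prefix ending at vertex~$a_{i+1}$ I maintain the set $R_i\subseteq P'$ of endpoints of sub-paths of~$P'$ matched to that prefix within distance $(1+O(\varepsilon))r$. Because the last matched vertex maps to~$a_{i+1}$, we have $R_i\subseteq B(a_{i+1},(1+O(\varepsilon))r)$, so $R_i$ is a union of $O(c\varepsilon^{-1})$ sub-intervals of edges of~$P'$ and stays compact throughout. The step $R_i\mapsto R_{i+1}$ for a segment~$s$ of~$Q'$ only involves the part of~$P'$ inside the tube $\{x:\operatorname{dist}(x,s)\le(1+O(\varepsilon))r\}$, which can however contain $\Theta(c\varepsilon^{-1}\operatorname{len}(s)/r)$ edges of~$P'$ and so must not be swept cell by cell when~$s$ is long. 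For this I precompute, at each stored scale and for each of $\poly(1/\varepsilon)$ canonical directions, a balanced hierarchical decomposition of~$P'$ along that direction in which every node stores a bounded-complexity ``transfer operator'' describing how an interval set entering one cross-section of its band is mapped --- by free-space reachability that is monotone along that direction --- to the interval set leaving the opposite cross-section; the reachability along the tube of a (direction-snapped, suitably windowed) segment~$s$ is then a composition of $O(\log p)$ such operators along a root-to-leaf path, each composition costing $\poly(c,1/\varepsilon)\log p$ thanks to the $O(c\varepsilon^{-1})$-edges-per-cross-section bound, while a segment of length $O(r)$ needs no decomposition as its tube is a single ball. Summing over the $O(q)$ segments, the $O(\log(1/\varepsilon))$ decision queries, and the interval-set bookkeeping gives the stated query time; storing the directional decompositions at all scales gives the stated size, and building them --- which per band and direction amounts to a static reachability computation --- gives the $O(p^2\operatorname{polylog}(p))$ preprocessing time.

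The main obstacle is ingredient~(iii): designing the directional decomposition of the simplified graph --- which is only $c$-packed, not planar --- so that (a)~the transfer operator of a band is genuinely the composition of those of its two sub-bands, which relies on reachability within a single $Q'$-segment being monotone along that segment and therefore summarizable by a single cross-section-to-cross-section interval map, and (b)~these operators have complexity and composition time only $\poly(c,1/\varepsilon)\operatorname{polylog}(p)$, for which one must exploit that each cross-section of the relevant tube meets $O(c\varepsilon^{-1})$ edges of~$P'$. Making the direction snapping cost only a $(1+\varepsilon)$ factor, gluing the per-segment interval maps into a correct global reachability decision, and obtaining the coarse initial estimate are comparatively routine.
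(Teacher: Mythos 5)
Your proposal hinges on ingredient~(iii), which you yourself flag as ``the main obstacle,'' and that obstacle is not resolved: the plan to summarize free-space-surface reachability across a directional band of $P'$ by a composable cross-section-to-cross-section ``transfer operator'' is asserted, not established. It is genuinely unclear that such operators exist with bounded description complexity on a general $c$-packed graph --- the free space surface of a graph against a segment is not a two-dimensional slab, and a single segment of $Q'$ can simultaneously have nontrivial free space in many topologically distinct parts of the tube, so the image of an interval set under one band need not itself be a union of $O(c\varepsilon^{-1})$ intervals on the exit cross-section unless you prove a pruning lemma. Nor do you explain how the same precomputed decomposition simultaneously serves segments of widely varying lengths and offsets within a band, or why the transfer operator of a band is the composition of those of its sub-bands when a path may leave and re-enter the band. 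In short, (iii) is where all the difficulty lives, and the sketch does not show it can be made to work.

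The paper sidesteps this entirely and takes a different route. Rather than propagating a reachable \emph{region} incrementally through the free-space surface, it discretizes the interface between consecutive trajectory segments: for each query vertex $a_i$ it computes a set $T_i$ of $O(c\varepsilon^{-2})$ candidate match points on $P$ (via a $k$-center clustering of $V$ under the graph metric stored in a three-dimensional orthogonal range tree, plus a trough-based low-density range structure \`a la Schwarzkopf--Vleugels for points interior to long edges). Then it builds a small DAG on $\bigcup_i T_i$, where the weight of an arc $(b_{i,j},b_{i+1,k})$ is an approximate $\min_{\pi}\frechet(\pi,a_i'a_{i+1}')$ over paths between the nearby graph vertices, and this per-arc quantity is itself answered in $O(\log p + c\varepsilon^{-1})$ time by the Stage-1/2 machinery: an SSPD of $V$ with a set of $O(c)$ transit vertices per pair, precomputed straightest-path Fr\'echet distances for $O(cp\log p)$ transit pairs, and a Driemel--Har-Peled-style $\varepsilon$-grid around transit vertices to handle arbitrary query endpoints. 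Finally the threshold $r$ is optimized by parametric search on this decision procedure. So where you try to make the free-space surface itself queryable via a hierarchical decomposition of $P$, the paper precomputes enough ``aggregate'' Fr\'echet answers between graph vertices (a quantity of size $\tilde O(cp)$ controlled by the SSPD and $c$-packedness) that the per-segment step collapses to $O(c^2\varepsilon^{-4})$ table lookups and a tiny shortest-path computation. Your binary-search-on-$r$ idea versus parametric search is a cosmetic difference (the paper even raises avoiding parametric search as future work), and your use of multiscale simplification is in the spirit of the paper's clustering/trough structure; but the transfer-operator decomposition, which is the load-bearing part of your argument, has no counterpart in the paper and is not substantiated in your sketch.
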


The most closely related results are~\cite{DBLP:conf/alenex/ChenDGNW11} and~\cite{DBLP:journals/comgeo/GudmundssonS15}. We briefly compare the realistic input assumptions of these related works to our result. In Chen~et~al.~\cite{DBLP:conf/alenex/ChenDGNW11}, the graph is $\phi$-low-density and the trajectory is $c$-packed. In Gudmundsson and Smid~\cite{DBLP:journals/comgeo/GudmundssonS15}, the graph is a $c$-packed tree with long edges, and the trajectory has long edges. In our result, the graph is $c$-packed, but surprisingly, we require no input assumptions on the query trajectory.

\subsection{Related work}

The Fr\'echet distance is a popular similarity measure for trajectories. To compute the Fr\'echet distance between a pair of trajectories of complexity $n$, Alt and Godau~\cite{DBLP:journals/ijcga/AltG95} provide an $O(n^2 \log n)$ time algorithm, which Buchin~et~al.~\cite{DBLP:journals/dcg/BuchinBMM17} improve to an $O(n^2 \sqrt{\log n} (\log \log n)^{3/2})$ algorithm. Conditioned on the Strong Exponential Time Hypothesis (SETH), for all $\delta > 0$, Bringmann~\cite{DBLP:conf/focs/Bringmann14} shows an $\Omega(n^{2-\delta})$ lower bound for computing the Fr\'echet distance in two or more dimensions. Buchin~et~al.~\cite{DBLP:conf/soda/BuchinOS19} generalise the lower bound to one or more dimensions. 

Variants of Problem~\ref{problem:map_matching} have been considered. Seybold~\cite{DBLP:conf/sdm/Seybold17} and Chambers~et~al.~\cite{DBLP:journals/tsas/ChambersFWW20} consider finding the shortest map matching paths in geometric graphs. Chen~et~al.~\cite{DBLP:conf/atmos/ChenSW21} study map matching under the weak Fr\'echet distance, whereas Wylie and Zhu~\cite{DBLP:journals/corr/WylieZ14} and Fu~et~al.~\cite{DBLP:conf/cccg/FuSW19} consider map matching under the discrete Fr\'echet distance. Wei~et~al.~\cite{wei2013map} and Chen~et~al.~\cite{DBLP:conf/atmos/ChenSW21} combine the Fr\'echet distance approach with a Hidden Markov Model approach to obtain a hybrid algorithm.

A problem closely related to Problem~\ref{problem:mapmatchingqueriesproblem} is to preprocess a trajectory for Fr\'echet distance queries. Driemel and Har-Peled~\cite{DBLP:journals/siamcomp/DriemelH13} preprocess a trajectory $Z$ of complexity $n$ in $O(n \log^3 n)$ time and $O(n \log n)$ space, so that given a query trajectory~$Q$ with complexity $k$, and a query subtrajectory~$Z[u,v]$ where~$u$ and~$v$ are points on $Z$, one can return in $O(k^2 \log n \log(k \log n))$ time a constant factor approximation of the Fr\'echet distance between $Z[u,v]$ and~$Q$. In this paper we show that, even with polynomial time preprocessing time on the trajectory~$Z$, one cannot hope to answer Fr\'echet distance queries in truly subquadratic time, unless SETH fails. As such, special cases have been considered. For $k=2$, one can answer $(1+\varepsilon)$-approximate~\cite{DBLP:journals/siamcomp/DriemelH13} or exact~\cite{DBLP:conf/esa/BuchinHOSSS22,DBLP:conf/cccg/BergMO17,DBLP:journals/algorithmica/GudmundssonRSW21} queries in polylogarithmic query time, by constructing a data structure of subquadratic size. Discrete Fr\'echet distance queries for small values of~$k$ have also been studied~\cite{DBLP:journals/corr/driemelpsarros,DBLP:conf/soda/FiltserF21,DBLP:journals/ipl/Filtser18}.

Gudmundsson and Smid~\cite{DBLP:journals/comgeo/GudmundssonS15} preprocess a $c$-packed tree for Fr\'echet distance queries. We regard this to be one of the most relevant results to our work. Given a $c$-packed tree~$T$, and a positive real number~$\Delta$, the authors show how to construct a data structure of size $O(cn)$ in $O(n \log^2 n + cn\log n)$ preprocessing time, so that given a polygonal curve~$Q$ with $k$ vertices, one can decide in $O(c^4 k \log^2 n)$ time whether there exists a path $\pi \in T$ so that $\frechet(\pi,Q) \leq 3.001 \cdot \Delta$, or that $\frechet(\pi,Q) > \Delta$ for all paths $\pi \in T$, where $\frechet(\cdot,\cdot)$ denotes the Fr\'echet distance. The authors assume that the edges of $T$ and~$Q$ have length~$\Omega(\Delta)$. Since $\Delta$ is fixed at preprocessing time, it is unclear whether it is possible to minimise the Fr\'echet distance to solve Problem~\ref{problem:mapmatchingqueriesproblem}.

Related structures that have received considerable attention include range searching and approximate nearest neighbour searching under the Fr\'echet distance~\cite{DBLP:conf/soda/AfshaniD18,DBLP:conf/gis/BaldusB17,DBLP:conf/soda/BringmannDNP22,DBLP:conf/gis/BuchinDDM17,DBLP:conf/gis/BergGM17,DBLP:conf/wads/DriemelP21,DBLP:conf/gis/DutschV17,DBLP:conf/icalp/FiltserFK20,DBLP:journals/tsas/GudmundssonHPS21,DBLP:conf/compgeom/Indyk02}. 

\section{Preliminaries}
\label{sec:preliminaries}

Let $P = (V,E)$ be an undirected graph embedded in the Euclidean plane~$\mathbb R^2$. An edge $uv \in E$ is a segment between $u,v \in V$, with length equal to the Euclidean distance, i.e. $|uv| = d(u,v)$. Let $p = |V| + |E|$ be the complexity of the graph~$P$. We assume that~$P$ is connected, otherwise, our map matching queries can be handled for each connected component independently. A path $\pi \in P$ is defined to be a sequence of vertices $u_1, \ldots, u_k \in V$ so that $u_i u_{i+1} \in E$ for all $1 \leq i < k$. In particular, for the purposes of this paper we consider only paths $\pi$ in~$P$ that start and end at vertices of~$P$. Given a pair of vertices $u, v \in P$, the graph metric~$d_P$ is defined so that $d_P(u,v)$ equals the total length of the shortest path between $u$ and $v$ in the graph~$P$. The graph~$P$ is $c$-packed if, for every ball $B_r$ of radius $r$ in the Euclidean plane, the total length of edges in $E$ inside $B_r$ is upper bounded by~$cr$. Formally, $\sum_{e \in E} |e \cap B_r| \leq c r$.

A trajectory is a sequence of vertices in the Euclidean plane. Given vertices~$a_1, \ldots, a_q$, the polygonal curve~$Q$ is a piecewise linear function $Q:[1,q] \to \mathbb R^2$ satisfying $Q(i) = a_i$ for all $1 \leq i \leq q$, and $Q(i + \mu) = (1-\mu) a_i + \mu a_{i+1}$ for all integers $1 \leq i \leq q-1$ and reals $0 \leq \mu \leq 1$. Let $\Gamma(q)$ be the space of all continuous non-decreasing surjective functions for $[0,1] \to [1,q]$. For a pair of polygonal curves~$Q_1$ and~$Q_2$ of complexities $n_1$ and $n_2$, we define the Fr\'echet distance between $Q_1$ and $Q_2$ to be $\frechet(Q_1,Q_2) = \inf_{(\alpha_1, \alpha_2) \in \Gamma(n_1) \times \Gamma(n_2)} \ \max_{\mu \in [0,1]} d(Q_1(\alpha_1(\mu)), Q_2(\alpha_2(\mu)))$, where $d(\cdot,\cdot)$ denotes the Euclidean distance.

Let $0 < \varepsilon < 1$ be a constant that is fixed at preprocessing time.

\section{Technical Overview}
\label{sec:overview}

In Section~\ref{sec:overview;subsec:data_structure}, we give an overview of our data structure for map matching queries on $c$-packed graphs. In Section~\ref{sec:overview;subsec:lower_bound}, we give an overview of our lower bound for map matching queries on geometric planar graphs. Full proofs are provided in Sections~\ref{sec:straightest_path_queries}-\ref{sec:lower_bounds}.

\subsection{Data structure for \texorpdfstring{$c$}{c}-packed graphs}
\label{sec:overview;subsec:data_structure}

Our data structure for $c$-packed graphs is built in three stages. In Stage~1, we construct a data structure for straightest path queries, which we will define in due course. In Stage~2, we construct a data structure for map matching queries, in the special case that the trajectory is a segment. In Stage~3, we construct a data structure for map matching queries in general. Each stage builds upon and generalises the previous stage. We provide an overview of Stages~1,~2 and~3 in Sections~\ref{sec:overview;subsubsec:straightest_path_queries},~\ref{sec:overview;subsubsec:map_matching_segment_queries} and~\ref{sec:overview;subsubsec:map_matching_queries} respectively.

\subsubsection{Stage 1: Straightest path queries}
\label{sec:overview;subsubsec:straightest_path_queries}

For every pair of vertices in the graph, we are interested in precomputing a path between them that is as straight as possible. We define straightness using the Fr\'echet distance. Formally, given a pair of vertices~$u$ and~$v$, we define a straightest path between~$u$ and~$v$ to be a path $\pi \in P$ between~$u$ and~$v$ that minimises the Fr\'echet distance $\frechet(\pi,uv)$. This leads us to the following definition for straightest path queries.

\begin{subproblem}[Straightest path queries]
\label{problem:straightest_path_queries}
Given a graph~$P$ in the plane, construct a data structure so that given any pair of vertices $u,v \in P$, the data structure returns $\min_{\pi} \frechet(\pi,uv)$ where~$\pi$ ranges over all paths in~$P$ between~$u$ and~$v$. See Figure~\ref{fig:technical_overview_1}.
\end{subproblem}

\begin{figure}[ht]
	\centering
	\includegraphics{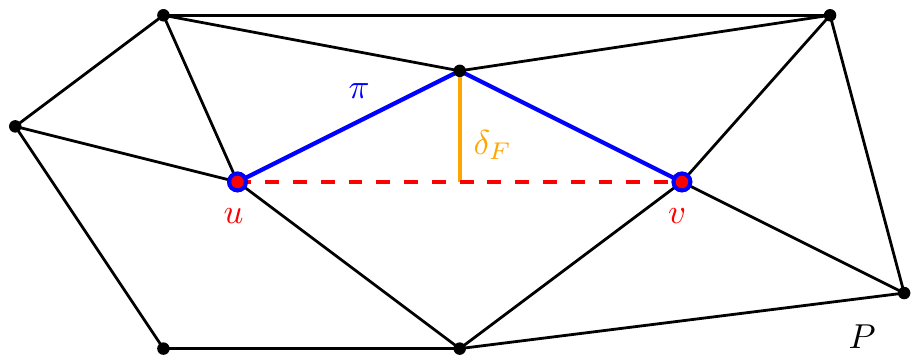}
	\caption{Given a pair of query vertices $u,v$ (red), the data structure in Subproblem~\ref{problem:straightest_path_queries} returns $\min_{\pi} \frechet(\pi, uv)$ (orange) where $\pi$ ranges over all paths between~$u$ and~$v$ (blue) in the graph~$P$~(black).} 
	\label{fig:technical_overview_1}
\end{figure}

Note that Subproblem~\ref{problem:straightest_path_queries} can be viewed as a special case of map matching queries, where the query trajectory must be a segment between two graph vertices, and the path must connect the endpoints of the query segment. A na\"ive way to answer straightest path queries is, for every pair of vertices, to precompute the Fr\'echet distance for its straightest path. Unfortunately, storing the precomputed Fr\'echet distance for all pairs of vertices requires $\Omega(p^2)$ space. 

Instead, we use a semi-separated pair decomposition~\cite{DBLP:journals/dcg/AbamBFG09} to reduce the number of pairs we need to consider. We define the transit vertices of a semi-separated pair to be a set of vertices so that any path between the two components of the semi-separated pair must pass through at least one of the transit vertices. We define the set of transit pairs of a semi-separated pair to be pairs of vertices where one vertex is in the semi-separated pair, and one vertex is a transit vertex. For $c$-packed graphs, we show that there are at most $O(c p \log p)$ transit pairs. By storing the minimum Fr\'echet distance for each transit pair, we reduce the storage requirement of our data structure to $O(c p \log p)$. 

Finally, we answer straightest path queries by dividing the path $u \to v$ into two paths. Specifically, we divide $u \to v$ into $u \to w \to v$, where~$w$ is a transit vertex of the semi-separated pair separating~$u$ and~$v$. Having precomputed the minimum Fr\'echet distance for transit pairs~$(u,w)$ and~$(w,v)$, we use these Fr\'echet distances to obtain a constant factor approximation for the Fr\'echet distance of the straightest path between~$u$ and~$v$. 

Putting this all together, we obtain Theorem~\ref{theorem:straightestpathdatastructure}. For a full proof see Section~\ref{sec:straightest_path_queries}.

\begin{restatable}{theorem}{straightestpathdatastructure}
\label{theorem:straightestpathdatastructure}
Given a $c$-packed graph~$P$ of complexity~$p$, one can construct a data structure of $O(c p \log p)$ size, so that given a pair of query vertices $u,v \in P$, the data structure returns in $O(\log p)$ query time a $3$-approximation of $\min_{\pi} \frechet(\pi,uv)$, where $\pi$ ranges over all paths in~$P$ between~$u$ and~$v$. The preprocessing time is $O(c p^2 \log^2 p)$.
\end{restatable}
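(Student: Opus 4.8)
The plan is to build the data structure in three sub-steps mirroring the three ideas outlined above: (i) a semi-separated pair decomposition (SSPD) of the vertex set, (ii) a transit-vertex set and transit-pair set for each SSPD pair together with a bound on their total size, and (iii) precomputation of an approximate straightest-path Fr\'echet distance for each transit pair, stored so that a query can be answered by combining two of them.

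First I would invoke the SSPD of Abam, de Berg, Farshi and Gudmundsson~\cite{DBLP:journals/dcg/AbamBFG09} on $V$: in the plane this gives $O(p)$ pairs $(A_i, B_i)$ with the weighted version controlling $\sum_i (|A_i| + |B_i|) = O(p \log p)$, and every vertex pair $(u,v)$ is ``covered'' by exactly one pair with $u \in A_i$, $v \in B_i$ (up to swapping). For each SSPD pair I would fix a ball $B$ separating $A_i$ from $B_i$ and take the \emph{transit vertices} to be the vertices of $P$ lying on (or just inside) the boundary annulus of $B$, so that every path from $A_i$ to $B_i$ must pass through one of them. The key counting step is to show the number of transit pairs is $O(cp\log p)$: here the $c$-packedness bounds the total edge length inside a ball of radius $r$ by $cr$, which in turn bounds the number of graph vertices that can lie in a thin annulus of width $\Theta(\varepsilon)\cdot(\text{scale})$ around the separating sphere by $O(c)$ per pair (a standard packing argument, using that a separating annulus of the SSPD has width proportional to the smaller component's diameter, which is itself proportional to the inter-component distance). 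Summing $O(c)$ transit vertices times $(|A_i|+|B_i|)$ endpoints over all pairs gives $O(cp\log p)$ transit pairs.

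Next, for each transit pair $(x, w)$ I would compute a value $\tilde d(x,w)$ that is a constant-factor approximation of $\min_\pi \frechet(\pi, xw)$ over paths from $x$ to $w$. Running the Alt--Efrat--Rote--Wenk~\cite{DBLP:journals/jal/AltERW03} map-matching algorithm once from $x$ (with the segment $xw$ as the query curve) costs $O(p \cdot 1 \cdot \log p) = O(p \log p)$; doing this for every transit pair is $O(cp\log p \cdot p\log p) = O(cp^2\log^2 p)$ preprocessing, matching the claimed bound. Store these values; the total space is $O(cp\log p)$. To answer a query $(u,v)$, locate the unique covering SSPD pair in $O(\log p)$ time (a point-location / which-pair-am-I lookup supported by the SSPD), retrieve the transit vertex $w$ that minimizes $\tilde d(u,w) + \tilde d(w,v)$ — but in fact we must be careful: there may be $\Theta(c)$ transit vertices, so to keep query time $O(\log p)$ I would instead, at preprocessing time, for each pair $(A_i,B_i)$ and each $u \in A_i$, precompute $\min_w \tilde d(u, w)$ and the corresponding argmin side, and symmetrically for $B_i$, reducing the query to reading off two stored numbers and returning their sum. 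One then argues via the triangle inequality for the Fr\'echet distance — $\frechet(\pi_1 \cdot \pi_2, uv) \le \frechet(\pi_1, uw) + \frechet(\pi_2, wv) + |$the ``kink'' at $w$, which is $O(\diameter)$ and hence within the separation slack$|$ — and a matching lower bound (any path from $u$ to $v$ hits some transit $w$ and its two halves have Fr\'echet distance at least $\tilde d(u,w)/O(1)$ and $\tilde d(w,v)/O(1)$ to the sub-segments $uw$, $wv$) that the returned value is a $3$-approximation, after tracking the constants through the SSPD separation ratio.

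The main obstacle I anticipate is the stitching analysis in step (iii): splitting $uv$ into $uw$ and $wv$ and concatenating the two precomputed paths does not immediately give a path whose Fr\'echet distance to $uv$ is bounded by the sum, because the segment $uv$ is not the concatenation of $uw$ and $wv$ (the three points $u, w, v$ are not collinear in general). Making the constant come out to exactly $3$ requires using that $w$ lies in a \emph{narrow} separating annulus — so $w$ is within an $O(\varepsilon)$- or $O(1)$-fraction of $d(u,v)$ of the segment $uv$ — and then carefully charging the geometric discrepancy between $uw \cup wv$ and $uv$ against this slack; this is where the $c$-packed separating-annulus width estimate is really used, and where the bookkeeping between the SSPD separation constant, the approximation factor of each $\tilde d$, and the final $3$ must be done precisely. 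A secondary technical point is ensuring the ``transit vertex on every $A_i$–$B_i$ path'' property genuinely holds for an embedded graph whose edges may be long and may cross the separating sphere without a vertex there; this is handled by subdividing long edges at sphere crossings (adding $O(\phi) = O(c)$ Steiner vertices per pair by the low-density property $\phi \le 2c$ of $c$-packed graphs~\cite{DBLP:journals/dcg/DriemelHW12}), which does not change the asymptotic bounds.
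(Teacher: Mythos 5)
There is a genuine gap at the heart of your transit-vertex construction. You propose taking the transit vertices to be the graph vertices lying in a separating annulus and claim that $c$-packedness bounds their number by $O(c)$ per SSPD pair. This is false: $c$-packedness bounds the \emph{total edge length} inside a ball, not the number of vertices. A connected path consisting of $N$ edges each of length $1/N$, packed into the annulus, has bounded total length but $N$ vertices there, so the count is unbounded. Your Steiner-vertex remark only handles long edges crossing the sphere, not this short-edge pile-up. The paper instead sets up a unit-capacity max-flow with $A_i$ as sources and $B_i$ as sinks and takes one endpoint per min-cut edge as the transit set; the min-cut size $\ell$ is bounded via the $\ell$ edge-disjoint flow paths, each of which must traverse the full annulus width $\diameter(A_i)$ inside $D_2$, so $\ell \cdot \diameter(A_i) \le 2c\cdot\diameter(A_i)$. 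That flow-based argument is the missing key idea — it converts the total-edge-length bound into a bound of $2c$ on the number of transit vertices, which your annulus-vertex-counting argument cannot do.

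A second problem lies in the query formula and the approximation analysis. You propose returning $\tilde d(u,w)+\tilde d(w,v)$ and precomputing $\min_w \tilde d(u,w)$ and $\min_w \tilde d(w,v)$ separately. This fails for two reasons: the argmin $w$ for the two sides need not coincide (so the two stored numbers may refer to different transit vertices and hence no actual path), and the formula itself does not yield a factor $3$. The paper's query value is $\min_{w\in C_i}\bigl(\max(D_{uw},D_{wv})+D_w\bigr)$, where $D_w$ is the orthogonal distance from $w$ to the query segment $uv$; this uses a \emph{max}, not a sum, plus the query-dependent term $D_w$ — which also means the minimum over $w$ cannot be precomputed offline, and the query genuinely iterates over the $O(c)$ transit vertices. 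The matching lower bound that yields the factor $3$ invokes Driemel and Har-Peled's subcurve-spine lemma (any subcurve of a path whose spine-Fr\'echet cost is $\rho$ has spine-Fr\'echet cost at most $2\rho$), applied at the transit vertex $w^*$ that the optimal path is forced to pass through. Your sketch gestures at ``a matching lower bound'' and a ``kink'' term but does not identify this lemma or the max-plus-orthogonal-distance structure, so the constant $3$ is not recoverable from what you wrote.
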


\subsubsection{Stage 2: Map matching segment queries}
\label{sec:overview;subsubsec:map_matching_segment_queries}

Our next step is to answer map matching queries where the query trajectory is an arbitrary segment. 

\begin{subproblem}[Map matching segment queries]
\label{problem:map_matching_segment_queries}
Given a graph~$P$ in the plane, construct a data structure so that given a query segment~$Q$ in the plane, the data structure returns $min_\pi \frechet(\pi,Q)$ where $\pi$ ranges over all paths in~$P$ that start and end at a vertex of~$P$. See Figure~\ref{fig:technical_overview_2}.
\end{subproblem}

\begin{figure}[ht]
	\centering
	\includegraphics{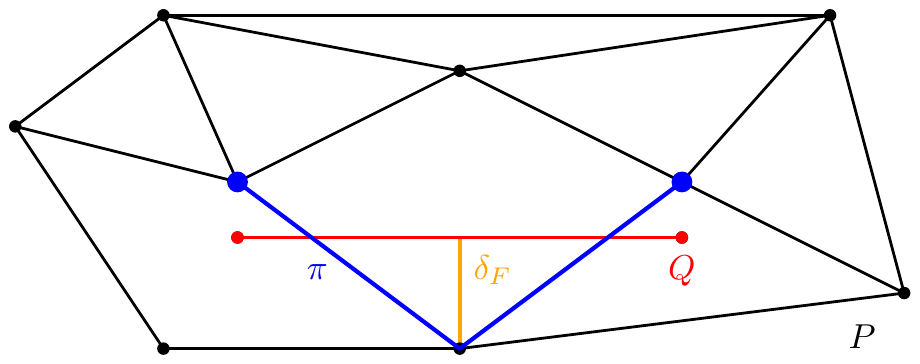}
	\caption{Given a query segment~$Q$ (red), the data structure in Subproblem~\ref{problem:map_matching_segment_queries} returns $\min_{\pi} \frechet(\pi, Q)$ (orange) where $\pi$ ranges over all paths (blue) in the graph~$P$ (black).} 
	\label{fig:technical_overview_2}
\end{figure}

Subproblem~\ref{problem:map_matching_segment_queries} can be viewed as a generalisation of Subproblem~\ref{problem:straightest_path_queries}, where the endpoints of the segment~$Q$ are not necessarily graph vertices, and the starting and ending points of the path are not given and must instead be computed. To answer map matching segment queries, we combine two data structures. The first data structure is an extension of the data structure in Theorem~\ref{theorem:straightestpathdatastructure}, which we modify to handle query segments that do not necessarily have their endpoints at graph vertices. The second data structure is to build a simplification of the $c$-packed graph so that one can efficiently query the starting and ending points of the path.

To build our first data structure, we use the result of Driemel and Har-Peled~\cite{DBLP:journals/siamcomp/DriemelH13}, which states that one can preprocess a trajectory in near-linear time and space, so that given a query segment, one can $(1+\varepsilon)$-approximate the Fr\'echet distance from the query segment to any subcurve of the trajectory in constant time. Let $\varepsilon > 0$ and $\chi = \varepsilon^{-2} \log (1/\varepsilon)$. By combining Theorem~\ref{theorem:straightestpathdatastructure} with their result, we obtain a data structure of $O(c\chi^2 p \log p)$ size, so that given a pair of query vertices~$u,v \in P$ and a query segment $ab$ in the plane, the data structure returns in $O(\log p + c \varepsilon^{-1})$ time a $(1+\varepsilon)$-approximation of $\min_{\pi} \frechet(\pi, ab)$, where $\pi$ ranges over all paths in~$P$ between~$u$ and~$v$. The preprocessing time is $O(c \chi^2 p^2 \log^2 p)$. 

To build our second data structure, we use graph clustering to simplify the $c$-packed graph. We first consider the decision version of the problem. Given a Fr\'echet distance $r$, we guarantee that all edges in our simplified graph have length at least~$\varepsilon r$. By $c$-packedness, the number of simplified graph vertices inside a disk of radius $r$ is at most a constant. Given a query segment $ab$, this reduces the number of candidate starting and ending points of the matched path to a constant. We use an orthogonal range searching data structure to efficiently query for the candidate starting and ending points of the path. Finally, we apply parametric search to minimise the Fr\'echet distance $r$.

By combining our two data structures, we obtain Theorem~\ref{theorem:mapmatchingsegmentqueries}. For a full proof see Section~\ref{sec:map_matching_segment_queries}.

\begin{restatable}{theorem}{mapmatchingsegmentqueries}
\label{theorem:mapmatchingsegmentqueries}
Given a $c$-packed graph~$P$ of complexity~$p$, one can construct a data structure of $O(c \varepsilon^{-4}\log^2(1/\varepsilon) \cdot p \log p )$ size, so that given a query segment $ab$ in the plane, the data structure returns in $O(c^4 \varepsilon^{-4} \cdot \log^2 p)$ time a $(1+\varepsilon)$-approximation of $\min_{\pi} \frechet(\pi,ab)$, where $\pi$ ranges over all paths in~$P$ that start and end at a vertex of~$P$. The preprocessing time is $O(c\varepsilon^{-4}\log^2(1/\varepsilon) \cdot p^2 \log^2 p)$.
\end{restatable}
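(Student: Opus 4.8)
The plan is to combine the two data structures outlined in the technical overview, and to handle the minimisation over the Fréchet distance via parametric search. Recall the two ingredients. The first is the extension of Theorem~\ref{theorem:straightestpathdatastructure} together with the segment-vs-subcurve oracle of Driemel and Har-Peled~\cite{DBLP:journals/siamcomp/DriemelH13}: given query vertices $u,v \in P$ and a query segment $ab$, it returns in $O(\log p + c\varepsilon^{-1})$ time a $(1+\varepsilon)$-approximation of $\min_\pi \frechet(\pi, ab)$ over paths $\pi$ from $u$ to $v$, using $O(c\chi^2 p\log p)$ space with $\chi = \varepsilon^{-2}\log(1/\varepsilon)$. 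The second ingredient must supply, given $ab$, a constant-size set of candidate endpoints $u$ for the start of $\pi$ and a constant-size set of candidates $v$ for the end of $\pi$, so that plugging each candidate pair into the first data structure and taking the minimum yields the answer.

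First I would build the endpoint-candidate structure for a \emph{fixed} decision radius $r$. Using graph clustering, contract the $c$-packed graph so that every remaining edge has length at least $\varepsilon r$, while the contraction moves each original vertex by at most $O(\varepsilon r)$, hence changes the optimal Fréchet distance by at most a $(1+O(\varepsilon))$ factor; this is the standard $\mu$-simplification argument for $c$-packed inputs. By $c$-packedness, any disk of radius $2r$ contains only $O(c/\varepsilon)$ vertices of the simplified graph. If $\mathrm{OPT} \le r$, then the start of the matched path must lie within distance $r$ of $a$ and the end within distance $r$ of $b$, so both candidate sets have size $O(c/\varepsilon)$; I retrieve them with a static orthogonal range searching structure of size $O(p\log p)$ answering disk-range reporting in $O(\log p + k)$ time. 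Running the first data structure on all $O(c^2/\varepsilon^2)$ candidate pairs then decides, up to a $(1+\varepsilon)$ factor, whether $\mathrm{OPT} \lesssim r$, in time $O(c^2\varepsilon^{-2}(\log p + c\varepsilon^{-1})) = O(c^3\varepsilon^{-3}\log p)$.

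Second, I would remove the dependence on a fixed $r$ by parametric search. The decision procedure above is a comparison-based algorithm in $r$: the only $r$-dependent comparisons are "does vertex $x$ lie in the disk of radius $r$ about $a$ (resp.\ $b$)?", which are comparisons of $r$ against fixed distances, and the threshold tests inside the Driemel--Har-Peled oracle, which compare $r$ against values determined by $ab$ and the precomputed curves. Simulating the decision procedure on the unknown optimum, resolving each comparison by a binary search over $O(\poly(p))$ candidate critical values (these critical values can be precomputed, or generated on the fly from the combinatorial structure), pins down $\mathrm{OPT}$ to within a $(1+\varepsilon)$ factor with an $O(\log p)$ overhead, giving the stated $O(c^4\varepsilon^{-4}\log^2 p)$ query time. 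The preprocessing is dominated by building the extended first data structure, which is $O(c\chi^2 p^2\log^2 p) = O(c\varepsilon^{-4}\log^2(1/\varepsilon)\,p^2\log^2 p)$, and the space is $O(c\chi^2 p\log p) = O(c\varepsilon^{-4}\log^2(1/\varepsilon)\,p\log p)$, the range structure and simplified graphs contributing only lower-order terms.

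The main obstacle I anticipate is the clustering/simplification step and its interaction with the correctness of the candidate-set argument: one must show that contracting short edges of a $c$-packed graph changes $\min_\pi \frechet(\pi, ab)$ by at most a $(1+O(\varepsilon))$ factor \emph{and} that an optimal path in the simplified graph lifts back to a genuine path in $P$ with comparable Fréchet distance — the two directions of this approximation are not symmetric, and care is needed so that the constant-size candidate endpoint sets in the simplified graph really do capture a near-optimal path in the original graph. A secondary technical point is verifying that every $r$-dependent branch of the decision procedure is indeed a simple threshold comparison against an algebraically bounded quantity, which is what makes the parametric search applicable without blowing up the running time.
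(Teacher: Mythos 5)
Your high-level plan matches the paper's: combine the vertex-pair segment query structure (Lemma~\ref{lemma:map_matching_segment_query_if_u_and_v_are_known}) with a small candidate set of endpoints retrieved by range searching, then apply parametric search. But the candidate-selection step you propose is genuinely different from the paper's, and it is exactly where your flagged gap lives.

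You propose \emph{contracting} the graph so that all remaining edges have length $\ge \varepsilon r$ (a $\mu$-simplification), then using $c$-packedness to bound the number of simplified vertices near $a$ and $b$. This forces you to relate paths in the contracted graph to paths in the original graph in both directions, and you correctly identify that this lifting argument is nontrivial. It also introduces a side issue you don't flag: vertices of the contracted graph need not be vertices of $P$, but the problem statement requires the matched path to start and end at vertices of $P$, so you would additionally need to map contracted candidates back to original vertices while controlling the Fr\'echet error. The paper sidesteps all of this by \emph{never modifying the graph}. Instead it runs Gonzalez' $k$-centre clustering under the graph metric $d_P$ once at preprocessing time, records for each vertex $v_i$ its clustering radius $r_i$, and at query time uses a 3D orthogonal range search on $(x_i,y_i,r_i)$ to extract the subset $V_r = \{v_i : r_i \ge \varepsilon r\}$ that falls in the query square (Lemmas~\ref{lemma:points_in_square_lemma} and~\ref{lemma:starting_and_ending_point_data_structure}). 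This set is a bona fide subset of $V(P)$, it has size $O(c\varepsilon^{-1})$ inside any side-$2r$ square by an edge-disjoint-shortest-paths argument that directly invokes $c$-packedness, and it comes with the coverage guarantee that every vertex $u^*$ in the square satisfies $d_P(u^*,u) \le \varepsilon r$ for some retrieved $u$. Correctness then follows by concatenating $uu^* \circ \pi^* \circ v^*v$, which is a genuine path in $P$ whose Fr\'echet distance to $ab$ exceeds $r^*$ by at most $\varepsilon' r$ --- no lifting needed. So the gap you anticipate is real, and the paper's clustering-plus-range-search mechanism is specifically designed to avoid it; your proposal would need to supply that missing lifting argument or switch to the paper's technique.

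A secondary accounting point: your decision-time bound is $O(c^3\varepsilon^{-3}\log p)$ and you claim the parametric search overhead is $O(\log p)$, which would give $O(c^3\varepsilon^{-3}\log^2 p)$, not the stated $O(c^4\varepsilon^{-4}\log^2 p)$. The paper instead uses Megiddo-style parametric search with the decision procedure serving as both the sequential and parallel algorithm, giving roughly $O(T_s \cdot T_p)$ where $T_s = T_p = O(c^2\varepsilon^{-2}\log p)$, which lands on $O(c^4\varepsilon^{-4}\log^2 p)$. Your sketch of ``binary search over $\poly(p)$ precomputed critical values'' is not obviously implementable here, since the critical values generated inside the Driemel--Har-Peled oracle depend on the query segment $ab$ and cannot be enumerated at preprocessing time; the paper's Megiddo simulation generates them on the fly.
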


\subsubsection{Stage 3: Map matching queries}
\label{sec:overview;subsubsec:map_matching_queries}

Finally, we consider general map matching queries, which we restate for convenience. See Figure~\ref{fig:technical_overview_3}.

\mapmatchingqueriesproblem*

\begin{figure}[ht]
	\centering
	\includegraphics{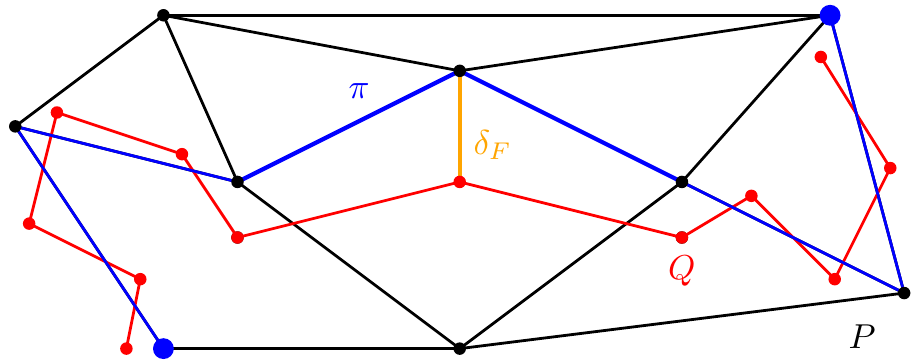}
	\caption{Given a query trajectory~$Q$ (red), the data structure in Problem~\ref{problem:mapmatchingqueriesproblem} returns $\min_{\pi} \frechet(\pi, Q)$ (orange) where $\pi$ ranges over all paths (blue) in the graph~$P$ (black).} 
	\label{fig:technical_overview_3}
\end{figure}

Let the vertices of~$Q$ be $a_1, \ldots, a_q$, and let the Fr\'echet distance for the decision version be~$r$. We compute a set of points $T_i$ that $a_i$ can match to. Each point $b_{i,j} \in T_i$ is either a vertex of $P$, or a point along an edge of $P$. The size of $T_i$ is at most a constant, depending on $c$ and $\varepsilon$. For a point~$b_{i,j}$ that is a vertex of~$P$, we construct it in the same way as in Stage~2. For a point~$b_{i,j}$ that is on an edge of~$P$, we construct it by sampling along the edges of~$P$ that have length at least $\varepsilon r/2$ and are within a distance of~$r$ to~$a_i$. To efficiently query the edges of~$P$ with these two properties, we build a three-dimensional low-density environment and use the range searching data structure of Schwarzkopf and Vleugels~\cite{DBLP:journals/ipl/SchwarzkopfV96}.

The final step is to build a directed graph on the set of points $\cup_{i=1}^q T_i$. For each~$b_{i,j} \in T_i$ and~$b_{i+1,k} \in T_{i+1}$, we use the map matching segment query from the previous section to compute a $(1+\varepsilon)$-approximation of the minimum Fr\'echet distance $\min_{\pi} (\pi, a_ia_{i+1})$ where $\pi$ ranges over all paths between~$b_{i,j}$ and~$b_{i+1,k}$. We set the capacity of the directed edge from~$b_{i,j}$ to~$b_{i+1,k}$ to be this minimum Fr\'echet distance. We decide whether there is a directed path from a point in $T_1$ to a point in $T_q$ so that the capacities of all edges on the directed path are at most $(1+\varepsilon) r$. Finally, we use parametric search to minimise~$r$. 

Putting this all together, we obtain Theorem~\ref{theorem:main}, which we restate for convenience. For a full proof see Section~\ref{sec:map_matching_queries}.

\main*

\subsection{Lower bound for geometric planar graphs}
\label{sec:overview;subsec:lower_bound}

In the final section, we investigate lower bounds for map matching queries on graphs that are not $c$-packed. Our lower bounds attempt to explain why answering map matching queries is such a difficult problem in general. In particular, we show that unless SETH fails, there is no data structure that can be constructed in polynomial preprocessing time, that answers map matching queries on geometric planar graphs in truly subquadratic time. Note that the upper bound of Alt~et~al.~\cite{DBLP:journals/jal/AltERW03} matches this lower bound up to lower-order factors, which implies that preprocessing does not help for geometric planar map matching, unless SETH fails. 

To build towards our lower bound for map matching queries, we consider a warm-up problem, which is to preprocesses a trajectory, so that given a query trajectory, the data structure can efficiently answer the Fr\'echet distance between the query trajectory and the preprocessed trajectory. Buchin~et~al.~\cite{DBLP:conf/esa/BuchinHOSSS22} claim that this is an extremely difficult problem, which is why the special case of query segments is considered in their paper. We provide evidence towards Buchin~et~al.'s~\cite{DBLP:conf/esa/BuchinHOSSS22} claim. We show that preprocessing does not help with Fr\'echet distance queries on trajectories unless SETH fails. In particular, there is no data structure with polynomial preprocessing time that can answer Fr\'echet distance queries significantly faster than computing the Fr\'echet distance without preprocessing. To show our lower bound, we modify Bringmann's~\cite{DBLP:conf/focs/Bringmann14} construction to answer the offline version of the data structure problem in a similar fashion to Rubinstein~\cite{DBLP:conf/stoc/Rubinstein18}, Driemel and Psarros~\cite{DBLP:journals/corr/driemelpsarros} and Bringmann~et~al.~\cite{DBLP:conf/soda/BringmannDNP22}. 

Next, we prove a lower bound for Problem~\ref{problem:map_matching}. We show that unless SETH fails, there is no truly subquadratic time for map matching on geometric planar graphs. This shows that the algorithm by Alt~et~al.~\cite{DBLP:journals/jal/AltERW03} for geometric planar map matching is optimal up to lower-order factors, unless SETH fails. Finally, we combine the ideas from our warm-up problem and our lower bound for Problem~\ref{problem:map_matching} to rule out truly subquadratic query times for map matching queries on geometric planar graphs, unless SETH fails. 

Putting this all together, we obtain Theorem~\ref{theorem:lower_bound_map_matching_query}. For a full proof see Section~\ref{sec:lower_bounds}.

\begin{restatable}{theorem}{lowerboundmapmatchingquery}
\label{theorem:lower_bound_map_matching_query}
Given a geometric planar graph of complexity~$p$, there is no data structure that can be constructed in $\poly(p)$ time, that when given a query trajectory of complexity~$q$, can answer $2.999$-approximate map matching queries in $O((pq)^{1-\delta})$ query time for any $\delta > 0$, unless SETH fails. This holds for any polynomial restrictions of~$p$ and~$q$.
\end{restatable}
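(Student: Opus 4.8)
=== PROOF PROPOSAL ===

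\textbf{Plan overview.} The goal is a conditional lower bound for a data-structure version of map matching. I would prove it by a chain of reductions starting from SETH, going through Orthogonal Vectors (OV), and following the now-standard "offline data structure" trick of Rubinstein that turns a fast query time into a fast \emph{batched} algorithm, which then contradicts the quadratic-time lower bound for Fr\'echet-type problems. The three milestones, in order, are: (i) a lower bound for the ``Fr\'echet distance queries on a preprocessed trajectory'' warm-up; (ii) a lower bound for the (non-query) map matching problem on geometric planar graphs; (iii) the combination that yields the stated data-structure lower bound, including the $2.999$-approximation hardness and the robustness to polynomial restrictions of $p$ and $q$.

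\textbf{Step 1: the warm-up (Fr\'echet queries on a trajectory).} I would take Bringmann's OV-based construction, which given two sets $A=\{a_1,\dots,a_N\}$, $B=\{b_1,\dots,b_N\}$ of $d$-dimensional Boolean vectors produces curves whose Fr\'echet distance is small iff some $a_i, b_j$ are orthogonal. The key structural feature I need is that the construction is ``separable'': the curve built from $B$ (say) decomposes into $N$ gadget subcurves, one per $b_j$, concatenated with connector pieces, and similarly for $A$. Suppose there were a data structure with $\poly(n)$ preprocessing and $O((nk)^{1-\delta})$ query time (here $n$ is the size of the preprocessed curve, $k$ the query curve). Preprocess the full $A$-curve. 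Then, as in Rubinstein / Driemel--Psarros / Bringmann--Driesen--Nusser--Psarros, split $B$ into $N/g$ blocks of $g$ vectors each, query the $A$-curve against the sub-curve for each block; balancing $g$ against the exponents gives a truly subquadratic offline OV algorithm, contradicting SETH. The approximation factor carries through because Bringmann's gadgets have a constant multiplicative gap between the ``yes'' and ``no'' Fr\'echet values; I would track that gap to get the $2.999$ (or whatever constant $<3$ the gadget separation supports).

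\textbf{Step 2: from trajectory to geometric planar graph.} Here the new ingredient is that a \emph{path in a planar graph} must be realised, not a fixed curve. I would encode the ``choose $a_i$'' step as a graph gadget: a planar ``selector'' sub-graph with $N$ parallel branches, branch $i$ carrying the Fr\'echet gadget for $a_i$; any path through the selector picks exactly one branch. The query trajectory $Q$ is then essentially the $B$-curve (a fixed curve, complexity $\Theta(Nd)$), and $\min_\pi d_F(\pi,Q)$ is small iff some branch $a_i$ is orthogonal to some $b_j$ encoded in $Q$. Planarity must be maintained when routing the $N$ branches and their connectors --- I expect this to require the branches to be laid out in a ``brush'' or nested-staircase pattern so that the connector segments do not cross, analogous to planar-OV constructions for, e.g., planar diameter lower bounds. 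The output is: no $O(pq)^{1-\delta}$ algorithm for Problem~\ref{problem:map_matching} on planar geometric graphs, matching Alt et al.\ up to subpolynomial factors.

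\textbf{Step 3: the data-structure lower bound, and the main obstacle.} Finally I combine: build the planar selector graph $P$ from $A$ at preprocessing time (it has complexity $p = \Theta(Nd)$); at query time feed in the $B$-curve, but --- exactly as in Step 1 --- feed it in \emph{blocks}, issuing $N/g$ map matching queries, each with a query trajectory of complexity $\Theta(gd)$. A query time of $O((pq)^{1-\delta})$ then solves OV in $\tilde O(N^{2-\delta'})$ time after $\poly(N)$ preprocessing, contradicting SETH; ``$\poly(p)$ preprocessing'' is fine because $p=\poly(N)$. The ``any polynomial restrictions of $p$ and $q$'' clause is handled by padding --- adding dummy disjoint gadgets to inflate $p$, or dummy sub-curves to inflate $q$, without changing the answer --- so the lower bound holds on the whole regime $q = p^{\Theta(1)}$. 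The $2.999$ factor is inherited from the gadget gap of Step 1/2. \textbf{The main obstacle} I anticipate is Step 2's planarity: ensuring that the $N$-way branch selector, together with all the connector curves needed so that a path through branch $i$ realises precisely the $a_i$-gadget and nothing else, can be embedded in the plane as a geometric (straight-line, non-self-intersecting) graph while keeping every edge's geometry consistent with the Fr\'echet gadget semantics. A secondary subtlety is making the block-decomposition of the query curve genuinely ``local'', i.e., verifying that querying against a length-$g$ block of $B$ really does test orthogonality of all $a_i$ against exactly those $g$ vectors, with the connector pieces not introducing spurious matches --- this is where the precise alignment structure of Bringmann's construction must be invoked carefully.
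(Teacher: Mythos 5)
Your three-step plan --- (i) a warm-up lower bound for Fr\'echet queries on a preprocessed trajectory via the Rubinstein-style offline/blocked reduction, (ii) a planar ``selector'' gadget in which choosing a vector $a_i\in A$ corresponds to choosing a branch of the graph, and (iii) the combination where the query curve is fed in blocks --- is precisely the decomposition the paper uses (its Lemmas for the warm-up and for non-query planar map matching, combined in the proof of the theorem). The planarity obstacle you flag is resolved exactly as you guess: the branch curves $T_i$ are stacked at distinct heights $3ih$, tied together by two loop curves $U$ and $V$, with the $2p$ connector edges fanning out without crossings.

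Two places where your sketch is loose and would need to be tightened. First, the $2.999$ factor does not come from Bringmann's gadgets, which only give a $1$-vs-$1.001$ gap and hence only $1.001$-approximation hardness for the trajectory warm-up; the $2.999$ arises because the map-matching gadget replaces Bringmann's letters with Buchin--Ostrovsky--Staals-style curves $0_A,1_A,0_B,1_B$ that satisfy $\frechet(1_A,1_B)=3$ while all other pairwise Fr\'echet distances are $1$, giving a $1.001$-vs-$3$ gap in the YES/NO cases. Your phrase ``inherited from the gadget gap of Step 1/2'' therefore conflates two different gadget families; the constant has to be re-derived in Step 2. Second, the ``any polynomial restrictions of $p$ and $q$'' clause is not handled by padding in the paper. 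Padding the graph with dummy disjoint components conflicts with the connectedness assumption in the preliminaries, and padding the query curve would require care not to perturb the minimum Fr\'echet distance. Instead the paper case-splits on $\gamma$ with $q=\Theta(p^\gamma)$: when $\gamma\geq 2\alpha$ (where $\poly(p)=O(p^\alpha)$ is the preprocessing budget) the preprocessing time is $O(q^{1/2})$ and hence dominated by a single query, so the non-query lower bound applies directly; when $\gamma\leq 2\alpha$ it fixes the OV instance with $m=\Theta(n^{2\alpha})$, partitions $B$ into $K=O(n^{2\alpha-\gamma})$ blocks of size $\Theta(n^\gamma)$, and balances the arithmetic so the total query time stays $O((mn)^{1-\delta/(1+2\alpha)})$. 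This is the mechanism that makes the bound hold across the whole polynomial regime, rather than padding.
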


This completes the overview of the main results of our paper.

\section{Stage 1: Straightest path queries}
\label{sec:straightest_path_queries}

The first stage of our data structure for $c$-packed graphs is to construct a straightest path query data structure. Recall that the straightest path between~$u$ and~$v$ is a path $\pi \in P$ from~$u$ to~$v$ that minimises the Fr\'echet distance $\frechet(\pi,uv)$. A data structure for straightest path queries is defined as follows. Given a pair of query vertices~$u$ and~$v$, the data structure returns the minimum Fr\'echet distance $\frechet(\pi,uv)$ where $\pi$ ranges over all paths between~$u$ and~$v$. As stated in the technical overview, we avoid storing a quadratic number of Fr\'echet distances by using a semi-separated pair decomposition (SSPD) to reduce the number of pairs of vertices we need to consider.

\begin{definition}[SSPD]
Let~$V$ be a set of vertices. A semi-separated pair decomposition of~$V$ with separation constant $s \in \mathbb R^+$ is a collection $\{(A_i, B_i)\}_{i=1}^k$ of pairs of non-empty subsets of~$V$ so that $$\min(\diameter(A_i),\diameter(B_i)) \leq s \, \cdot d(A_i,B_i),$$ and for any two distinct points~$u$ and~$v$ of~$V$, there is exactly one pair $(A_i, B_i)$ in the collection, such that~$(i)$ $u \in A_i$ and $v \in B_i$, or~$(ii)$ $v \in A_i$ and $u \in B_i$. 
\end{definition}

Note that for sets $A,B$, we define $d(A,B) = \min_{(a,b) \in A \times B} d(a,b)$, where $d(a,b)$ denotes the Euclidean distance. The total weight of $\{(A_i, B_i)\}_{i=1}^k$ is defined as $\sum_{i=1}^k (|A_i| + |B_i|)$. Abam~et~al.~\cite{DBLP:journals/dcg/AbamBFG09} show how to construct an SSPD of~$V$ with separation constant $s$ in $O(ns^{-2} + n \log n)$ time, that has $O(ns^{-2})$ pairs, and total weight $O(ns^{-2} \log n)$, where $n$ is number of vertices in~$V$.

Although not explicitly stated in~\cite{DBLP:journals/dcg/AbamBFG09}, given any two distinct points $u$ and $v$ of $V$, one can query the SSPD in $O(\log n)$ time to retrieve the pair $(A_i,B_i)$ satisfying either $(i)$ $u \in A_i$ and $v \in B_i$ or $(ii)$ $v \in A_i$ and $u \in B_i$. We provide a sketch of the query procedure. The SSPD in~\cite{DBLP:journals/dcg/AbamBFG09} is constructed using a Balanced Aspect Ratio (BAR) tree~\cite{DBLP:journals/jal/DuncanGK01}, where each node in the balanced tree has an associated weight class. Each leaf of the BAR tree is associated with a point in~$V$, and has a weight class of~$O(\log n)$. Given any two distinct points $u$ and $v$, we simultaneously traverse the BAR tree, from the root to the leaf nodes associated with $u$ and $v$. The invariant maintained by the simultaneous traversal is that the weight class along the two traversals remains the same. The semi-separated pair $(A_i, B_i)$ that we return is the pair of nodes in the BAR tree with minimum weight class that satisfies the semi-separated property $\min(\diameter(A_i), \diameter(B_i) \leq s \cdot d(A_i, B_i)$. Putting this together, we obtain the following observation.

\begin{observation}
    \label{obs:query_sspd}
    Given a pair of distinct points $u$ and $v$ of $V$, one can query the SSPD of~\cite{DBLP:journals/dcg/AbamBFG09} in $O(\log n)$ time to obtain a semi-separated pair $(A_i, B_i)$ satisfying either (i) $u \in A_i$ and $v \in B_i$ or (ii) $v \in A_i$ and $u \in B_i$.
\end{observation}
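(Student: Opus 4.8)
The plan is to open up the construction of the SSPD of~\cite{DBLP:journals/dcg/AbamBFG09} and attach to it a point-location-style query. Recall that their SSPD is built on a Balanced Aspect Ratio (BAR) tree~\cite{DBLP:journals/jal/DuncanGK01} $\mathcal{T}$ on the point set~$V$: $\mathcal{T}$ is balanced, so it has depth $O(\log n)$; each node $\nu$ is associated with a region $R_\nu$ and the subset $V_\nu = V \cap R_\nu$; the regions of the two children partition the region of the parent; and each leaf holds exactly one point of~$V$. Their construction assigns every node to one of $O(\log n)$ weight classes (determined by $|V_\nu|$ together with the depth of $\nu$), and forms the SSPD pairs $(A_i,B_i)$ only between nodes in the same weight class whose regions satisfy the semi-separation inequality $\min(\diameter(A_i),\diameter(B_i)) \le s \cdot d(A_i,B_i)$, stopping the recursion for a pair of nodes as soon as this inequality holds. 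The first step of the proof is therefore to extract from~\cite{DBLP:journals/dcg/AbamBFG09}, precisely, which pair of tree nodes is emitted as the SSPD pair covering a given pair of points.

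Given that, the query algorithm is a simultaneous top-down walk. Starting from the root, we descend $\mathcal{T}$ along the root-to-leaf path to the leaf holding~$u$ and, in lockstep, along the root-to-leaf path to the leaf holding~$v$, maintaining the invariant that the two current nodes $\nu_u \ni u$ and $\nu_v \ni v$ lie in the same weight class (when one path is about to drop to a strictly smaller weight class we hold it until the other catches up, which is possible because the weight classes are nested along any root-to-leaf path). At each step we test the semi-separation inequality on $(R_{\nu_u}, R_{\nu_v})$ in $O(1)$ time, using $\diameter$ and centre information stored at each node; we return the first pair along the walk for which the inequality holds. Since $\mathcal{T}$ has depth $O(\log n)$, the walk has $O(\log n)$ steps of $O(1)$ work each, giving the claimed $O(\log n)$ query time.

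Correctness reduces to showing that the returned pair is the unique SSPD pair separating~$u$ and~$v$. Since the collection $\{(A_i,B_i)\}$ is an SSPD, there is exactly one index $i$ with $\{u,v\}$ split between $A_i$ and $B_i$, so it suffices to show that the returned pair is some $(A_i,B_i)$. This follows from the recursion rule above: the pair of equal-weight-class ancestors of~$u$ and~$v$ at which the semi-separation inequality first becomes satisfied is exactly the pair the construction emits to cover~$u$ and~$v$ — before that point the recursion has not yet split $u$ from~$v$, and afterwards the construction does not revisit them. The main obstacle is that none of this bookkeeping is stated explicitly in~\cite{DBLP:journals/dcg/AbamBFG09}: one has to reconstruct the weight-class assignment and the emission rule from their analysis, and verify that along the simultaneous descent the relevant weight class is monotone, so that ``the first pair satisfying the inequality'' is well defined and reached by a single top-down walk. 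If that monotonicity is awkward to establish directly, a clean fallback is to store, at preprocessing time and at each BAR-tree node, a pointer to the (unique) SSPD pair that the construction emits there; the query then becomes a plain two-path descent with no inequality test, still in $O(\log n)$ time.
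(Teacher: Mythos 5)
Your proposal matches the paper's proof sketch essentially verbatim: both use the BAR tree underlying the SSPD of Abam et al., perform a simultaneous root-to-leaf descent maintaining the weight-class invariant, and return the first (minimum-weight-class) ancestor pair satisfying the semi-separation inequality. Your added discussion of why the returned pair coincides with the emitted SSPD pair, and the fallback of storing explicit pointers at BAR-tree nodes, are reasonable elaborations but do not change the approach.
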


We construct an SSPD of the vertices of~$P$ with separation constant $1/2$. For each semi-separated pair $(A_i, B_i)$ in our SSPD, we select a set of $O(c)$ vertices of~$P$ to be \emph{transit vertices}. The transit vertices have the property that any path from $A_i$ to $B_i$ must pass through a transit vertex. In Lemma~\ref{lemma:transit_vertices}, we show how to compute transit vertices.

\begin{lemma}
\label{lemma:transit_vertices}
Let $P = (V,E)$ be a $c$-packed graph and let $\{(A_i, B_i)\}_{i=1}^k$ be an SSPD of~$V$ with separation constant $1/2$. For each pair $(A_i, B_i)$ of the SSPD, one can compute a set of vertices~$C_i \subset V$ in $O(cp)$ time satisfying~$(i)$ $|C_i| \leq 2c$, and~$(ii)$ any path starting at a vertex in $A_i$ and ending at a vertex in $B_i$ must pass through a vertex in~$C_i$. 
\end{lemma}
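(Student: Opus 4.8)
The plan is to realise $C_i$ as a small vertex cut separating $A_i$ from $B_i$ in $P$, obtained by choosing a suitable circle around a point of the smaller of the two parts and collecting the graph edges that cross it. Since the cut we build is symmetric in the two parts, I would assume without loss of generality that $\diameter(A_i)\le\diameter(B_i)$, so that the SSPD condition with separation constant $1/2$ gives $\diameter(A_i)\le\tfrac12 d(A_i,B_i)$. Write $\rho:=d(A_i,B_i)$. If $\rho=0$, the separation condition forces one part, say $A_i$, to be a single vertex $a$ lying in $B_i$, and $C_i:=\{a\}$ works; so assume $\rho>0$, fix any $a\in A_i$, and let $B(a,t)$ denote the disk of radius $t$ about $a$. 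Then $A_i\subseteq B(a,\rho/2)$, while $d(a,b)\ge d(A_i,B_i)=\rho$ for every vertex $b\in B_i$, so $B_i$ is disjoint from $B(a,\rho)$.

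For $t>0$, let $N(t)$ be the number of edges of $E$ that meet the circle of radius $t$ centred at $a$. The key step is to establish the bound $\int_{\rho/2}^{\rho}N(t)\,dt\le c\rho$. To see it, fix an edge $e$ and let $\alpha_e\le\beta_e$ be the minimum and maximum of $d(a,\cdot)$ over $e$; by continuity of $d(a,\cdot)$ on the segment $e$, the edge $e$ meets the circle of radius $t$ exactly when $t\in[\alpha_e,\beta_e]$. Parametrising $e$ by arclength makes $d(a,\cdot)$ a $1$-Lipschitz function, so the image under $d(a,\cdot)$ of the portion $e\cap\bigl(B(a,\rho)\setminus B(a,\rho/2)\bigr)$ has length at most the length of that portion; but this image equals $[\alpha_e,\beta_e]\cap[\rho/2,\rho]$. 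Interchanging the finite sum over edges with the integral and using that $P$ is $c$-packed (applied to the disk $B(a,\rho)$),
\[
\int_{\rho/2}^{\rho}N(t)\,dt=\sum_{e\in E}\bigl|[\alpha_e,\beta_e]\cap[\rho/2,\rho]\bigr|\le\sum_{e\in E}\bigl|e\cap B(a,\rho)\bigr|\le c\rho .
\]
Averaging over the interval $[\rho/2,\rho]$ of length $\rho/2$, the set of radii $t$ with $N(t)\le 2c$ has positive Lebesgue measure; since only finitely many values $\beta_e$ occur, I would pick a radius $t^*\in[\rho/2,\rho)$ with $N(t^*)\le 2c$ and $t^*\notin\{\beta_e:e\in E\}$.

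I would then take $C_i$ to be the set of vertices $v$ with $d(a,v)\le t^*$ that are an endpoint of some edge meeting the circle of radius $t^*$. Its size is at most $2c$: if an edge $e$ meets this circle then $\beta_e\ge t^*$, hence $\beta_e>t^*$ by the choice of $t^*$, so the endpoint of $e$ attaining $\beta_e$ lies at distance greater than $t^*$ from $a$ and is not in $C_i$; thus each such edge contributes at most one vertex, giving $|C_i|\le N(t^*)\le 2c$. For the separating property, let $v_0,\dots,v_k$ be any path with $v_0\in A_i$ and $v_k\in B_i$. Then $d(a,v_0)\le\rho/2\le t^*$ while $d(a,v_k)\ge\rho>t^*$, so there is a largest index $j$ with $d(a,v_j)\le t^*$, and it satisfies $j<k$. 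The edge $v_jv_{j+1}\in E$ then has $\alpha_{v_jv_{j+1}}\le d(a,v_j)\le t^*<d(a,v_{j+1})\le\beta_{v_jv_{j+1}}$, so it meets the circle of radius $t^*$, with $v_j$ an endpoint at distance at most $t^*$ from $a$. Hence $v_j\in C_i$, as required.

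The running time is $O(cp)$: pick $a\in A_i$, compute the $O(p)$ pairs $(\alpha_e,\beta_e)$ in $O(p)$ time, locate a radius $t^*$ witnessing $N(t^*)\le 2c$, and read off the at most $2c$ vertices of $C_i$. The main obstacle I foresee is the integral bound in the second paragraph: one must count \emph{edges crossing a circle} rather than, say, vertices inside an annulus --- the latter is not controlled by $c$-packedness, since vertices may be arbitrarily clustered --- and then bound this count via the $1$-Lipschitz/image-length estimate combined with $c$-packedness applied to $B(a,\rho)$. Once that is in place, the averaging step, the cut argument, and the bookkeeping are all routine.
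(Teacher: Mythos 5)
Your proof is correct, but it takes a genuinely different route from the paper's. The paper frames the problem as a max-flow/min-cut computation: set every edge of $P$ to capacity $1$ with $A_i$ as sources and $B_i$ as sinks, then takes one endpoint of each min-cut edge as $C_i$. Property~(ii) is immediate from the cut, and the size bound comes from Menger: the min-cut value equals the number of edge-disjoint $A_i$--$B_i$ paths, each of which must traverse the annulus $D_2\setminus D_1$ of width $\diameter(A_i)$ around a point $a_0\in A_i$, and $c$-packedness in $D_2$ then forces at most $2c$ such paths. The $O(cp)$ time follows from running Ford--Fulkerson and aborting after $2c$ augmentations. Your argument replaces max-flow with a geometric averaging step: integrate the crossing number $N(t)$ over the annular range $[\rho/2,\rho]$, bound it by $c$-packedness via the 1-Lipschitz image-length estimate, and pick a generic good radius $t^*$; the cut is then the set of "inner" endpoints of edges crossing that circle. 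Your route is more elementary and avoids flow algorithms, and it makes the role of the annulus and the constant $2c$ transparent. The one thing to tighten is the running-time claim: you state $O(cp)$ but only describe "locate a radius $t^*$ witnessing $N(t^*)\le 2c$" without an algorithm; the obvious implementation (sort the $O(p)$ breakpoints $\alpha_e,\beta_e$ and sweep) is $O(p\log p)$, not $O(cp)$ when $c=o(\log p)$. This would not harm the paper's downstream preprocessing bound of $O(cp^2\log p)$, but it does deviate from the lemma as stated; if you want $O(cp)$ exactly you would need either to relax to $|C_i|\le 4c$ and use Markov with random sampling, or to find $t^*$ deterministically in linear time, which you have not argued.
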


\begin{proof}
First, we construct the set~$C_i$. Then we prove that~$C_i$ satisfies the required properties. Finally, we analyse the running time of our algorithm.

Without loss of generality, suppose $\diameter(A_i) \leq \diameter(B_i)$. Let $a_0$ be a vertex in $A_i$. Let~$D_1$ be a disk with centre at~$a_0$ with radius $\diameter(A_i)$, and let~$D_2$ be a disk with centre at~$a_0$ with radius $2\cdot \diameter(A_i)$. See Figure~\ref{fig:stage1_1}. All vertices of~$A_i$ are in~$D_1$. All vertices of~$B_i$ are outside~$D_2$, since $d(a_0, B_i) \geq d(A_i,B_i) \geq 2 \cdot \diameter(A_i) = \radius(D_2)$,
where the second inequality comes from the separation constant of the SSPD being $1/2$. 

\begin{figure}[ht]
	\centering
	\includegraphics{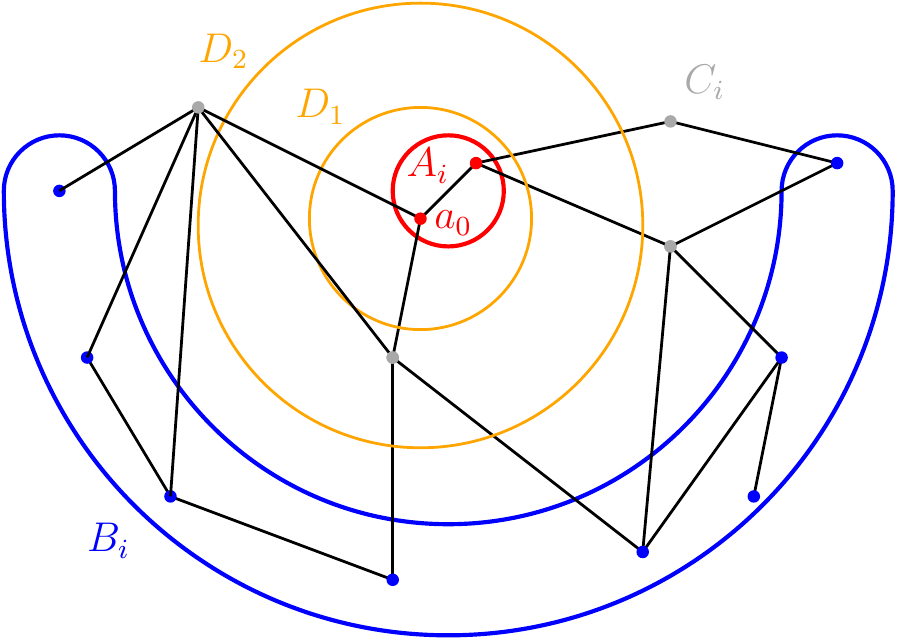}
	\caption{A semi-separated pair $A_i$ (red) and $B_i$ (blue). The circles $D_1$ and $D_2$ (orange) are centred at a vertex $a_0 \in A_i$, and have radius $\diameter(A_i)$ and $2 \cdot \diameter(A_i)$ respectively. The value of the max-flow/min-cut in the figure is $\ell = 4$, so $|C_i| = 4$ (grey).}
	\label{fig:stage1_1}
\end{figure}

Next, we set up a max-flow instance. Set the capacity of each edge of $P = (V,E)$ to 1. Set the vertices in $A_i$ to be sources, and set the vertices in $B_i$ to be sinks. The max-flow of the instance is equal to its min-cut. Let the minimum cut be a set of edges $e_1, e_2, \ldots, e_\ell$. Choose one endpoint for each edge $e_1, e_2, \ldots, e_\ell$ to form the set~$C_i$. This completes the construction of~$C_i$.

We show that our construction of~$C_i$ satisfies the properties~$(i)$ $|C_i| \leq 2c$, and~$(ii)$ any path starting at a vertex in $A_i$ and ending at a vertex in $B_i$ must pass through a vertex in~$C_i$. Property~$(ii)$ follows from $e_1, e_2, \ldots, e_\ell$ being a cut. This is because removing all the edges in the cut would disconnect the sources from the sinks, so all paths from $A_i$ to $B_i$ must pass through one of $e_1, e_2, \ldots, e_\ell$ and one of the vertices in~$C_i$. Property~$(i)$ follows from $c$-packedness. In the max-flow instance, the capacity of the max-flow is $\ell$. Since all edges have capacity~$1$, there are $\ell$ edge-disjoint paths from $A_i$ to~$B_i$. Each edge-disjoint path has one endpoint in $D_1$, and one endpoint outside $D_2$. So each path intersects both the inner and outer boundaries of the annulus $D_2 \setminus D_1$. The width of the annulus $D_2 \setminus D_1$ is equal to $\diameter(A_i)$. Therefore, there are $\ell$ edge disjoint paths in $D_2 \setminus D_1$ that each have length at least $\diameter(A_i)$. Since the graph is $c$-packed, the total length of edges in the ball $D_2$ is at most~$c$ times the radius of $D_2$, which is $2c \cdot \diameter(A_i)$. Therefore, $\ell \cdot \diameter(A_i) \leq 2c \cdot \diameter(A_i)$. Hence, $|C_i| = \ell \leq 2c$ as required.  

Finally, we analyse the running time of our algorithm, which is dominated by computing the max-flow. The running time of the Ford-Fulkerson algorithm is equal to the number of edges in~$P$ times the max-flow. Since $\ell \leq 2c$, the max-flow is $\leq 2c$. Moreover, there are $O(p)$ edges in~$P$. Therefore, the overall running time of the algorithm is $O(cp)$.
\end{proof}

The set of transit vertices for a semi-separated pair $(A_i,B_i)$ is defined to be the set~$C_i$ constructed above. Next, we define transit pairs. Given a semi-separated pair $(A_i,B_i)$, a transit pair for the semi-separated pair $(A_i,B_i)$ is a pair of vertices $(u,w)$ so that $u \in A_i \cup B_i$ and $w \in C_i$, where~$C_i$ is the set of transit vertices for $(A_i,B_i)$ defined in Lemma~\ref{lemma:transit_vertices}. Now, we bound the total number of transit vertices and pairs.

\begin{lemma}
\label{lemma:number_of_transit_vertices_and_transit_pairs}
There are $O(c p)$ transit vertices and $O(cp \log p)$ transit pairs in~$P$, over all semi-separated pairs in the SSPD. 
\end{lemma}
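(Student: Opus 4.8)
The plan is to combine the structural bounds on the SSPD of Abam~et~al.~\cite{DBLP:journals/dcg/AbamBFG09} with the per-pair bound $|C_i| \le 2c$ from Lemma~\ref{lemma:transit_vertices}, so the proof is essentially a direct counting argument. First I would record the basic parameters: the vertex set $V$ of $P$ satisfies $|V| \le p$, and the SSPD is constructed with the fixed separation constant $s = 1/2$. Plugging into the guarantees of~\cite{DBLP:journals/dcg/AbamBFG09}, the number $k$ of semi-separated pairs is $O(|V| s^{-2}) = O(p)$, and the total weight $\sum_{i=1}^k (|A_i| + |B_i|)$ is $O(|V| s^{-2} \log |V|) = O(p \log p)$, where the $s^{-2} = 4$ factor is absorbed into the asymptotic notation.

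For the transit vertices, recall that the set of transit vertices of a pair $(A_i, B_i)$ is the set $C_i$ produced by Lemma~\ref{lemma:transit_vertices}, which satisfies $|C_i| \le 2c$. Summing over all pairs, the total number of transit vertices (counted over all semi-separated pairs) is $\sum_{i=1}^k |C_i| \le 2ck = O(cp)$; note that this is also an upper bound on the number of \emph{distinct} transit vertices, so the claim holds under either reading.

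For the transit pairs, a transit pair of $(A_i, B_i)$ is by definition a pair $(u, w)$ with $u \in A_i \cup B_i$ and $w \in C_i$, so the number of transit pairs contributed by $(A_i, B_i)$ is at most $(|A_i| + |B_i|) \cdot |C_i| \le 2c\,(|A_i| + |B_i|)$. Summing over $i$ and using the total-weight bound from the first paragraph gives a total of at most $2c \sum_{i=1}^k (|A_i| + |B_i|) = 2c \cdot O(p \log p) = O(cp \log p)$ transit pairs, as claimed.

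There is no real obstacle here; the only points requiring care are that the bounds in~\cite{DBLP:journals/dcg/AbamBFG09} are phrased in terms of the number of input points $n = |V| = O(p)$ rather than the graph complexity $p$, and that the separation constant $1/2$ (and hence the $s^{-2}$ factors) is a constant that disappears into the $O(\cdot)$ notation.
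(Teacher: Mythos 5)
Your proof is correct and matches the paper's argument essentially verbatim: both count $O(c)$ transit vertices per semi-separated pair over $O(p)$ pairs, and bound the transit pairs by $2c$ times the total weight $O(p\log p)$ of the SSPD of Abam~et~al.
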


\begin{proof}
There are $O(p)$ semi-separated pairs in the SSPD in~\cite{DBLP:journals/dcg/AbamBFG09}. By Lemma~\ref{lemma:transit_vertices}, there are $O(c)$ transit vertices per semi-separated pair. Therefore, there are $O(cp)$ transit vertices in total. For a semi-separated pair $(A_i,B_i)$, let $(u,w)$ be a transit pair. There are $|A_i| + |B_i|$ choices for~$u$, and at most $2c$ choices for~$w$. Therefore, the number of transit pairs over all semi-separated pairs is at most $\sum_{i=1}^k 2c(|A_i| + |B_i|) = O(cp \log p)$, since $\sum_{i=1}^k(|A_i| + |B_i|) = O(p \log p)$ is the weight of the SSPD in~\cite{DBLP:journals/dcg/AbamBFG09}.
\end{proof}

Our next step is to precompute and store the minimum Fr\'echet distance $\frechet(\pi, uw)$ for each transit pair $(u,w)$, where $\pi$ ranges over all paths in~$P$ between~$u$ and~$w$. For this, we use a modification of the algorithm by Alt et al.~\cite{DBLP:journals/jal/AltERW03}.

\begin{lemma}
\label{lemma:straightest_path_computation}
Let $u,w \in P$ be a pair of vertices, and let $ab$ be a segment. One can compute $\min_{\pi} \frechet(\pi,ab)$ in $O(p \log p)$ time, where $\pi$ ranges over all paths in~$P$ between~$u$ and~$w$.
\end{lemma}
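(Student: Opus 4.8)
The plan is to adapt the free-space-surface sweep of Alt, Efrat, Rote and Wenk~\cite{DBLP:journals/jal/AltERW03} to the restricted problem where the matching curve is a single segment $ab$ rather than a general polygonal trajectory, and where the path in $P$ is required to start at $u$ and end at $w$. Recall that the Alt~et~al.\ algorithm builds, for a query curve of complexity $q$, a free-space surface over $P$ of total complexity $O(pq)$ and runs a sweep in time $O(pq\log p)$. Here $q=2$, so the surface has complexity $O(p)$, and a careful sweep runs in $O(p\log p)$.

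First I would recall the decision procedure for a fixed radius $r$. For each edge $e$ of $P$, the segment $ab$ together with $e$ defines a unit-height free-space cell (a convex region inside $[0,1]\times[0,1]$), so the free-space surface associated to $ab$ is a single ``strip'' glued along the graph structure of $P$: one cell per edge, with cells sharing a vertical boundary segment whenever the corresponding edges of $P$ share an endpoint. The decision question ``is there a path $\pi$ from $u$ to $w$ with $\frechet(\pi,ab)\le r$?'' becomes a reachability question in this surface: starting from the point $(a$-endpoint, vertex $u)$ on the left boundary, can one reach the point $(b$-endpoint, vertex $w)$ on the right boundary via a monotone-in-the-$ab$-parameter path that stays in the free space and is allowed to switch between edge-cells only at shared graph vertices? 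Exactly as in Alt~et~al., this is solved by sweeping a vertical line across $[0,1]$ (the parameter of $ab$) while maintaining, for every graph vertex $v$, the set of ``reachable'' sub-intervals on the fiber over $v$; the interface between two edge-cells is handled by propagating reachable intervals across the shared vertex. Since there is only one layer of cells (height one), the reachable set at each vertex is a constant number of intervals, and the sweep processes $O(p)$ combinatorial events, each in $O(\log p)$ time with an appropriate priority queue ordering the cell-boundary events; this gives the $O(p\log p)$ decision bound.

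Next I would turn the decision procedure into an optimization procedure, i.e.\ actually computing $\min_\pi \frechet(\pi,ab)$. The minimum is attained at one of the $O(p)$ ``critical values'' of $r$ familiar from the Alt--Godau framework: values where a free-space cell boundary first becomes nonempty, where two reachable intervals on a fiber first merge, or where a reachable interval first reaches a fiber endpoint corresponding to $w$. One may either (a) enumerate all $O(p)$ candidate critical values, sort them, and binary-search using the $O(p\log p)$ decision procedure, for $O(p\log^2 p)$ total, or (b) use the parametric-search / sorted-matrix technique of Alt--Godau to shave the extra log and obtain $O(p\log p)$; since Alt~et~al.\ already establish the $O(pq\log p)$ optimization bound for general $q$, specializing their argument to $q=2$ directly yields $O(p\log p)$. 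I would take route (b) and simply invoke the specialization of~\cite{DBLP:journals/jal/AltERW03}.

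The main thing to get right is not the asymptotics but two structural points. First, the path is constrained to \emph{begin at $u$ and end at $w$}: this is handled by seeding the sweep only from the fiber over $u$ at parameter $0$ of $ab$ (not from every vertex, as one would for the unconstrained map-matching problem) and accepting only if the fiber over $w$ is reached at parameter $1$. Second, one must be careful that the ``surface'' is a genuine quotient of $\bigsqcup_e ([0,1]\times[0,1])$ by identifications at graph vertices, not a planar diagram, so that revisiting edges and vertices of $P$ is permitted; this is exactly the free-space-surface construction of Alt~et~al., and the only adaptation is that here each edge contributes a single cell. I expect verifying this reduction to the known machinery to be the bulk of the work, and the time analysis to follow immediately once it is in place.
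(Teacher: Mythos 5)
Your decision procedure---sweep the $ab$-parameter with a priority queue propagating reachability across shared graph vertices, noting that each vertex carries at most one reachable interval---is exactly the Dijkstra search the paper performs over the free-space surface, and you then close with parametric search as the paper does, so this is essentially the same approach. One caveat worth making explicit: the paper deliberately reframes the sweepline as a Dijkstra search precisely because Alt et al.'s sweepline is stated for planar graphs, while the graphs to which this lemma is applied (the $c$-packed graphs of Theorem~\ref{theorem:straightestpathdatastructure}) need not be planar; so in your step~(b), rather than ``simply invoking the specialization of Alt et al.,'' you should run parametric search directly over the Dijkstra-style decision procedure you already described, which works on arbitrary graphs.
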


\begin{proof}[Proof (Sketch)]
Our proof is essentially the same as in~\cite{DBLP:journals/jal/AltERW03}, except that we replace the sweepline algorithm with a simple Dijkstra search~\cite{DBLP:journals/nm/Dijkstra59}. The fact that the endpoints~$u$ and~$w$ are given makes this simplification possible. Furthermore, by replacing the sweepline with Dijkstra, we do not require~$P$ to be planar.

For the sake of completeness, we provide a proof sketch of our result. We set up a free space diagram for the decision problem in the same way as in~\cite{DBLP:journals/jal/AltERW03}.
Let $P = (V,E)$. For each edge $e \in E$, let~$FD(e,ab)$ be the free space diagram between $e$ and $ab$. Note that the $x$- and $y$-coordinates of~$FD(e,ab)$ denote the positions along $e$ and $ab$ respectively. Moreover, orient~$FD(e,ab)$ so that~$a$ has the minimum $y$-coordinate and~$b$ has the maximum $y$-coordinate. Similarly to~\cite{DBLP:journals/jal/AltERW03}, there is a path $\pi$ between vertices~$u$ and~$w$ satisfying $\frechet(\pi,ab) \leq d$ if and only if there is a sequence of free space diagrams $\{FD(e_i,ab)\}_{i=1}^k$ with a monotone path in the free space from $(u,a)$ to $(w,b)$. See Figure~\ref{fig:stage1_2}.

\begin{figure}[ht]
	\centering
	\includegraphics{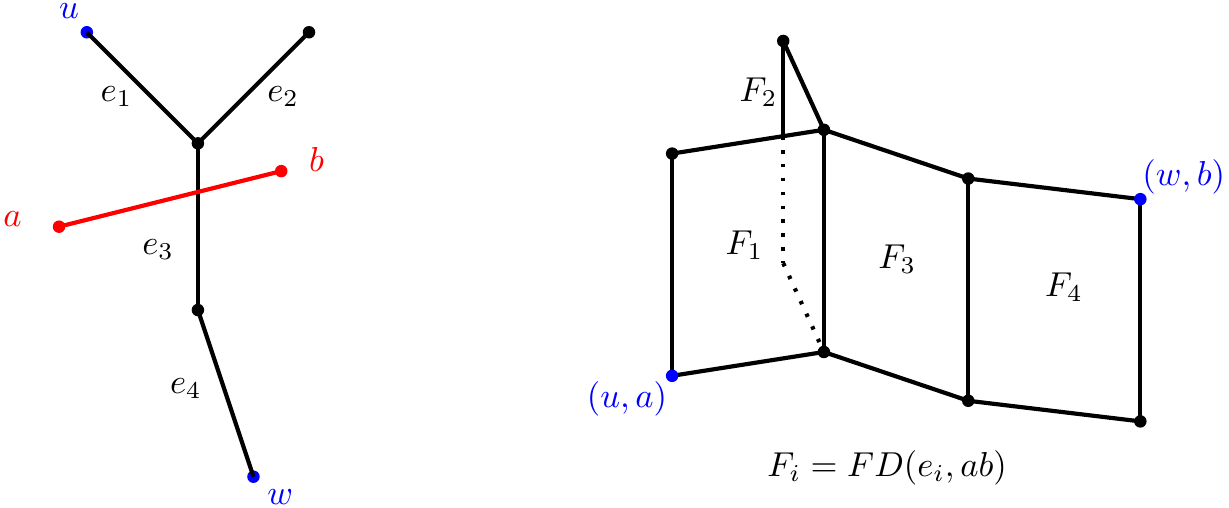}
	\caption{The $x$-coordinates of~$FD(e_i,ab)$ denote the position along $e_i$, and the $y$-coordinates denote the position along $ab$. There is a path $\pi$ between~$u$ and~$w$ with $\frechet(\pi,ab) \leq d$ if and only if there is a monotone path from $(u,a)$ to $(w,b)$ in the free space surface.}
	\label{fig:stage1_2}
\end{figure}

We avoid using a sweepline algorithm, and instead perform a Dijkstra search between~$u$ and~$w$. First, check that $(u,a)$ and $(w,b)$ are in the free space. Next, construct a priority queue on points $(x,y)$ in the free space diagram, where $x$ is a vertex of~$P$, and $y$ is any point on the segment $ab$. The priority of $(x,y)$ is $y$. The invariant maintained by the priority queue is that for all $(x,y)$ in the priority queue, there is a monotone path from $(u,a)$ to $(x,y)$. Initially, the priority queue contains only $(u,a)$. In each iteration, we pop a point from the priority queue with minimum $y$-coordinate. Let this point be $(x,y)$. We mark the point $x$ as visited. For all neighbours $x'$ of $x \in P$, add $(x',y')$ to the priority queue, where $y'$ is the minimum $y$-coordinate so that there is a monotone path from $(x,y)$ to $(x',y')$ in~$FD(xx',ab)$. This maintains the invariant of the priority queue. Halt the priority queue if $(w,b')$ is in the priority queue, which occurs if and only if there is a monotone path from $(u,a)$ to $(w,b')$ to $(w,b)$ (assuming $(w,b)$ is in free space). Finally, we apply parametric search to minimise the Fr\'echet distance in the same way as in~\cite{DBLP:journals/jal/AltERW03}.

We analyse the running time. Constructing the free space diagrams takes $O(p)$ time. Running Dijkstra's algorithm takes $O(|E| + |V| \log |V|) = O(p \log p)$ time. Finally, applying parametric search~\cite{DBLP:journals/jacm/Megiddo83} with Cole's optimisation~\cite{DBLP:journals/jacm/Cole87} takes $O(p \log p)$ time. 
\end{proof}

We are now ready to build the data structure for straightest path queries, which is the main result of this section.

\straightestpathdatastructure*

\begin{proof}
First we describe the preprocessing procedure. Construct an SSPD of the vertices of~$P$, with separation constant $1/2$. For each semi-separated pair $(A_i,B_i)$, let~$C_i$ be its set of transit vertices as defined in Lemma~\ref{lemma:transit_vertices}. Recall that if $u \in A_i \cup B_i$ and $w \in C_i$, then $(u,w)$ is a transit pair of $(A_i,B_i)$. For each transit pair $(u,w)$, we set $ab = uw$ in Lemma~\ref{lemma:straightest_path_computation} to compute the straightest path between~$u$ and~$w$, and we store the minimum Fr\'echet distance. 

Next, we describe the query procedure. Given a pair of query vertices~$u$ and~$v$, we query our SSPD for the semi-separated pair $(A_i, B_i)$ so~$(i)$ $u \in A_i$ and $v \in B_i$, or~$(ii)$ $v \in A_i$ and $u \in B_i$. Let~$C_i$ be the transit vertices for $(A_i,B_i)$. For each $w \in C_i$, define $\pi_{uw}$ to be the straightest path between~$u$ and~$w$, and define $D_{uw} = \frechet(\pi_{uw},uw)$. Define $\pi_{wv}$ and $D_{wv}$ analogously. Define $t$ to be the orthogonal projection of~$w$ onto $uv$ and define $D_w$ to be the orthogonal distance. See Figure~\ref{fig:stage1_3}. Finally, return $\min_{w \in C_i} (\max(D_{uw}, D_{wv}) + D_w)$ as a 3-approximation for the minimum Fr\'echet distance of the shortest path between~$u$ and~$v$.

\begin{figure}[ht]
	\centering
	\includegraphics{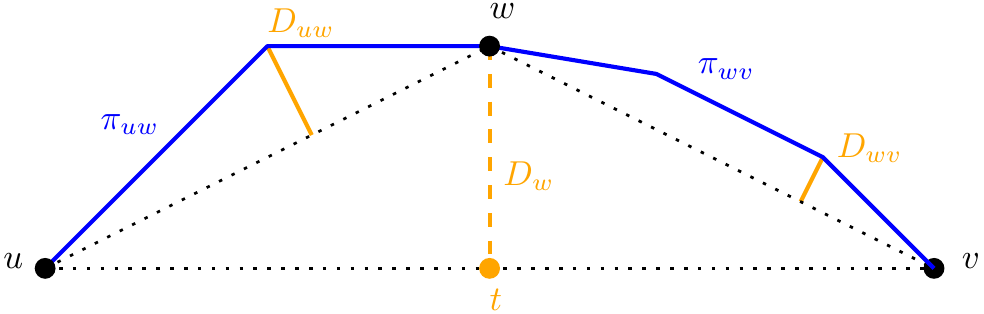}
	\caption{Vertices $u,v$ and transit vertex~$w$ (black), the straightest paths $\pi_{uw}$ and $\pi_{wv}$ (blue) with Fr\'echet distances $D_{uw}, D_{wv}$ (orange), and the orthogonal distance $D_w$ from~$w$ to $uv$ (orange, dashed).}
	\label{fig:stage1_3}
\end{figure}

We prove that the query procedure returns a 3-approximation. Our proof is inspired by the proof of Lemma~5.5 in~\cite{DBLP:journals/siamcomp/DriemelH13}. Let $\pi_{uv}$ be the straightest path between~$u$ and~$v$. Then, for any transit vertex $w \in C_i$, we have

\[
    \begin{array}{rcl}
        \frechet(\pi_{uv}, uv) 
        &\leq& \frechet(\pi_{uw}\circ \pi_{wv}, uv) \\
        &\leq& \max(\frechet(\pi_{uw}, ut), \frechet(\pi_{wv}, tv))) \\
        &\leq& \max(\frechet(\pi_{uw}, uw) + \frechet(uw,ut), \frechet(\pi_{wv}, wv) + \frechet(wv,tv)) \\
        &=& \max(D_{uw} + D_w, D_{wv} + D_w) \\
        &=& \max(D_{uw},D_{wv}) + D_w \\
    \end{array}
\]
Therefore, $\frechet(\pi_{uv}, uv) \leq \min_{w \in C_i} (\max(D_{uw}, D_{wv}) + D_w)$. Next, using Lemma~\ref{lemma:transit_vertices}, assume that $w^* \in C_i$ is a transit vertex so that $w^* \in \pi_{uv}$. Then clearly $D_{w^*} \leq \frechet(\pi_{uv}, uv)$. Next, we will use Lemma~5.3 in~\cite{DBLP:journals/siamcomp/DriemelH13}, which states that for any subcurve $Z'$ of $Z$ we have $d_F(spine(Z'),Z') \leq  2 d_F(spine(Z),Z))$, where $spine(X)$ denotes the segment joining the endpoints of curve $X$. Applying this lemma to $Z = \pi_{uv}$ and $Z' = \pi_{uw^*}$, we have $D_{uw^*} \leq 2 \cdot \frechet(\pi_{uv}, uv)$. Putting this together, we obtain $\max(D_{uw^*}, D_{w^*v}) + D_w^* \leq 3 \cdot \frechet(\pi_{uv}, uv)$. Therefore, our query procedure returns a 3-approximation of $\frechet(\pi_{uv}, uv)$, as required.

Finally, we analyse the running time and space of our preprocessing and query procedures. Constructing the SSPD takes $O(p \log p)$ time~\cite{DBLP:journals/dcg/AbamBFG09}. By Lemma~\ref{lemma:transit_vertices}, all transit vertices and transit pairs can be computed in $O(c p^2 \log p)$ time. Computing the minimum Fr\'echet distance for all transit pairs takes $O(cp^2 \log^2 p)$ time. Therefore, our data structure can be constructed in $O(cp^2 \log^2 p)$ time. Storing the Fr\'echet distance for all transit pairs requires $O(cp \log p)$ space, by Lemma~\ref{lemma:number_of_transit_vertices_and_transit_pairs}. By Observation~\ref{obs:query_sspd}, querying the SSPD for the semi-separated pair containing the query vertices takes $O(\log p)$ time. There are $O(c)$ transit vertices to check. For a transit vertex~$w$, looking up the values $D_{uw}$ and $D_{wv}$ in our data structure takes constant time. Computing $D_w$ takes constant time.  Putting this all together, we obtain the stated theorem.
\end{proof}

\section{Stage 2: Map matching segment queries}
\label{sec:map_matching_segment_queries}

Recall that a data structure for map matching segment queries is defined as follows. Given a query segment $ab$ in the plane, the data structure returns the minimum Fr\'echet distance $\frechet(\pi,ab)$ as~$\pi$ ranges over all paths in~$P$ that start and end at a vertex of~$P$. 

As stated in the technical overview, we build two data structures in this section. The first data structure in this section is an extension of Theorem~\ref{theorem:straightestpathdatastructure}, which we modify to handle arbitrary query segments in the plane. 
\begin{lemma}
\label{lemma:straightest_path_query_but_with_ab}
Given a $c$-packed graph $P$ of complexity $p$, one can construct a data structure of $O(cp \log p)$ size, so that given a pair of query vertices $u,v \in P$ and a query segment $ab$ in the plane, the data structure returns in $O(\log p)$ query time a $3$-approximation of $\min_\pi \frechet(\pi,ab)$, where $\pi$ ranges over all paths in $P$ between $u$ and $v$. The preprocessing time is $O(cp^2 \log^2 p)$.
\end{lemma}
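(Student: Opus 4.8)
The plan is to reduce a query segment $ab$ to a pair of query vertices, so that Theorem~\ref{theorem:straightestpathdatastructure}'s machinery can be reused almost verbatim. First I would build the same SSPD on the vertices of $P$ with separation constant $1/2$, compute the transit vertices $C_i$ of each semi-separated pair $(A_i,B_i)$ via Lemma~\ref{lemma:transit_vertices}, and enumerate the $O(cp\log p)$ transit pairs by Lemma~\ref{lemma:number_of_transit_vertices_and_transit_pairs}. The difference is in what we store: instead of a single number $\frechet(\pi_{uw},uw)$ per transit pair $(u,w)$, storing one scalar does not suffice, because the query segment $ab$ is arbitrary and $\frechet(\pi_{uw},ab)$ can behave very differently from $\frechet(\pi_{uw},uw)$. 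Instead, for each transit pair $(u,w)$ I would store the straightest path $\pi_{uw}$ itself (or rather, a compact description of it), so that at query time we can evaluate $\frechet(\pi_{uw},ab)$ up to the approximation factor we need. To keep the size at $O(cp\log p)$, note that the total length of all stored straightest paths, summed over all transit pairs, is $O(cp\log p)$ if each $\pi_{uw}$ uses $O(1)$ edges — but in general it need not, so this is the first place care is required.

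The cleaner route, and the one I would pursue, is to avoid storing curves explicitly and instead exploit the triangle-inequality-style decomposition already used in the proof of Theorem~\ref{theorem:straightestpathdatastructure}. Recall that there we bounded $\frechet(\pi_{uv},uv) \le \max(D_{uw},D_{wv}) + D_w$ and, conversely, using the transit vertex $w^*$ on $\pi_{uv}$ together with Lemma~5.3 of~\cite{DBLP:journals/siamcomp/DriemelH13}, obtained a $3$-approximation. For a query segment $ab$ rather than the segment $uv$, the same scheme works if I replace the orthogonal-projection quantity: given query segment $ab$ and the semi-separated pair whose transit vertices are $C_i$, for each $w \in C_i$ I want a point $t(w)$ on $ab$ so that $\frechet(\pi_{uw}\circ\pi_{wv},\, ab) \le \max\big(\frechet(\pi_{uw}, a\,t(w)),\ \frechet(\pi_{wv}, t(w)\,b)\big)$, and then bound each term by $D_{uw} + \frechet(uw, a\,t(w))$ and $D_{wv} + \frechet(wv, t(w)\,b)$ respectively. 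Crucially $\frechet(uw, a\,t(w))$ and $\frechet(wv, t(w)\,b)$ are Fréchet distances between single segments, which are determined by the endpoints alone: they equal $\max(d(u,a), d(w,t(w)))$ and $\max(d(w,t(w)), d(v,b))$. So the only new ingredient the data structure must supply per transit pair $(u,w)$ is the scalar $D_{uw} = \frechet(\pi_{uw},uw)$, exactly as in Theorem~\ref{theorem:straightestpathdatastructure} — nothing extra needs to be stored. At query time, given $u$, $v$, $ab$, I query the SSPD for the pair separating $u$ and $v$ ($O(\log p)$ time by Observation~\ref{obs:query_sspd}), iterate over the $O(c)$ transit vertices $w$, choose $t(w)$ to be the orthogonal projection of $w$ onto the line through $ab$ clamped to the segment (or, to be safe, optimise $t(w)$ over $ab$ in closed form since all relevant quantities are convex in $t$), and return $\min_{w\in C_i}\big(\max(D_{uw}, D_{wv}) + \max(d(u,a), d(v,b), d(w,t(w)))\big)$, or a slight variant thereof.

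The correctness argument then mirrors Theorem~\ref{theorem:straightestpathdatastructure} almost line for line. For the upper bound, for every $w \in C_i$ we concatenate $\pi_{uw}$ and $\pi_{wv}$ and use the standard fact that the Fréchet distance of a concatenation is at most the max of the two pieces' Fréchet distances (matching $a$ to $u$, $t(w)$ to $w$, $b$ to $v$), then apply the triangle inequality for Fréchet distance to split off the segment-to-segment terms; this gives $\frechet(\pi_{uv},ab) \le \min_{w}(\cdots)$. For the lower bound, take the transit vertex $w^*$ lying on the true straightest path $\pi_{uv}^{ab}$ between $u$ and $v$ minimising $\frechet(\cdot,ab)$ (Lemma~\ref{lemma:transit_vertices} guarantees it exists); the prefix and suffix of $\pi_{uv}^{ab}$ at $w^*$ have Fréchet distance to the corresponding subsegments of $ab$ at most $2\,\frechet(\pi_{uv}^{ab},ab)$ by Lemma~5.3 of~\cite{DBLP:journals/siamcomp/DriemelH13}, and since $D_{uw^*}, D_{w^*v}$ are the \emph{minima} over all $u$–$w^*$ (resp.\ $w^*$–$v$) paths of the Fréchet distance to the \emph{straight} segment $uw^*$ (resp.\ $w^*v$), a further application of the segment triangle inequality relates these to the subcurve-to-subsegment distances; the point $t(w^*)$ can be taken to be the image of $w^*$ under the optimal matching, so $d(w^*, t(w^*)) \le \frechet(\pi_{uv}^{ab},ab)$ and $d(u,a), d(v,b) \le \frechet(\pi_{uv}^{ab},ab)$ as well. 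Combining yields a $3$-approximation. The main obstacle I anticipate is exactly this lower-bound step: the quantities $D_{uw^*}$ are defined relative to the segment $uw^*$, not relative to a subsegment of $ab$, so one must be careful that the two triangle-inequality applications (one to pass from the $ab$-subsegment to $uw^*$, one inside the definition of $D_{uw^*}$) compose to the claimed constant $3$ and not something larger; getting the constant right may require choosing $t(w)$ more cleverly than the naive orthogonal projection, or absorbing a factor into how the returned value is formed. The preprocessing and size bounds are immediate from Theorem~\ref{theorem:straightestpathdatastructure} since the stored data is identical, and the query time is $O(\log p)$ for the SSPD lookup plus $O(c) = O(\log p)$ (absorbed) arithmetic per transit vertex.
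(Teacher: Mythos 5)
Your construction and query formula coincide exactly with the paper's: you reuse the SSPD, transit vertices, and stored scalars $D_{uw}=\frechet(\pi_{uw},uw)$ from Theorem~\ref{theorem:straightestpathdatastructure}, and your returned value $\min_{w\in C_i}\bigl(\max(D_{uw},D_{wv})+\max(d(u,a),d(v,b),d(w,t(w)))\bigr)$ with $t(w)$ the closest point of $ab$ to $w$ equals $\min_{w\in C_i}\bigl(\max(D_{uw},D_{wv})+D_w\bigr)$ with the paper's redefinition $D_w=\frechet(uw\circ wv,ab)$, since $\frechet(uw\circ wv,ab)=\max(d(u,a),d(v,b),d(w,ab))$. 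The upper-bound direction and the identification of the delicate lower-bound step are also aligned with the paper.

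There is, however, a concrete slip in your lower-bound argument. You claim that the prefix $\pi^{ab}_{uv}[u,w^*]$ has Fr\'echet distance to the corresponding subsegment $at^*$ of $ab$ at most $2\,\frechet(\pi^{ab}_{uv},ab)$ ``by Lemma~5.3 of~\cite{DBLP:journals/siamcomp/DriemelH13}.'' That lemma compares a subcurve to its \emph{own spine} $uw^*$, not to a subsegment of a different curve $ab$, so it does not give this bound; and with your factor of $2$ here, the constants compose to $4$, not $3$: $D_{uw^*}\le 2r^*+r^*=3r^*$ and then $\max(D_{uw^*},D_{w^*v})+D_{w^*}\le 3r^*+r^*=4r^*$. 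What you actually want is the immediate observation that restricting the optimal Fr\'echet matching of $\pi^{ab}_{uv}$ and $ab$ to the prefix gives $\frechet(\pi^{ab}_{uv}[u,w^*],at^*)\le r^*$ with $t^*$ the image of $w^*$ under that matching; no invocation of Lemma~5.3 is needed. Then $D_{uw^*}\le\frechet(\pi^{ab}_{uv}[u,w^*],uw^*)\le\frechet(\pi^{ab}_{uv}[u,w^*],at^*)+\frechet(at^*,uw^*)\le r^*+\max(d(a,u),d(t^*,w^*))\le 2r^*$, and since $d(w^*,ab)\le d(w^*,t^*)\le r^*$ together with $d(u,a),d(v,b)\le r^*$ give $D_{w^*}\le r^*$, the total is $\le 3r^*$ as required. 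You correctly anticipated that the constant-tracking was the weak point; replacing the misapplied Lemma~5.3 with the restriction-of-matching bound closes the gap.
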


\begin{proof}[Proof (Sketch)]
Our proof is the exactly same as the proof of Theorem~\ref{theorem:straightestpathdatastructure}, except that $(i)$ we define $D_w = \frechet(uw \circ wv, ab)$, and $(ii)$ we define $t$ to be the point on $ab$ that matches to $w$, under the minimum Fr\'echet distance matching between $ab$ and $uw \circ wv$.
\end{proof}

This completes the construction of the first data structure, however, its approximation ratio is~$3$. Next, we improve the approximation ratio to $(1+\varepsilon)$. The next lemma is inspired by Lemma~5.8 in~\cite{DBLP:journals/siamcomp/DriemelH13}, which uses $\log(1/\varepsilon)$ number of $\varepsilon$-grids to ensure that at least one grid size is within a factor of $(1+\varepsilon)$ of the true Fr\'echet distance.

\begin{lemma}
\label{lemma:exponential_grid_between_transit_pairs}
Let $u,w \in P$ be a fixed pair of vertices. Let $\varepsilon > 0$ and $\chi = \varepsilon^{-2} \log (1/\varepsilon)$. One can construct a data structure of $O(\chi^2)$ space, so that given a query segment $ab$ in the plane, the data structure returns in constant time a $(1+\varepsilon)$-approximation of $\min_{\pi} \frechet(\pi, ab)$, where $\pi$ ranges over all paths in~$P$ between~$u$ and~$w$. The preprocessing time is $O(\chi^2 p \log p)$. 
\end{lemma}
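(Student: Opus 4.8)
The plan is to collapse the continuum of possible query segments onto only $O(\chi^2)$ precomputed instances of $\min_\pi\frechet(\pi,\cdot)$, using a hierarchy of grids whose resolutions are tied to the single quantity $D^*:=\min_\pi\frechet(\pi,uw)$, which is fixed once the pair $u,w$ is fixed. First I would compute $D^*$ exactly by one call to Lemma~\ref{lemma:straightest_path_computation} with the segment $ab=uw$, in $O(p\log p)$ time. The starting point is a pair of a~priori bounds on the answer: given a query segment $ab$, set $m:=\max(d(u,a),d(w,b))$. Matching the segments $uw$ and $ab$ linearly in the parameter gives $\frechet(uw,ab)\le m$, and combining this with $\frechet(\pi,uw)\ge D^*$ for every path $\pi$, the endpoint inequality $\frechet(\pi,ab)\ge\max(d(u,a),d(w,b))=m$, and the triangle inequality for the Fr\'echet distance, one obtains
\[
  \max(m,\ D^*-m)\ \le\ \min_\pi\frechet(\pi,ab)\ \le\ D^*+m.
\]
So the true answer $r^*:=\min_\pi\frechet(\pi,ab)$ always lies within $\min(m,D^*)$ of $\max(m,D^*)$. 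This settles two regimes in $O(1)$ query time with no grid at all: if $m\le\varepsilon D^*$ then $r^*\in[(1-\varepsilon)D^*,(1+\varepsilon)D^*]$ and I return $D^*$; if $m\ge D^*/\varepsilon$ then $r^*\in[m,(1+\varepsilon)m]$ and I return $m$ (this also covers $D^*=0$).

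In the remaining regime $\varepsilon D^*<m<D^*/\varepsilon$ one has $r^*=\Theta(M)$ with $M:=\max(m,D^*)$, and $M$ lies in the $O(\varepsilon^{-2})$-factor window $[D^*,D^*/\varepsilon)$ at a position determined by $m$ and $D^*$ alone. I would split this window into buckets $B_k$ indexed by $2^{k-1}D^*\le M<2^kD^*$ for $k=0,1,\dots,O(\log(1/\varepsilon))$. For bucket $B_k$ one has $r^*\ge\tfrac12 M\ge 2^{k-2}D^*$ while $d(u,a)\le m<2^kD^*$, so I place a single axis-parallel grid of cell size $\delta_k:=\Theta(\varepsilon\,2^kD^*)$, clipped to the disk of radius $O(2^kD^*)$ around $u$ on the ``$a$-side'' and to the disk of radius $O(2^kD^*)$ around $w$ on the ``$b$-side''. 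Each clipped grid has $O(\varepsilon^{-2})$ points, so $B_k$ contributes $O(\varepsilon^{-4})$ pairs $(p,q)$ of grid points and all $O(\log(1/\varepsilon))$ buckets together contribute $O(\varepsilon^{-4}\log(1/\varepsilon))=O(\chi^2)$ pairs. For every stored pair $(p,q)$ I precompute and tabulate $\min_\pi\frechet(\pi,pq)$ by Lemma~\ref{lemma:straightest_path_computation} in $O(p\log p)$ time, giving total preprocessing time $O(\chi^2 p\log p)$ and space $O(\chi^2)$. A query $ab$ in this regime is answered by computing $m$, reading off $k$ from $m$ and $D^*$, snapping $a$ and $b$ to the nearest grid points $p,q$ of the grid of $B_k$ (which are stored, since $d(u,a),d(w,b)=O(2^kD^*)$), and returning the tabulated value $\min_\pi\frechet(\pi,pq)$, all in $O(1)$ time.

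Correctness in the grid regime follows from one triangle inequality, $\bigl|\min_\pi\frechet(\pi,ab)-\min_\pi\frechet(\pi,pq)\bigr|\le\frechet(ab,pq)\le\max(d(a,p),d(b,q))=O(\delta_k)=O(\varepsilon\,2^kD^*)=O(\varepsilon r^*)$, so the tabulated value is a $(1+O(\varepsilon))$-approximation of $r^*$; rescaling $\varepsilon$ by a suitable constant at the outset yields the stated $(1+\varepsilon)$ bound. I expect the step requiring the most care to be the simultaneous calibration of the grid resolutions $\delta_k$ and the clipping radii: one must check that for every query segment the bucket selected from $m$ and $D^*$ has a grid that is both fine enough, $\delta_k=\Theta(\varepsilon r^*)$, and wide enough to contain the snapped endpoints, while each grid still has only $O(\varepsilon^{-2})$ points. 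This is the ``$\log(1/\varepsilon)$ grids'' device of Driemel and Har-Peled~\cite{DBLP:journals/siamcomp/DriemelH13}, here specialised to the two a~priori bounds above; the remaining ingredients --- the segment-versus-segment estimate $\frechet(uw,ab)\le m$, the snapping inequality, and the $O(p\log p)$ cost of each table entry --- are routine.
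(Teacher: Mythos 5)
Your proposal is essentially correct and takes the same route as the paper: the paper's proof is a one-sentence sketch that says "do Lemma~5.8 of Driemel--Har-Peled, but with each grid-pair distance replaced by a call to Lemma~\ref{lemma:straightest_path_computation}," and your argument is precisely a worked-out reconstruction of that exponential-grid device, calibrated on the fixed quantity $D^*=\min_\pi\frechet(\pi,uw)$. The sandwich $\max(m,D^*-m)\le r^*\le D^*+m$ (coming from the endpoint bound, the segment-segment bound $\frechet(uw,ab)\le m$, and the two triangle inequalities), the two trivial regimes $m\le\varepsilon D^*$ and $m\ge D^*/\varepsilon$, and the snapping inequality $|\min_\pi\frechet(\pi,ab)-\min_\pi\frechet(\pi,pq)|\le\frechet(ab,pq)\le\max(d(a,p),d(b,q))$ are all sound, and the counting $O(\varepsilon^{-4}\log(1/\varepsilon))=O(\chi^2)$ pairs with $O(p\log p)$ per precomputed entry matches the stated bounds.

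Two small nits, neither of which affects correctness. First, the window $[D^*,D^*/\varepsilon)$ spans an $O(\varepsilon^{-1})$-factor, not $O(\varepsilon^{-2})$ as you wrote; the bucket count $O(\log(1/\varepsilon))$ is unchanged. Second, in the regime $m\le\varepsilon D^*$ you return $D^*$, but $r^*$ can exceed $D^*$; if the paper's notion of $(1+\varepsilon)$-approximation requires an over-estimate you should return $D^*+m$ (or $(1+\varepsilon)D^*$) rather than $D^*$, after which the usual rescaling of $\varepsilon$ by a constant closes the argument.
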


\begin{proof}[Proof (Sketch)]
Our proof is the same as the proof of Lemma~5.8 in~\cite{DBLP:journals/siamcomp/DriemelH13}, except that instead of computing the Fr\'echet distance between a curve and a segment joining a pair of grid points, we use Lemma~\ref{lemma:straightest_path_computation} to minimise the Fr\'echet distance over all paths between~$u$ and~$w$.
\end{proof}

Now, we use Lemma~\ref{lemma:exponential_grid_between_transit_pairs} to improve the approximation ratio of Lemma~\ref{lemma:straightest_path_query_but_with_ab} to $(1+\varepsilon)$.

\begin{lemma}
\label{lemma:map_matching_segment_query_if_u_and_v_are_known}
Let $\varepsilon >0$ and $\chi = \varepsilon^{-2} \log(1/\varepsilon)$. One can construct a data structure of $O(c \chi^2 p \log p)$ size, so that given a pair of query vertices $u,v \in P$ and a query segment $ab$ in the plane, the data structure returns in $O(\log p + c \varepsilon^{-1})$ time a $(1+\varepsilon)$-approximation of $\min_{\pi} \frechet(\pi,ab)$, where $\pi$ ranges over all paths in~$P$ between~$u$ and~$v$. The preprocessing time is $O(c \chi^2 p^2 \log^2 p)$. 
\end{lemma}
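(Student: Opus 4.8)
The plan is to reuse the SSPD/transit-pair skeleton behind Theorem~\ref{theorem:straightestpathdatastructure} and Lemma~\ref{lemma:straightest_path_query_but_with_ab}, but to replace the single stored Fr\'echet value of each transit pair by a copy of the exponential-grid structure of Lemma~\ref{lemma:exponential_grid_between_transit_pairs}, and then to answer a query by searching over a small set of candidate \emph{split points} on the query segment. For preprocessing I would build an SSPD of $V(P)$ with separation constant $1/2$ and, for every semi-separated pair $(A_i,B_i)$, its set $C_i$ of at most $2c$ transit vertices via Lemma~\ref{lemma:transit_vertices}. For each of the $O(cp\log p)$ transit pairs $(u',w')$ (Lemma~\ref{lemma:number_of_transit_vertices_and_transit_pairs}) I would build the structure of Lemma~\ref{lemma:exponential_grid_between_transit_pairs}, costing $O(\chi^2)$ space and $O(\chi^2 p\log p)$ time each, hence $O(c\chi^2 p\log p)$ space and $O(c\chi^2 p^2\log^2 p)$ time overall; the $O(cp\log p)$-space, $O(cp^2\log^2 p)$-time $3$-approximate structure of Lemma~\ref{lemma:straightest_path_query_but_with_ab} and the transit-vertex computation are dominated. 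This already yields the claimed size and preprocessing bounds.

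To answer a query on vertices $u,v$ and a segment $ab$, first query Lemma~\ref{lemma:straightest_path_query_but_with_ab} to obtain in $O(\log p)$ time a value $r_0$ with $D\le r_0\le 3D$, where $D=\min_\pi\frechet(\pi,ab)$ over $u$--$v$ paths, and query the SSPD for the pair $(A_i,B_i)$ separating $u$ and $v$, retrieving $C_i$. For each $w\in C_i$, let $G_w$ be a set of $O(1/\varepsilon)$ equally spaced points at spacing $\varepsilon r_0/6$ covering the part of segment $ab$ within distance $2r_0$ of $w$. For every $w\in C_i$ and every $t\in G_w$, query the transit-pair structures for $(u,w)$ and $(w,v)$ with the segments $at$ and $tb$, obtaining $(1+\varepsilon)$-approximations $\tilde f_{uw}(a,t)$ and $\tilde f_{wv}(t,b)$ of $\min_{\pi_1:u\to w}\frechet(\pi_1,at)$ and $\min_{\pi_2:w\to v}\frechet(\pi_2,tb)$, and return $R=\min_{w,t}\max\{\tilde f_{uw}(a,t),\tilde f_{wv}(t,b)\}$. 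The query time is $O(\log p)$ for the SSPD and the $3$-approximation, plus $O(c)\cdot O(1/\varepsilon)$ constant-time lookups, i.e.\ $O(\log p+c\varepsilon^{-1})$.

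For correctness, the lower bound $R\ge D$ holds for any $w$ and any $t\in ab$: concatenating a $u$--$w$ path realising $\min_{\pi_1}\frechet(\pi_1,at)$ with a $w$--$v$ path realising $\min_{\pi_2}\frechet(\pi_2,tb)$ gives a $u$--$v$ path whose Fr\'echet distance to $ab$ is at most $\max\{\min_{\pi_1}\frechet(\pi_1,at),\min_{\pi_2}\frechet(\pi_2,tb)\}\le\max\{\tilde f_{uw}(a,t),\tilde f_{wv}(t,b)\}$, since the monotone matching can split $ab$ at $t$. For the upper bound, by Lemma~\ref{lemma:transit_vertices} the optimal $u$--$v$ path passes through some $w^*\in C_i$; under an optimal matching to $ab$ the vertex $w^*$ maps to a point $t^*\in ab$ with $d(w^*,t^*)\le D\le r_0$, and the matching restricts to matchings of the two halves against $at^*$ and $t^*b$ of cost $\le D$. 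Taking $\hat t\in G_{w^*}$ with $d(t^*,\hat t)\le \varepsilon r_0/6\le \varepsilon D/2$ (which exists since $d(w^*,t^*)\le r_0<2r_0$), a triangle inequality for segments sharing the endpoint $a$ gives $\min_{\pi_1}\frechet(\pi_1,a\hat t)\le (1+\varepsilon)D$, and likewise for $\hat t b$, so $\max\{\tilde f_{uw^*}(a,\hat t),\tilde f_{w^*v}(\hat t,b)\}\le (1+\varepsilon)^2 D$ and hence $R\le(1+\varepsilon)^2 D$. Rescaling $\varepsilon$ by a constant yields the stated $(1+\varepsilon)$-approximation.

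The main obstacle is that the split point $t^*$ at which $w^*$ is matched to $ab$ is unknown and ranges over a continuum, so one cannot simply query each transit-pair structure once. This is precisely what the constant-factor estimate $r_0$ buys: it confines $t^*$ to a sub-segment of length $O(r_0)=O(D)$ around $w$, which can be sampled at resolution $\Theta(\varepsilon D)$ using only $O(1/\varepsilon)$ points while losing just a $(1+\varepsilon)$ factor; without such an estimate the number of candidate segments per transit vertex could not be bounded, and the query time would blow up. Everything else is routine: confirming that the aggregate space and time are dominated by the $O(cp\log p)$ copies of the Lemma~\ref{lemma:exponential_grid_between_transit_pairs} structure, and the elementary estimate $\frechet(a\hat t,at^*)\le d(\hat t,t^*)$ used above.
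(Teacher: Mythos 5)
Your proposal follows essentially the same route as the paper's proof: build the exponential-grid structure of Lemma~\ref{lemma:exponential_grid_between_transit_pairs} for each of the $O(cp\log p)$ transit pairs, use Lemma~\ref{lemma:straightest_path_query_but_with_ab} at query time to obtain a constant-factor estimate $r_0$, and then for each transit vertex $w$ sample $O(\varepsilon^{-1})$ candidate split points on the chord of $ab$ near $w$, querying the grid structures on the two half-segments $at$ and $tb$ and taking the min over $(w,t)$ of the max of the two approximations. The only cosmetic differences are the ball radius ($2r_0$ rather than $3r$) and the spacing constant, and you spell out the correctness argument (which the paper delegates to Theorem~5.9 of Driemel and Har-Peled); your accounting of space, preprocessing, and query time matches the paper's.
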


\begin{proof}[Proof (Sketch)]
Our proof is essentially the same as the proof of Theorem~5.9 in~\cite{DBLP:journals/siamcomp/DriemelH13}, except that~$(i)$ we replace subcurves with transit pairs,~$(ii)$ we replace Lemma~5.8 in~\cite{DBLP:journals/siamcomp/DriemelH13} with Lemma~\ref{lemma:exponential_grid_between_transit_pairs}, and $(iii)$ we replace Theorem~5.6 in~\cite{DBLP:journals/siamcomp/DriemelH13} with Lemma~\ref{lemma:straightest_path_query_but_with_ab}.

For the sake of completeness, we provide a proof sketch of our result. We first describe the preprocessing procedure. We construct the data structure in Lemma~\ref{lemma:straightest_path_query_but_with_ab}. For each transit pair, we construct the data structure in Lemma~\ref{lemma:exponential_grid_between_transit_pairs}. Next, we describe the query procedure. Given a pair of query vertices $(u,v)$ and a query segment $ab$, we use Lemma~\ref{lemma:straightest_path_query_but_with_ab} to compute a real value $r$ so that $\min_{\pi} \frechet(\pi,ab) \leq r \leq 3 \cdot \min_{\pi} \frechet(\pi,ab)$, where $\pi$ ranges over all paths between~$u$ and~$v$. Next, we iterate over all transit vertices $w$ associated with the semi-separated pair containing $(u,v)$. Define~$B(w,3r)$ to be a ball with radius $3r$ centred at~$w$. If $B(w,3r)$ does not intersect~$ab$, we skip the transit vertex $w$ and move onto the next one. Hence, we may assume that $B(w,3r)$ intersects~$ab$. We compute $O(\varepsilon^{-1})$ evenly spaced vertices on the chord $B(w,3r) \cap ab$. Let~$t$ be one of these vertices. See Figure~\ref{fig:stage2_1}.

\begin{figure}[ht]
	\centering
	\includegraphics{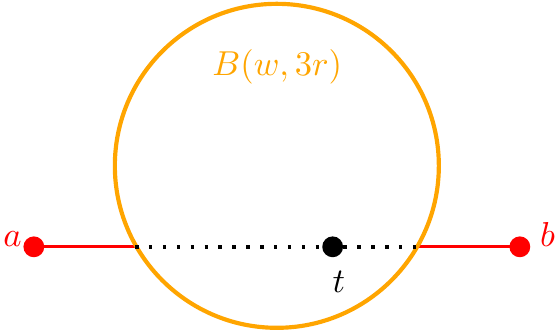}
	\caption{The $O(\varepsilon^{-1})$ evenly spaced vertices, including $t$ (black) on the chord $B(w,3r) \cap ab$ which is the intersection of segment $ab$ (red) and the ball centred at~$w$ with radius $3r$ (orange).}
	\label{fig:stage2_1}
\end{figure}

 We use Lemma~\ref{lemma:exponential_grid_between_transit_pairs} to compute a $(1+\varepsilon)$-approximation of $\min_{\pi} \frechet(\pi, at)$ as~$\pi$ ranges over all paths between~$u$ and~$w$ (resp. $tb$ and paths between~$w$ and~$v$). We take the larger Fr\'echet distance out of the $at$ and $tb$ cases, and return it as a $(1+\varepsilon)$-approximation of $\min_{\pi} \frechet(\pi, ab)$ as~$\pi$ ranges over all paths between~$u$ and~$v$ assuming $w \in \pi$ and~$w$ matches to $t$. Finally, we minimise over all transit vertices~$w$ and the evenly spaced vertices $t \in B(w,3r) \cap ab$ to obtain a $(1+\varepsilon)$-approximation of $\min_{\pi} \frechet(\pi, ab)$. The proof of correctness follows from Theorem~5.9 in~\cite{DBLP:journals/siamcomp/DriemelH13}. 

We analyse the preprocessing time and space. There are $O(cp \log p)$ transit pairs by Lemma~\ref{lemma:number_of_transit_vertices_and_transit_pairs}. By Lemma~\ref{lemma:straightest_path_query_but_with_ab} and Lemma~\ref{lemma:exponential_grid_between_transit_pairs}, the data structure has $O(c \chi^2 p \log p)$ size, and can be constructed in $O(c\chi^2 p^2 \log^2)$ preprocessing time. We analyse the query time. Computing a 3-approximation takes $O(\log p)$ query time. Iterating over all choices of~$w$ and $t$ takes $O(c \varepsilon^{-1})$ time. Putting this together yields the claimed lemma.
\end{proof}

This improves the approximation ratio of the first data structure to~$(1+\varepsilon)$. Next, we consider the second data structure, which can efficiently query the starting and ending points of the path. 

For our second data structure, we simplify the $c$-packed graph using graph clustering. The clustering algorithm we use is Gonzales' algorithm~\cite{DBLP:journals/tcs/Gonzalez85}. Let $P = (V,E)$ be the graph, which from Section~\ref{sec:preliminaries} is assumed to be connected. For a pair of vertices $u,v \in V$, let $\graphdist(u,v)$ be the shortest path between~$u$ and~$v$ in~$P$. For $k=1,\ldots,p$, we compute a $k$-centre clustering of~$V$ under the graph metric $\graphdist$. For $k=1$, choose an arbitrary vertex~$v_1$ to be the $1$-centre. Mark~$v_1$ as a cluster centre, and let~$r_1$ be the radius of the 1-centre clustering. For $k \geq 2$, compute the vertex~$v_k$ that is the furthest from all existing cluster centres~$v_1,\ldots,v_{k-1}$. Mark $v_k$ as a new  centre, and let~$r_k$ be the radius of the $k$-centre clustering. After all vertices are marked as cluster centres, we have computed a list $[(v_1,r_1),\ldots,(v_p,r_p)]$ of cluster centres and cluster radii.

We use the cluster centres and cluster radii to construct a hierarchy of simplifications of the graph~$P$. In particular, define~$V_r$ to be the set of vertices $\{v_i \in V: r_i \geq \varepsilon r\}$. We show that for any square $S$ with side length~$2r$, there are at most $O(c \varepsilon^{-1})$ vertices in $V_r \cap S$.

\begin{lemma}
\label{lemma:points_in_square_lemma}
Let $P = (V,E)$ be a $c$-packed graph and let $S$ be a square in the plane with side length~$2r$. Then there exists a set of vertices $T \subseteq V$ satisfying~$(i)$ $|T| = O(c\varepsilon^{-1})$ and~$(ii)$ for all vertices $v \in V \cap S$, there exists $t \in T$ so that $\graphdist(v,t) \leq \varepsilon r$.
\end{lemma}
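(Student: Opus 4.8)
The plan is to take $T$ to be a suitable prefix of the Gonzalez centres, restricted to those near $S$, and to read off property~(ii) from the Gonzalez clustering radii while deriving property~(i) from a packing argument against $c$-packedness. Recall the list $[(v_1,r_1),\dots,(v_p,r_p)]$ computed above, with $r_1\ge r_2\ge\cdots\ge r_p=0$. Let $m$ be the smallest index with $r_m<\varepsilon r$; this exists since $r_p=0$, and if $m\ge 2$ then $r_{m-1}\ge\varepsilon r$ (equivalently, $\{v_1,\dots,v_m\}$ is $V_r$ together with at most one further centre). I would set
\[
  T \;=\; \bigl\{\, v_i \;:\; 1\le i\le m,\ \graphdist(v_i,u)\le\varepsilon r\ \text{for some}\ u\in V\cap S \,\bigr\},
\]
that is, the first $m$ centres that lie within graph-distance $\varepsilon r$ of a vertex of $S$.

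Coverage is immediate from the clustering radii. By construction $\{v_1,\dots,v_m\}$ is an $r_m$-cover of $V$ under $\graphdist$, so for any $v\in V\cap S$ there is a centre $v_k$ with $k\le m$ satisfying $\graphdist(v,v_k)\le r_m<\varepsilon r$; this $v_k$ then lies in $T$ (witnessed by $u=v$), which gives property~(ii). (If $m=1$ this is trivial, since every vertex is within $r_1<\varepsilon r$ of $v_1$.)

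For the size bound, note first that the elements of $T$ are distinct Gonzalez centres of index at most $m$, hence pairwise at graph-distance at least $r_{m-1}\ge\varepsilon r$ by the greedy selection rule (each $v_j$ is chosen farthest from $\{v_1,\dots,v_{j-1}\}$, so $\graphdist(v_i,v_j)\ge r_{j-1}\ge r_{m-1}$ whenever $i<j\le m$). Since $\graphdist$ dominates the Euclidean distance, every element of $T$ lies within Euclidean distance $\varepsilon r$ of $S$, hence inside a disk $\hat D$ of radius $O(r)$ centred at the centre of $S$. If $|T|\le 1$ we are done; otherwise, around each $t\in T$ place the graph-metric ball $\beta_t=\{x\in P:\graphdist(t,x)\le\varepsilon r/3\}$. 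These balls are pairwise disjoint (a shared point would put two centres within graph-distance $2\varepsilon r/3<\varepsilon r$), each is contained in $\hat D$ (again because $\graphdist\ge d$ confines $\beta_t$ to Euclidean radius $\varepsilon r/3$ about $t$), and each contains total edge length at least $\varepsilon r/3$: since $|T|\ge 2$, the whole graph $P$ cannot lie inside $\beta_t$, so a shortest path in $P$ from $t$ to a vertex outside $\beta_t$ contributes a prefix of length $\varepsilon r/3$ lying in $\beta_t$. Summing these disjoint edge-length contributions and invoking $c$-packedness of $\hat D$ yields $|T|\cdot(\varepsilon r/3)\le c\cdot\radius(\hat D)=O(cr)$, hence $|T|=O(c\varepsilon^{-1})$, which is property~(i).

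The one delicate point is the spatial localisation in the size bound: a~priori a shortest path realising a graph-distance could stray far from $S$, which would defeat any packing argument. This is precisely why the packing uses graph-metric balls of radius $\varepsilon r/3$ rather than Euclidean balls --- the inequality $\graphdist\ge d$ keeps each such ball within Euclidean distance $\varepsilon r/3$ of its centre, hence inside a disk of radius $O(r)$ about $S$, so that $c$-packedness applies --- and the degenerate case in which a single ball contains all of $P$ is ruled out by the pairwise $\varepsilon r$-separation as soon as $|T|\ge 2$. Everything else is routine.
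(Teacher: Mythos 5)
Your proof is correct and takes essentially the same approach as the paper: Gonzalez clustering to get a cover, the pairwise $\ge\varepsilon r$ graph-distance separation of the early centres, and a $c$-packedness packing argument for the size bound. The only differences are cosmetic: you define $T$ by graph-distance proximity to $V\cap S$ rather than the paper's spatial intersection of $V_r\cup\{v_{i+1}\}$ with a concentric $4r$-square, and you package the packing step as pairwise-disjoint graph-metric balls each containing edge length $\ge\varepsilon r/3$, whereas the paper constructs explicit edge-disjoint shortest-path prefixes of length $\varepsilon r/3$ --- the same underlying idea either way.
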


\begin{proof}
Run the clustering algorithm described above to compute a list $[(v_1,r_1),\ldots,(v_p,r_p)]$ of cluster centres and their clustering radii. Recall that $V_r$ is the set of vertices in~$V$ satisfying~$r_i \geq \varepsilon r$. Let $S'$ be a square concentric with $S$, but has side length $4r$. Define $T_1 = V_r \cap S'$. First, we show that $T_1$ satisfies Property~$(i)$. Then we add a single vertex to $T_1$ to construct $T_2$, and show that $T_2$ satisfies both Properties~$(i)$ and~$(ii)$.

For Property~$(i)$, if there exists a vertex $t \in T_1$ so that $\graphdist(t,t') \leq \varepsilon r$ for all $t' \in V$, then defining $T_1 = \{t\}$ clearly satisfies both properties. Otherwise, for all $t \in T_1$ pick a vertex $t'$ so that $\graphdist(t,t') \geq \varepsilon r$. Construct the shortest path from~$t$ to~$t'$ under the graph metric $\graphdist$, and let $t''$ be the point (not necessarily a vertex) on the shortest path between~$t$ and~$t'$ so that $\graphdist(t,t'') = \varepsilon r /3$. Let the shortest path from~$t$ to~$t''$ be~$\pi_t$. Construct the set of paths $\{\pi_t\}_{t \in T_1}$. First, we will show that the set of paths $\{\pi_t\}_{t \in T_1}$ are edge disjoint, and all lie in a square with side length $5r$. See Figure~\ref{fig:stage2_2}. Then, we will use the $c$-packedness property in the square with side length $5r$ to prove Property~$(i)$.

\begin{figure}[ht]
	\centering
	\includegraphics{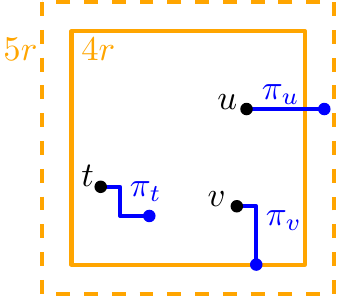}
	\caption{The vertices $T_1 = \{t,u,v\}$ (black) are in a square of side length $4r$ (orange, solid). The paths $\{\pi_t,\pi_u,\pi_v\}$ (blue) are edge disjoint and lie in a square with side length $5r$ (orange, dashed).}
	\label{fig:stage2_2}
\end{figure}

First, we show $\{\pi_t\}_{t \in T_1}$ is edge disjoint. Suppose for the sake of contradiction that $s,t \in T_1$, and $\pi_s$ and $\pi_t$ share an edge. Using this shared edge, by the triangle inequality we have $\graphdist(s,t) < \varepsilon r$. Let $s = v_i$ have cluster radius~$r_i$, and $t = v_j$ have cluster radius~$r_j$, so that $i<j$. Then we have the inequality
\[
    r_j \leq r_{j-1} = \max_{v \in P} \min_{k <j} \graphdist(v_k,v) = \min_{k<j} \graphdist(v_k,v_j) \leq \graphdist(v_i,v_j) = \graphdist(s,t) < \varepsilon r,
\]
where $\max_{v \in P} \min_{k <j} \graphdist(v_k,v) = \min_{k<j} \graphdist(v_k,v_j)$ comes from the fact that $v_j$ was the furthest vertex from all existing cluster centres in the $j^{th}$ round of Gonzales' algorithm. But $t \in V_r$, so~$r_j \geq \varepsilon r$. This is a contradiction, so $\{\pi_t\}_{t \in T_1}$ is edge disjoint. 

Now, we will use the $c$-packedness property to show that $|T_1| = O(c\varepsilon^{-1})$. Each path $\pi_t$ is a shortest path between a pair of points that are distance $\varepsilon r /3$ away, so the total path length of $\{\pi_t\}_{t \in T_1}$ is $|T_1| \cdot \varepsilon r /3$. Each vertex $t \in T_1$ is in a square with side length $4r$. Each path $\pi_t$ has length at most~$\varepsilon r < r$, since $0 < \varepsilon < 1$. Therefore, all edges in $\{\pi_t\}_{t \in T_1}$ are in a square with side length $5r$. Finally, by $c$-packedness, we have that $c \cdot 5r$ is an upper bound on the total edge length in the square of side length $5r$, which is guaranteed to contain the edges of $\{\pi_t\}_{t \in T_1}$. Therefore, $c \cdot 5r \geq |T_1| \cdot \varepsilon r /3$, which implies $|T_1| \leq 15c\varepsilon^{-1}$. This concludes the proof of Property~$(i)$.

For Property~$(ii)$, let $V_r = [v_1, \ldots, v_i]$. If $V_r$ consists of all vertices of~$V$, then $T_1$ consists of all vertices inside $S'$, and~$(ii)$ is trivially true. Otherwise, suppose $v_{i+1} \not \in V_r$. So~$r_{i+1} < \varepsilon r$. Therefore, after $i+1$ rounds of Gonzales' algorithm, the cluster radius is at most $\varepsilon r$. So all vertices $v \in P$ are within distance $\varepsilon r$ from one of the vertices $[v_1,\ldots, v_{i+1}]$. Define $T_2 = [v_1, \ldots, v_{i+1}] \cap S'$. Then $|T_2| \leq |T_1| + 1 = O(c \varepsilon^{-1})$. Moreover, for all vertices $v \in P \cap S$,~$v$ is within distance $\varepsilon r$ from one of the vertices $[v_1,\ldots, v_{i+1}]$. Without loss of generality, let $j \leq i+1$ so that $\graphdist(v,v_j) < \varepsilon r$. Since $v \in S$ and $\graphdist(v,v_j) < \varepsilon r$, we have $v_j \in S'$. Therefore, $v_j \in T_2 = [v_1,\ldots, v_{i+1}] \cap S'$. As a result, $T_2$ satisfies both Properties~$(i)$ and~$(ii)$, as required. 
\end{proof}

Next, we build a data structure so that, given any square $S$ in the plane, the data structure can efficiently return a set of vertices $T$ that satisfies the properties in Lemma~\ref{lemma:points_in_square_lemma}. 

\begin{lemma}
\label{lemma:starting_and_ending_point_data_structure}
Let $P = (V,E)$, and let $\varepsilon > 0$. One can construct a data structure of $O(p \log p)$ size, so that given a query square $S$ in the plane with side length $2r$, the data structure returns in $O(\log p + c \varepsilon^{-1})$ time a set of vertices $T$ satisfying~$(i)$ $|T| = O(c \varepsilon^{-1})$ and~$(ii)$ for all vertices $v \in V \cap S$, there exists $t \in T$ so that $\graphdist(v,t) \leq \varepsilon r$. The preprocessing time is $O(p^2 \log p)$. 
\end{lemma}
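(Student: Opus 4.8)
The plan is to precompute the Gonzales clustering once, store its radii, and reduce each query to a single orthogonal range reporting query whose answer is literally the set $T_2$ built inside the proof of Lemma~\ref{lemma:points_in_square_lemma}.

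\emph{Preprocessing.} First I would run the clustering algorithm described just before the lemma to obtain the list $[(v_1,r_1),\ldots,(v_p,r_p)]$, whose radii are non-increasing by construction. Since $|E|\le p$, one round of Gonzales' algorithm can be implemented by running a single-source Dijkstra search from the newly chosen centre and updating, for every vertex, its distance to the nearest centre; over the $p$ rounds this costs $O(p^2\log p)$ time and produces, as a by-product, the radii list already sorted in non-increasing order. I then build a static orthogonal range reporting structure over the $p$ planar points $(x_i,y_i)$, each tagged with its Gonzales index $i$, that supports queries of the form ``report every stored point lying in a given axis-parallel rectangle $R$ and having index at most $m$''. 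A two-level range tree on the $x$- and $y$-coordinates, whose secondary lists are kept sorted by Gonzales index and equipped with fractional cascading on the (query-global) index threshold, answers such a query in $O(\log p + k)$ time, uses $O(p\log p)$ space, and is built in $O(p\log p)$ time, which is dominated by the clustering step.

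\emph{Query.} Given a query square $S$ of side length $2r$, I recover $r$, binary-search the stored radii for the largest index $m$ with $r_m\ge\varepsilon r$ (taking $m=p$ if all radii are at least $\varepsilon r$), and set $m'=\min(m+1,p)$. Letting $S'$ be the square concentric with $S$ of side length $4r$, I query the range structure with rectangle $S'$ and threshold $m'$ and return the reported set $T$. By construction $T=\{v_1,\ldots,v_{m'}\}\cap S'$, and since the radii are non-increasing this is exactly $V_r\cap S'$ augmented (when $V_r\ne V$) by $v_{m+1}$ -- that is, the set $T_2$ constructed in the proof of Lemma~\ref{lemma:points_in_square_lemma}. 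Hence that lemma already shows $|T|=O(c\varepsilon^{-1})$ and that for every $v\in V\cap S$ there is $t\in T$ with $\graphdist(v,t)\le\varepsilon r$; no graph-distance information needs to be examined at query time.

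\emph{Bounds and the main difficulty.} The query cost is $O(\log p)$ for the binary search plus $O(\log p + k)$ for the range query, with $k=|T|=O(c\varepsilon^{-1})$, giving $O(\log p + c\varepsilon^{-1})$; the space is $O(p\log p)$ and the preprocessing is $O(p^2\log p)$, as claimed. The only genuinely delicate point is the range-searching step: a plain two-level range tree would spend $O(\log^2 p)$ locating the index threshold across its $O(\log p)$ secondary structures, so I must exploit that the threshold $m'$ is common to every secondary list and thread fractional cascading through it to reach $O(\log p + k)$; a secondary, more routine, point is implementing Gonzales' algorithm so that it stays within the $O(p^2\log p)$ budget.
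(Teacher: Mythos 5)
Your approach is the same as the paper's: run Gonzales' $k$-centre clustering under $\graphdist$ to get the sorted radii $r_1\ge\cdots\ge r_p$, store each centre as a point tagged with its radius (equivalently, its index), and answer a query by an orthogonal range search on a rectangle together with a one-sided radius/index condition, after which Lemma~\ref{lemma:points_in_square_lemma} supplies the correctness of the returned set. You are also more careful than the paper's proof body in one respect: you query with the enlarged square $S'$ of side $4r$, which is what the proof of Lemma~\ref{lemma:points_in_square_lemma} actually requires for Property~(ii); the paper queries with $S$, which would not in general contain the guaranteed nearby centre $v_j$.

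The gap is in the data-structure realisation. Your query is a $5$-sided three-dimensional range: $x\in[x_1,x_2]$, $y\in[y_1,y_2]$, index $\le m'$. A ``two-level range tree on $x$ and $y$ whose secondary lists are sorted by Gonzales index'' cannot simultaneously provide the $y$-range decomposition and keep the canonical lists sorted by index; if you add a third level keyed on index you are back to $O(p\log^2 p)$ space. With only the first two levels and priority search trees on $(y,\text{index})$ at each $x$-node you get $O(p\log p)$ space but $O(\log^2 p+k)$ query, because the $O(\log p)$ canonical $x$-nodes each require an adaptive $3$-sided PST query on which fractional cascading does not directly operate -- fractional cascading propagates a single binary-search position, not the traversal of a priority search tree. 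So your claimed $O(p\log p)$ space together with $O(\log p+k)$ query is not established by the construction you describe. The paper sidesteps this by invoking the general three-dimensional orthogonal range reporting structure of Afshani, Arge and Larsen, which gives $O(\log p+k)$ query at the price of $O(p\log^2 p)$ space (and indeed the paper's own proof states $O(p\log^2 p)$ space, in tension with the $O(p\log p)$ in the lemma statement and with the $O(p\log^2 p)$ that then reappears in Lemma~\ref{lemma:starting_and_ending_points_data_structure_version_2} and Theorem~\ref{theorem:main}). If you want the tighter space bound, you need to either cite a structure for $5$-sided three-dimensional range reporting achieving $O(n\log n)$ space and $O(\log n+k)$ query, or accept the extra $\log$ in either space or query time; the vague fractional-cascading argument as written does not close this.
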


\begin{proof}
We show how to efficiently query the set $T_2$ given in Lemma~\ref{lemma:points_in_square_lemma}. We run the clustering algorithm described in this section to compute a list $[(v_1, r_1), \ldots, (v_p, r_p)]$ of cluster centres and their clustering radii. We build an orthogonal range searching data structure for three-dimensional points. Specifically, for each pair $(v_i,r_i)$ in the list, we insert the point $(x_i, y_i, r_i)$ into the orthogonal range searching data structure, where $(x_i,y_i)$ are the $x$- and $y$-coordinates of the point $v_i$, respectively. Given a square $S$, we perform an orthogonal range search for all vertices $(v_i,r_i)$ so that $v_i \in S$, and~$r_i \geq \varepsilon r$. We return this set of vertices as $T_2$. Lemma~\ref{lemma:points_in_square_lemma} proves that $T_2$ satisfies Properties~$(i)$ and~$(ii)$. 

Next, we analyse the running time and space of the preprocessing and query procedures. We use the orthogonal range searching data structure of Afshani et~al.~\cite{DBLP:conf/compgeom/AfshaniAL10}, and we use the clustering algorithm of Gonzales~\cite{DBLP:journals/tcs/Gonzalez85}.

The storage requirement of the orthogonal range searching data structure is $O(p \log^2 p)$. Computing the distance matrix under $\graphdist$ takes $O(p^2 \log p)$ time. Performing Gonzales' clustering algorithm to compute~$p$ centres takes $O(p^2)$ time. Constructing the orthogonal range searching data structure takes $O(p \log^2 p)$ time. Therefore, the overall preprocessing time is $O(p^2 \log p)$. The query time of the orthogonal range searching data structure is $O(\log p + |T|)$. Therefore, the overall query time is $O(\log p + c \varepsilon^{-1})$. This proves the stated lemma.
\end{proof}

Finally, we are ready to combine the two data structures in Lemmas~\ref{lemma:map_matching_segment_query_if_u_and_v_are_known} and~\ref{lemma:starting_and_ending_point_data_structure} to answer map matching segment queries. The theorem below is the main result of this section.

\mapmatchingsegmentqueries*

\begin{proof}
The preprocessing procedure is to construct the data structure in Lemmas~\ref{lemma:map_matching_segment_query_if_u_and_v_are_known} and~\ref{lemma:starting_and_ending_point_data_structure}. For both data structures, we use the parameter $\varepsilon' = \varepsilon/6$ instead of~$\varepsilon$. 

Next, we consider the query procedure for the decision problem. Given a segment~$ab$ in the plane and a Fr\'echet distance of~$r$, the decision problem is to decide whether $r^* \leq r$ or $r^* \geq r$, where $r^* = \min_{\pi} \frechet(\pi,ab)$ where $\pi$ ranges over all paths in~$P$ that start and end at a vertex of~$P$. We construct a disk centred at~$a$ with radius~$r$, and we enclose this disk in a square with side length~$2r$. We query the data structure in Lemma~\ref{lemma:starting_and_ending_point_data_structure} to obtain a set of vertices $T_a$. We query the set of vertices $T_b$ analogously. For every $(u,v) \in T_a \times T_b$, we query the data structure in Lemma~\ref{lemma:map_matching_segment_query_if_u_and_v_are_known} for a value~$r_{uv}$ which is a $(1+\varepsilon')$-approximation of $\min_{\pi} \frechet(\pi,ab)$, where $\pi$ ranges over all paths between~$u$ and~$v$. See Figure~\ref{fig:stage2_3}.

\begin{figure}[ht]
	\centering
	\includegraphics{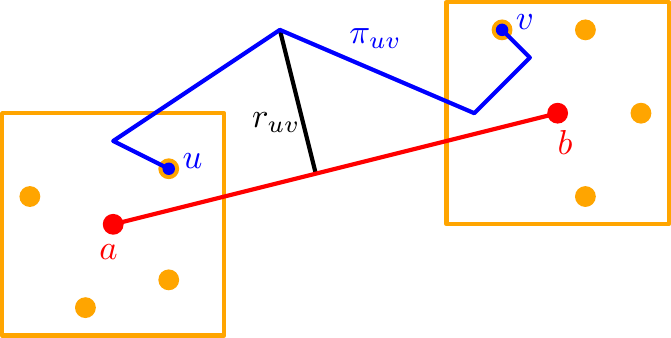}
	\caption{The query segment $ab$ (red), the squares centred at~$a$ and~$b$ with side length $2r$ (orange), the sets of vertices $T_a$ and $T_b$ including~$u$ and~$v$ (orange), and the path $\pi_{uv}$ (blue) between~$u$ and~$v$ minimising the Fr\'echet distance~$r_{uv}$ (black) up to a factor of $(1+\varepsilon)$.}
	\label{fig:stage2_3}
\end{figure}

Let $r' = \min_{(u,v) \in T_a \times T_b} r_{uv}$.  We distinguish three cases (a), (b) and (c):

\begin{enumerate}[label=(\alph*), noitemsep]
    \item If $r' \leq r$, we return that $r^* \leq r$.
    \item If $r' \geq (1+\varepsilon')^2 r$, we return that $r^* \geq r$.
    \item If $r' \in [r, (1+\varepsilon')^2 r]$, we return that $r^* \in [(1-\varepsilon') r, (1+\varepsilon')^2 r]$. 
\end{enumerate}
The third case does not technically answer the decision problem, as it does not return $r^* \leq r$ or $r^* \geq r$. However, in this case, we will show that $(1+\varepsilon')r$ is a $(1+\varepsilon)$-approximation of $r^*$, as required by the stated theorem. This completes the description of the query procedure for the decision problem. Next, we prove its correctness, which we separate into cases (a), (b) and (c). 
\begin{enumerate}[label=(\alph*), noitemsep]
    \item We know $r^* \leq r_{uv}$ for all vertices $u,v$ in~$P$. Therefore, $r^* \leq r'$. So if $r'\leq r$, then $r^* \leq r$.
    \item Given a pair of vertices $u, v$ in~$P$, define $r^*_{uv}$ to be $\min_{\pi} \frechet(\pi,ab)$ where $\pi$ ranges over all paths between~$u$ and~$v$. Clearly, $r^* \leq r^*_{uv}$ for all vertices~$u,v$ in~$P$. Moreover, by Lemma~\ref{lemma:map_matching_segment_query_if_u_and_v_are_known}, since~$r_{uv}$ is a $(1+\varepsilon')$-approximation, we have $r^* \leq r^*_{uv} \leq r_{uv} \leq (1+\varepsilon') r_{uv}^*$. Let $\pi^*$ be the path that attains~$r^*$, i.e. $r^* = \frechet(\pi^*,ab)$. Let the starting and ending points of $\pi^*$ be $u^*$ and $v^*$ respectively. By Lemma~\ref{lemma:points_in_square_lemma}, there exists a graph vertex $u \in T_a$ so that $\graphdist(u^*,u) \leq \varepsilon' r$. Define the vertex $v \in T_b$ analogously. See Figure~\ref{fig:stage2_4}. Consider the path $\pi_{uv}^*$ obtained by concatenating the paths $uu^*$, $\pi^*$, and $v^*v$. Note that $\pi_{uv}^*$ is a valid path between~$u$ and~$v$, moreover, $\frechet(\pi_{uv}^*,ab) \leq \max(\frechet(uu^*, a), \frechet(\pi^*,ab), \frechet(v^*v,b))$. But $\frechet(u^*,a) \leq \frechet(\pi^*,ab) = r^*$, and $\graphdist(u^*,u) \leq \varepsilon' r$. Hence, $\frechet(uu^*, a) \leq r^* + \varepsilon' r$. Similarly, $\frechet(v^*v, b) \leq r^* + \varepsilon' r$, so $\frechet(\pi_{uv}^*, ab) \leq r^* + \varepsilon' r$. Therefore,~$r_{uv}^* \leq \frechet(\pi_{uv}^*, ab) \leq r^* + \varepsilon' r$. Now, $r' \leq r_{uv} \leq (1+\varepsilon')r^*_{uv} \leq (1+\varepsilon')(r^* + \varepsilon' r)$. If $r' \geq (1+\varepsilon')^2 r$, then $(1+\varepsilon')(r^* + \varepsilon' r) \geq (1+\varepsilon')^2 r$, so $r^* + \varepsilon' r \geq r + \varepsilon' r$, and therefore $r^* \geq r$. 
    
    \item From the proof of the first case, we have $r^* \leq r'$. If $r' \leq (1+\varepsilon')^2 r$, then $r^* \leq (1+\varepsilon')^2 r$. From the proof of the second case, we have $r' \leq r^* + \varepsilon' r$. If $r' \geq r$, then $r^* \geq (1-\varepsilon')r$. Putting this together, if $r' \in [r, (1+\varepsilon')^2 r]$, then $r^* \in [(1-\varepsilon') r, (1+\varepsilon')^2 r]$. In particular, $(1+\varepsilon')^2 r \geq r^*$, and $(1+\varepsilon')^2 r \leq (1+\varepsilon')^3 r^* \leq (1+\varepsilon')(1+3 \varepsilon') r^* \leq (1+6\varepsilon') r^* = (1+\varepsilon) r^*$, so $(1+\varepsilon') r$ is a $(1+\varepsilon)$-approximation of $r^*$, as required.
\end{enumerate}

    \begin{figure}[ht]
    	\centering
    	\includegraphics{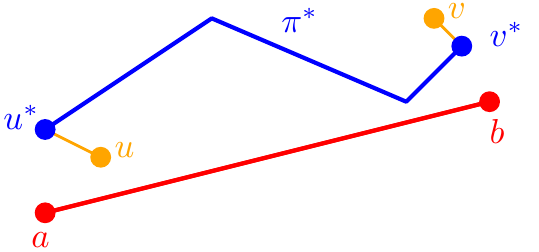}
    	\caption{Given segment $ab$ (red), the path $\pi^*$ (blue) minimises its Fr\'echet distance to $ab$. We define a new path that concatenates vertex~$u$, path $\pi^*$, and vertex~$v$, where $u \in T_a$ and $v \in T_b$.}
    	\label{fig:stage2_4}
    \end{figure}

Next, we apply parametric search to the decision problem, which we call $D(r)$. Define~$D(r)$ to be TRUE if $r' \leq r$, and define~$D(r)$ to be FALSE if $r' \geq (1+\varepsilon')r$. If $r' \in [r, (1+\varepsilon')r]$, we immediately halt the parametric search, and return $(1+\varepsilon')r$ as a $(1+\varepsilon)$-approximation of $r^*$. It suffices to show~$(i)$ that $D(r)$ is monotone, and~$(ii)$ that all operations in $D(r)$ are either independent of $r$, or can be made equivalent to a constant number of comparisons $\{r > c_i\}$ where~$c_i$ is a critical value. First, we show~$(i)$. Suppose $D(r_1)$ evaluates to TRUE, and~$r_1 < r_2$. Then $r^* \leq r_1 \leq r_2$, and we cannot have $D(r_2)$ evaluating to FALSE. So either~$D(r_2)$ is also TRUE, or we halt the parametric search and obtain a $(1+\varepsilon)$-approximation. Similarly, if $D(r_1)$ evaluates to FALSE, and~$r_1 > r_2$, then we cannot have $D(r_2)$ evaluating to TRUE. Therefore, $D(r)$ is monotone. Next, we show~$(ii)$. The first step of $D(r)$ is to query the data structure in Lemma~\ref{lemma:starting_and_ending_point_data_structure} for the set $T_a$. The data structure is a three-dimensional orthogonal range searching data structure. All operations that depend on~$r$ can be evaluated by comparing $r$ to a difference between $x$-, $y$- or $z$-coordinates. As an example, if $a = (x_a,y_a)$, then a point $(x_i, y_i, r_i)$ lies in the orthogonal range if and only if $|x_a - x_i| \leq r$, $|y_a - y_i| \leq r$ and~$r_i \geq \varepsilon' r$. In particular, $|x_a - x_i|$, $|y_a - y_i|$ and~$r_i/\varepsilon'$ are critical values. The next step is to query Lemma~\ref{lemma:map_matching_segment_query_if_u_and_v_are_known} to obtain~$r'$. This step is independent of~$r$ and generates no critical values. Finally, we compare $r'$ with $r$ and $(1+\varepsilon')r$. Therefore, $r'$ and~$r'/(1+\varepsilon')$ are critical values. This completes the proof of~$(i)$ and~$(ii)$.

We analyse the construction time and space of the data structure. Let $\chi = \varepsilon^{-2} \log(1/\varepsilon)$. The data structure in Lemma~\ref{lemma:map_matching_segment_query_if_u_and_v_are_known} has $O(c \chi^2 p \log p)$ size, whereas the data structure in Lemma~\ref{lemma:starting_and_ending_point_data_structure} has $O(p \log p)$ size. The data structure in Lemma~\ref{lemma:map_matching_segment_query_if_u_and_v_are_known} requires $O(c\chi^2 p^2 \log^2 p)$ preprocessing time, whereas the data structure in Lemma~\ref{lemma:starting_and_ending_point_data_structure} requires $O(p^2 \log p)$ preprocessing time. Therefore, the overall data structure has $O(c \chi^2 p \log p)$ size, and requires $O(c\chi^2 p^2 \log^2 p)$ preprocessing time.

We analyse the query time of the decision problem. Querying the data structure in Lemma~\ref{lemma:starting_and_ending_point_data_structure} takes $O(\log p + c\varepsilon^{-1})$ time. There are $O(c^2 \varepsilon^{-2})$ pairs $(u,v) \in T_a \times T_b$. For each pair $(u,v)$, querying the data structure in Lemma~\ref{lemma:map_matching_segment_query_if_u_and_v_are_known} takes $O(\log p)$ time. Therefore, the overall query time of the decision version is $O(c^2 \varepsilon^{-2} \log p)$. We analyse the running time of parametric search. The decision version forms both the sequential and parallel algorithms. The running time of parametric search~\cite{DBLP:journals/jacm/Megiddo83} is $O(N_p T_p + T_p T_s \log N_p)$, where $T_s$ is the running time of the sequential algorithm, $N_p$ is the number of processors for the parallel algorithm, and $T_p$ is the number of parallel steps in the parallel algorithm. By setting $N_p=1$, we obtain $T_s = T_p = O(c^2 \varepsilon^{-2} \log p)$. Therefore, the overall running time of the parametric search step is $O(c^4 \varepsilon^{-4} \log^2 p)$. 
\end{proof}

\section{Stage 3: Map matching queries}
\label{sec:map_matching_queries}

We start by considering the decision problem of the map matching query, in which we are given a trajectory~$Q$ in the plane and a Fr\'echet distance~$r$, and we are to decide whether $r \leq r^*$ or $r \geq r^*$, where $r^*$ is the minimum value of $\frechet(\pi,Q)$ where $\pi$ ranges over all paths in~$P$ that start and end at a vertex of~$P$. Let~$Q$ have vertices $a_1, \ldots, a_q$. The first step is to compute a constant number of points on~$P$ that can match to $a_i$. We have two cases, either the point matching to $a_i$ is a vertex of~$P$, or it is a point along an edge of~$P$. The points along the edges of~$P$ are defined as follows.

\begin{definition}
Given a graph $P = (V,E)$ embedded in the Euclidean plane, define the set~$F$ to be points that lie on an edge of $E$. Formally, $F = \{f \in \mathbb R^2: f \in e, e \in E\}$. 
\end{definition}

The trajectory vertex $a_i$ must match to a point in~$V$ or~$F$. If the point is in~$V$, we use Lemma~\ref{lemma:points_in_square_lemma} to compute a set of $O(c \varepsilon^{-1})$ vertices that can match to~$a_i$. If the point is in~$F$, we prove a generalisation of Lemma~\ref{lemma:points_in_square_lemma} to compute a set of~$O(c \varepsilon^{-2})$ points that can match to $a_i$. The generalisation is stated below.

\begin{lemma}
\label{lemma:points_in_square_lemma_version_2}
Let $P = (V,E)$ and let $F = \{f \in \mathbb R^2: f \in e, e \in E\}$. Let $S$ be a square in the plane with side length $2r$. Then there exists a set of points $T \subset F$ satisfying~$(i)$ $|T| = O(c \varepsilon^{-2})$ and~$(ii)$ for all points $f \in F \cap S$, there exists $t \in T$ so that $\graphdist(f,t) \leq \varepsilon r$.
\end{lemma}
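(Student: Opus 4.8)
The plan is to reduce Lemma~\ref{lemma:points_in_square_lemma_version_2} to the vertex version, Lemma~\ref{lemma:points_in_square_lemma}, by covering each edge that passes near $S$ with evenly spaced sample points. First I would apply Lemma~\ref{lemma:points_in_square_lemma} with the square $S'$ concentric with $S$ but of side length, say, $4r$, obtaining a vertex set $T_V$ with $|T_V| = O(c\varepsilon^{-1})$ such that every vertex of $V \cap S'$ is within graph distance $\varepsilon r/2$ of some element of $T_V$ (rescaling $\varepsilon$ by a constant as needed). The point set $T$ will consist of $T_V$ together with, for each edge $e \in E$ that is ``relevant'' — meaning $e$ intersects $S$ — a set of $O(\varepsilon^{-1})$ points placed along $e$ at spacing $\varepsilon r/2$. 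For each relevant edge we additionally want its endpoints to be handled, which is why we took $S'$ rather than $S$: an endpoint of a relevant edge lies within distance $\mathrm{length}(e)$ of $S$, and we only need to worry about short relevant edges for the endpoint bound, while long relevant edges are few by $c$-packedness.

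The key steps, in order, are: (1) bound the number of relevant edges of length at least $\varepsilon r$ that meet $S$ — by $c$-packedness, the total length of edges inside a ball of radius $O(r)$ containing $S$ is $O(cr)$, so there are $O(c\varepsilon^{-1})$ such edges, and sampling each at spacing $\varepsilon r /2$ contributes $O(c\varepsilon^{-2})$ points; (2) handle relevant edges of length less than $\varepsilon r$ — any point $f$ on such an edge is within Euclidean (hence graph) distance $\varepsilon r$ of an endpoint $v$ of $e$, and $v \in S'$ since $f \in S$ and $|fv| < \varepsilon r$, so $v$ is within graph distance $\varepsilon r/2$ of $T_V$, giving $\graphdist(f, t) \le \varepsilon r$ by the triangle inequality for a suitable $t \in T_V \subseteq T$; (3) for a point $f$ on a long relevant edge $e$, the sample point $t$ on $e$ closest to $f$ along $e$ satisfies $\graphdist(f,t) \le |ft| \le \varepsilon r/2 \le \varepsilon r$; (4) combine the bounds: $|T| \le |T_V| + O(c\varepsilon^{-1}) \cdot O(\varepsilon^{-1}) = O(c\varepsilon^{-2})$, and Property~$(ii)$ holds in all cases. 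I would also note that $T \subset F$ as required, since the sample points lie on edges and the vertices of $T_V$ are endpoints of edges, hence in $F$ (using connectedness of $P$ so every vertex has an incident edge).

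The main obstacle I anticipate is the bookkeeping around which square to use at each stage and getting the constant factors in $\varepsilon$ to line up: we need the enlarged square $S'$ large enough that endpoints of short relevant edges are captured by Lemma~\ref{lemma:points_in_square_lemma}, and the $c$-packedness ball large enough to contain all relevant edges (a relevant edge meeting $S$ could stick out, but only the portion inside a ball of radius $O(r)$ matters for counting those of length $\ge \varepsilon r$, and a long edge still has length $\ge \varepsilon r$ of it available; more carefully, an edge of length $\ge \varepsilon r$ that intersects $S$ has a sub-segment of length $\min(\varepsilon r, \text{something})$ inside the ball of radius, say, $3r$ about the center of $S$, so the count $O(c\varepsilon^{-1})$ is valid). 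A secondary subtlety is that $\graphdist$ on points of $F$ means distance measured within the graph; for a point $t$ interior to an edge $e$ we treat $e$ as subdivided at $t$, so $\graphdist(f,t) \le |ft|$ whenever $f, t$ lie on the same edge is immediate, and otherwise we route through vertices as in step~(2). These are all routine once the geometry is set up; none of it requires new ideas beyond Lemma~\ref{lemma:points_in_square_lemma} plus a packing argument.
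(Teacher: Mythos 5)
Your proposal is correct and follows essentially the same route as the paper: cover the vertices via Lemma~\ref{lemma:points_in_square_lemma}, sample evenly spaced points along the long edges near $S$, handle points on short edges by routing through a nearby endpoint, and bound the number of long edges by a $c$-packedness argument on a slightly enlarged region. One small point of care, which the paper handles by taking the chord $e \cap S'$: you should sample only the portion of each long edge inside an $O(r)$-neighbourhood of $S$ (not the whole edge), otherwise a single very long edge would contribute more than $O(\varepsilon^{-1})$ points; your later remarks make clear this is what you intend, and your use of the enlarged square $S'$ to capture endpoints of short edges is in fact slightly tighter bookkeeping than the paper's.
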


\begin{proof}
We use Lemma~\ref{lemma:points_in_square_lemma} to construct a set of graph vertices $T_2$ so that $|T_2| = O(c \varepsilon^{-1})$, and for all vertices $v \in V \cap S$, there exists $t_2 \in T_2$ so that $\graphdist(v,t) \leq \varepsilon r/2$. Let $E_r$ be the set of edges with length at least $\varepsilon r/2$. Let $S'$ be a square that is concentric with~$S$, but has side length $4r$. For each $e \in E_r$, choose $O(\varepsilon^{-1})$ evenly spaced points on the chord $e \cap S'$, so that the distance between consecutive points is at most $\varepsilon r /2$. Add these $O(\varepsilon^{-1})$ points to the set $T_3$, for each $e \in E_r$. We will show that the set $T_2 \cup T_3$ satisfies both Properties~$(i)$ and~$(ii)$.

We first prove Property~$(i)$. By Lemma~\ref{lemma:points_in_square_lemma}, $|T_2| = O(c \varepsilon^{-1})$. Then for $e \in E_r$, the length of the edge $e \cap S'$ is at least $\varepsilon r /2$. By the $c$-packedness property on $S'$, we have $c \cdot 4r \geq |E_r| \cdot \varepsilon r / 2$. Therefore, $|E_r| = O(c \varepsilon^{-1})$. The set~$T_3$ consists of $O(\varepsilon^{-1})$ points per edge in $|E_r|$, so $|T_3| = O(c \varepsilon^{-2})$. This completes the proof of Property~$(i)$. 

Next we prove Property~$(ii)$. Let $f \in F \cap S$. We have three cases, either~$f$ is a graph vertex,~$f$ is on an edge with length $\leq \varepsilon r /2$, or~$f$ is on an edge with length $\geq \varepsilon r/2$. If~$f$ is a graph vertex, then Lemma~\ref{lemma:points_in_square_lemma} implies that there exists $t \in T_2$ so that $\graphdist(f,t) \leq \varepsilon r/2$. If~$f$ is on an edge with length $\leq \varepsilon r /2$, let~$v$ be one of the endpoints of the edge. There exists $t \in T_2$ so that $\graphdist(v,t) \leq \varepsilon r/2$. Therefore, $\graphdist(f,t) \leq \graphdist(f,v) + \graphdist(v,t) \leq \varepsilon r$. Finally, if~$f$ is on an edge with length $\geq \varepsilon r/2$, then let the edge be~$e$. There exists $O(\varepsilon^{-1})$ evenly spaced points on the chord $e \cap S'$ in $T_3$. Since the distance between consecutive points is $\leq \varepsilon r/2$, there exists a point $t_3 \in T_3$ so that $\graphdist(f,t_3) \leq \varepsilon r/2$. This completes the proof of Property~$(ii)$ and we are done.
\end{proof}

Our next step is to build a data structure analogous to Lemma~\ref{lemma:starting_and_ending_point_data_structure}, but for computing points that $a_i$ can match to. To build this data structure, we first construct a three-dimensional low-density environment.

\begin{definition}
\label{definition:low_density}
A set of objects in $\mathbb R^3$ is $k$-low-density if, for every axis-parallel cube $H_r$ with side length $r$, there are at most $k$ objects that satisfy~$(i)$ the object intersects $H_r$, and~$(ii)$ the object has size at least $r$. The size of an object is the side length of the smallest axis-parallel cube that encloses the object.
\end{definition}

In our three dimensional space, the point $(x,y,z) \in \mathbb R^3$ represents a disc~$D(x,y,z) \subset \mathbb R^2$ with centre $(x,y)$ and radius~$z$. To compute a set of points satisfying the properties of Lemma~\ref{lemma:points_in_square_lemma_version_2}, we need be able to find all edges $e$ that intersect $D(x,y,z)$ and satisfy $|e| \geq \varepsilon z/2$. This motivates Definition~\ref{definition:trough}, where for each every edge $e$, we construct a three dimensional object that captures all discs $D(x,y,z)$ where~$e$ would intersect $D(x,y,z)$ and $|e| \geq \varepsilon z /2$. 

\begin{definition}
\label{definition:trough}
Given a segment $e \subset \mathbb R^2$ and $\varepsilon \in \mathbb R^+$, we define $\trough(e,\varepsilon) \subset \mathbb R^3$ to be

\[
    \trough(e,\varepsilon) = \{(x,y,z): d((x,y),e) \leq 4 z \leq 8 \varepsilon^{-1} |e|\}
\]
\end{definition}

The three-dimensional object $\trough(e,\varepsilon) \subset \mathbb R^3$ is the union of two half-cones and a triangular prism. See Figure~\ref{fig:stage3_1}. 

\begin{figure}[ht]
    \centering
    \includegraphics{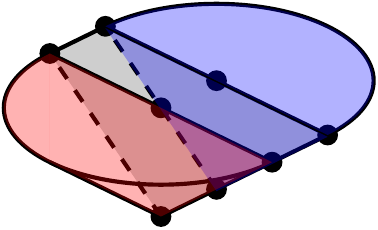}
    \caption{A trough is the union of two half-cones (red, blue) and a triangular prism (grey).}
    \label{fig:stage3_1}
\end{figure}

The base of the half-cones have a radius of $8 \varepsilon^{-1} |e|$. The half-cones and the triangular prism have a height of $2 \varepsilon^{-1} |e|$. Next, we show that if $P = (V,E)$ is a $c$-packed graph, then the set of troughs $\{\trough(e,\varepsilon): e \in E\}$ is indeed a low-density environment.

\begin{lemma}
\label{lemma:troughs_are_low_density}
Let $P = (V,E)$ be a $c$-packed graph, and let $0 < \varepsilon < 1$. Then $\{\trough(e,\varepsilon): e \in E\}$ is an $O(c\varepsilon^{-1})$-low-density environment. 
\end{lemma}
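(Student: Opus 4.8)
The plan is to fix an arbitrary axis-parallel cube $H_r$ and bound the number of edges $e\in E$ whose trough $\trough(e,\varepsilon)$ simultaneously intersects $H_r$ and has size at least $r$. Writing $H_r=[x_0,x_0+r]\times[y_0,y_0+r]\times[z_0,z_0+r]$ and observing that every trough lies in the halfspace $z\ge 0$ (since $d((x,y),e)\le 4z$ forces $z\ge 0$), I may assume $z_0\ge 0$ and set $L:=r+z_0$; the goal is to show there are $O(c\varepsilon^{-1})$ such edges, and the tool will be $c$-packedness applied to a single disk of radius $\Theta(L)$.

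\textbf{Step 1: a long-edge condition from the size bound.} I would first check that $\trough(e,\varepsilon)$ is contained in the Minkowski sum of $e$ with a radius-$8\varepsilon^{-1}|e|$ disk in the $xy$-plane, crossed with the interval $[0,2\varepsilon^{-1}|e|]$ in $z$. Since $0<\varepsilon<1$, each side of its axis-parallel bounding box is at most $|e|+16\varepsilon^{-1}|e|\le 17\varepsilon^{-1}|e|$, so the size of $\trough(e,\varepsilon)$ is at most $17\varepsilon^{-1}|e|$. Hence ``size $\ge r$'' forces $|e|\ge \varepsilon r/17$.

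\textbf{Step 2: locality and a second length bound from the intersection.} If $\trough(e,\varepsilon)$ meets $H_r$, fix $(x,y,z)\in H_r$ with $d((x,y),e)\le 4z\le 8\varepsilon^{-1}|e|$. From $z\le z_0+r=L$ I get that $e$ passes within distance $4L$ of the square $[x_0,x_0+r]\times[y_0,y_0+r]$, and hence $e$ meets the disk $B$ of radius $5L$ centred at the centre of that square. From $z\ge z_0$ together with $4z\le 8\varepsilon^{-1}|e|$ I get the second estimate $|e|\ge \varepsilon z_0/2$. Combining with Step 1 yields $|e|\ge \tfrac{\varepsilon}{34}L$.

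\textbf{Step 3: a charging argument via $c$-packedness.} For each counted edge $e$ choose $q\in e\cap B$ and put $\rho:=\tfrac{\varepsilon}{68}L$, so $\rho\le |e|/2$. One of the two sub-segments of $e$ on either side of $q$ has length $\ge |e|/2\ge\rho$; taking the length-$\rho$ piece of that side starting at $q$ gives $\sigma_e\subseteq e$ with $|\sigma_e|=\rho$ and every point of $\sigma_e$ within Euclidean distance $\rho\le L$ of $q$, so $\sigma_e$ lies inside the disk $B'$ of radius $6L$ about the same centre. The $\sigma_e$ belong to distinct edges, so $c$-packedness on $B'$ gives $N\rho\le \sum_{e\in E}|e\cap B'|\le 6cL$, where $N$ is the number of counted edges; thus $N\le 6cL/\rho=408\,c\varepsilon^{-1}=O(c\varepsilon^{-1})$. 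Since $H_r$ was arbitrary, $\{\trough(e,\varepsilon):e\in E\}$ is $O(c\varepsilon^{-1})$-low-density. The main obstacle is exactly why a naive count fails: $z_0$, hence the radius of $B$, may be far larger than $\varepsilon r$, so Step 1 alone only guarantees $|e|\gtrsim\varepsilon r$ and a counted edge might poke into $B$ in an arbitrarily short piece, with unboundedly many such edges. What rescues the argument is the \emph{second} length bound $|e|\gtrsim\varepsilon z_0$ from Step 2: together with Step 1 it certifies $|e|\gtrsim\varepsilon L$, comparable to the radius of $B$, so each counted edge contributes a sub-segment of length $\Theta(\varepsilon L)$ inside a slightly enlarged disk, which is what $c$-packedness can pay for. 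Making the two length bounds cooperate and keeping every $\sigma_e$ inside one common disk is the crux; the remaining constants are routine.
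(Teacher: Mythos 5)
Your proof is correct and follows essentially the same route as the paper's: both derive $|e|\gtrsim\varepsilon r$ from the size constraint, both derive $|e|\gtrsim\varepsilon z_0$ from the $z$-extent of the trough meeting $H_r$, and both combine these to show each qualifying edge deposits $\Omega(\varepsilon(r+z_0))$ length inside a disk of radius $\Theta(r+z_0)$, after which $c$-packedness gives the $O(c\varepsilon^{-1})$ bound. The only cosmetic difference is that you extract an explicit sub-segment $\sigma_e$ of length $\rho$ per edge, whereas the paper argues directly that $|e\cap B|\ge\min(|e|,5r+4z_{\min})$; the constants differ but the mechanism is identical.
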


\begin{proof}
First, we show that $\trough(e,\varepsilon)$ has size at most $18 \varepsilon^{-1} |e|$. Let $(x,y,z) \in \trough(e,\varepsilon)$. Then $0 \leq z \leq 2 \varepsilon^{-1} |e|$. Moreover, $d((x,y),e) \leq 8 \varepsilon^{-1} |e|$, so $(x,y)$ must lie inside a circle centred at the midpoint of $e$, with radius $9 \varepsilon^{-1} |e|$. So~$(x,y,z)$ lies in a cylinder with radius $9 \varepsilon^{-1} |e|$ and height $2 \varepsilon^{-1} |e|$. So the size of $\trough(e,\varepsilon)$ is at most $18 \varepsilon^{-1} |e|$, as claimed.

Let $H_r$ be any axis parallel cube with side length $r$. Let $z_{min}$ be the minimum $z$-coordinate of~$H_r$. We can assume without loss of generality that $z_{min} \geq 0$. Suppose $\trough(e,\varepsilon)$ intersects with~$H_r$ and $\trough(e,\varepsilon)$ has size at least $r$. Let $(x,y,z) \in \trough(e,\varepsilon) \cap H_r$. Let $h$ be the projection of the centre of~$H_r$ onto the hyperplane defined by $z=0$. See Figure~\ref{fig:stage3_2}.

\begin{figure}[ht]
    \centering
    \includegraphics{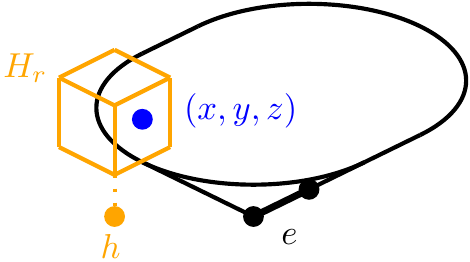}
    \caption{The point~$(x,y,z)$ (blue) is in the intersection of $\trough(e,\varepsilon)$ (black) and the cube~$H_r$ (orange). The point $h$ (orange) and the edge $e$ (black) are on the hyperplane $z=0$.}
    \label{fig:stage3_2}
\end{figure}

Then,
\[
    \begin{array}{rcl}
        d(h,e) 
        &\leq& d(h,(x,y)) + d((x,y),e) \\
        &\leq& r + 4z \\
        &\leq& 5r + 4z_{min}
    \end{array}
\]
where the first inequality is the triangle inequality, the second comes from $(x,y,z) \in H_r \cap \trough(e,\varepsilon)$, and the third comes from the maximum $z$-coordinate of $H_r$ being $z_{min} + r$.

The maximum $z$-coordinate of $\trough(e,\varepsilon)$ is $2 \varepsilon^{-1} |e|$, whereas the minimum $z$-coordinate of $H_r$ is~$z_{min}$. Therefore, $z_{min} \leq 2 \varepsilon^{-1} |e|$. Since $\trough(e,\varepsilon)$ has size at least $r$ and at most $18 \varepsilon^{-1} |e|$, we have $r \leq 18 \varepsilon^{-1} |e|$. Putting this together, we have $5r + 4z_{min} \leq 100 \varepsilon^{-1} |e|$. 

Consider the ball $B(h,10r + 8z_{min})$ centred at $h$ with radius $10r + 8z_{min}$. Since $d(h,e) \leq 5r + 4z_{min}$, the length of $e$ that is contained in $B(h,10r + 8z_{min})$ is at least $\min(|e|,5r + 4z_{min})$. But $100 \varepsilon^{-1} |e| \geq 5r + 4z_{min}$. So the length of $e$ that is contained in $B(s,10r + 8z_{min})$ is at least $\varepsilon \cdot (5r + 4z_{min})/100$. 

Finally, suppose there were $k$ edges $\{e_i\}_{i=1}^k$ so that each $e_i$ satisfies~$(i)$ $\trough(e_i)$ intersects $H_r$, and~$(ii)$ $\trough(e_i)$ has size at least $r$. Then by definition, the environment is $k$-low-density. It suffices to upper bound $k$. The total length of edges inside $B(h,10r + 8z_{min})$ is at least $k \varepsilon \cdot (5r + 4z_{min})/100$. By the $c$-packedness of~$P$, we have $c \cdot (10r + 8z_{min}) \geq k \varepsilon \cdot (5r + 4z_{min})/100$, so $k \leq 50c\varepsilon^{-1}$. Therefore, for all $H_r$ there are at most $50 c\varepsilon^{-1}$ troughs that intersect $H_r$ and have size at least $r$. This proves the stated lemma.
\end{proof}

Next, we use the result of Schwarzkopf and Vleugels~\cite{DBLP:journals/ipl/SchwarzkopfV96} to build a range searching data structure for the low-density environment. Note that $\trough(e,\varepsilon)$ has constant description complexity.

\begin{lemma}[Theorem~3 in~\cite{DBLP:journals/ipl/SchwarzkopfV96}]
\label{lemma:schwarzkopf_and_vleugels}
Let $\mathcal E$ be a set of $n$ objects in $\mathbb R^3$, where each object has constant description complexity. Suppose that $\mathcal E$ is a $k$-low-density environment. Then $\mathcal E$ can be stored in a data structure of size $O(n \log^2 n + kn)$, such that it takes $O(\log^2 n + k)$ time to report all objects that contain a given query point $q \in \mathbb R^3$. The data structure can be computed in $O(n \log^2 n + kn \log n)$ time.
\end{lemma}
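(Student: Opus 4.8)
Since this lemma is quoted verbatim from \cite{DBLP:journals/ipl/SchwarzkopfV96}, one legitimate option is simply to invoke that reference; but here is the approach I would take to prove it from scratch. The plan is to build a hierarchical spatial subdivision of $\mathbb{R}^3$ in which (a) each cell carries only $O(k)$ objects, by the low-density assumption, and (b) each object is charged to only $O(1)$ cells, so the total storage stays $O(kn)$. Concretely, I would pick one representative point per object — say a corner of its axis-parallel bounding box — and build a balanced box-decomposition (BBD) tree, or a BAR tree, on this point set. Such a tree has $O(n)$ cells, depth $O(\log n)$, and, crucially, the \emph{smoothness} property that along any root-to-leaf path the cell diameters shrink by a constant factor at each step.

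For each cell $\sigma$ with parent cell $\pi$, let $E_\sigma$ be the set of objects $e \in \mathcal{E}$ that intersect $\sigma$ and whose size lies in the range $[\,\diameter(\sigma),\,\diameter(\pi)\,)$. Since $\diameter(\sigma)$ is within a constant factor of the side length of a bounding cube of $\sigma$, the $k$-low-density property immediately gives $|E_\sigma| = O(k)$; and since there are $O(n)$ cells, the total number of (cell, object) incidences, hence the space for these lists, is $O(kn)$. Layering an $O(\log^2 n)$-space point-location component over the tree (with fractional cascading between levels) gives the claimed $O(n\log^2 n + kn)$ space, and building the tree, filling all $E_\sigma$ by a single top-down pass, and preparing the cascaded search structures takes $O(n\log^2 n + kn\log n)$ time.

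To report all objects containing a query point $q \in \mathbb{R}^3$, I would locate the leaf cell containing $q$, walk the $O(\log n)$ cells on its root-to-leaf path, and for each such cell $\sigma$ test every object in $E_\sigma$ for membership of $q$ — constant time per test, since the objects have constant description complexity. Correctness is where smoothness is used: if $q \in e$ and $e$ has size $s$, then on $q$'s root-to-leaf path there is a cell $\sigma$ with $\diameter(\sigma) \le s < \diameter(\pi)$ for its parent $\pi$; this $\sigma$ contains $q$, so $e$ intersects it, whence $e \in E_\sigma$ and $e$ is reported (and only once per witnessing cell, which is fine since we just want each object once). The query cost is $O(\log^2 n)$ for point location plus $O(k)$ at each of $O(\log n)$ cells; because the objects reported across different cells are distinct, the output-sensitive term does not multiply by $\log n$, and the total is $O(\log^2 n + k)$.

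The main obstacle is making the two budgets coexist. The charging rule "size in $[\diameter(\sigma),\diameter(\pi))$" is what confines each object to $O(1)$ lists, so that the storage is $O(kn)$ rather than $O(kn\log n)$; yet we still need every object crossing a query path to land in \emph{some} list along that path, which is exactly what the constant-factor decrease of cell diameters guarantees. Compressed nodes — where a child can be arbitrarily smaller than its parent — would break this argument for a plain compressed quadtree, which is precisely why a BBD/BAR tree is used: its cells change size smoothly, and establishing that property together with the $O(k)$-per-cell bound from low density is the technical heart of the proof.
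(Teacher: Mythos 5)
The paper does not prove this lemma at all --- it is imported verbatim as Theorem~3 of Schwarzkopf and Vleugels~\cite{DBLP:journals/ipl/SchwarzkopfV96} --- so there is no in-paper argument to compare your sketch against, and your opening observation that one may simply cite the reference is exactly what the paper does.

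Your reconstruction is in the right spirit, but as written it does not deliver the $O(\log^2 n + k)$ query bound. The step that breaks is ``because the objects reported across different cells are distinct, the output-sensitive term does not multiply by $\log n$, and the total is $O(\log^2 n + k)$.'' Distinctness of the lists $E_\sigma$ along the query path only ensures that no object is \emph{tested} twice; it does not bound the total number of objects you touch. Each cell $\sigma_i$ on the root-to-leaf path can carry a full $\Theta(k)$-size conflict list, consistently with $k$-low-density: for each level $i$, place $\Theta(k)$ objects of size roughly $\diameter(\sigma_i)$ at distance roughly $\diameter(\sigma_i)$ from $q$, spread angularly so that any axis-parallel cube of side $r$ still meets only $O(k)$ of the objects of size at least $r$. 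These per-level lists are pairwise disjoint precisely because their size ranges are disjoint, so the disjointness observation is correct but unhelpful: the total time spent testing is $\Theta(k\log n)$ in the worst case, giving $O(\log^2 n + k\log n)$ rather than $O(\log^2 n + k)$. Closing that gap requires a different mechanism --- e.g.\ arranging that the query ultimately inspects a \emph{single} conflict list of size $O(k)$, or a packing argument that the conflict lists along any root-to-leaf path have total size $O(k)$ --- and your outline contains neither. A minor secondary point: the claim that ``each object is charged to only $O(1)$ cells'' is neither needed nor generally true, since an object can intersect many cells whose diameter matches its size; the $O(kn)$ storage bound you want follows from the orthogonal count of $O(n)$ cells times $O(k)$ objects each, which you also state.
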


We use Lemma~\ref{lemma:schwarzkopf_and_vleugels} to construct a data structure for computing points that the vertices of the query trajectory (i.e. $a_i$ for $1 \leq i \leq q$) can match to. In particular, for any square $S$ in the plane, the following data structure returns a set of points $T \subset F$ satisfying the properties in Lemma~\ref{lemma:points_in_square_lemma_version_2}.

\begin{lemma}
\label{lemma:starting_and_ending_points_data_structure_version_2}
Let $P = (V,E)$, let $F = \{f \in \mathbb R^2: f \in e, e \in E\}$, and let $\varepsilon > 0$. One can construct a data structure in $O(p^2 \log p)$ time of size $O(p \log^2 p)$, so that given a query square $S$ in the plane with side length $2r$, the data structure returns in $O(\log^2 p + c \varepsilon^{-2})$ time a set of points $T \subset F$ satisfying~$(i)$ $|T| = O(c \varepsilon^{-2})$ and~$(ii)$ for all points $f \in F \cap S$, there exists $t \in T$ so that $\graphdist(f,t) \leq \varepsilon r$.
\end{lemma}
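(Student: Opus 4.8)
The plan is to return $T = T_2 \cup T_3$, where $T_2$ is the vertex set produced by Lemma~\ref{lemma:starting_and_ending_point_data_structure} and $T_3$ is a set of sample points along the ``long'' edges meeting a slightly enlarged query square, mirroring exactly the decomposition used in the proof of Lemma~\ref{lemma:points_in_square_lemma_version_2}. For preprocessing I would first build the data structure of Lemma~\ref{lemma:starting_and_ending_point_data_structure} with parameter $\varepsilon/2$, which computes the $\graphdist$ distance matrix and the Gonzales clustering in $O(p^2\log p)$ time and $O(p\log^2 p)$ space. Second, for every edge $e \in E$ I would form the object $\trough(e,\varepsilon) \subset \mathbb{R}^3$; by Lemma~\ref{lemma:troughs_are_low_density} the collection $\{\trough(e,\varepsilon) : e\in E\}$ is an $O(c\varepsilon^{-1})$-low-density environment of $O(p)$ objects of constant description complexity, so by Lemma~\ref{lemma:schwarzkopf_and_vleugels} it can be stored in $O(p\log^2 p + c\varepsilon^{-1}p)$ space and built in $O(p\log^2 p + c\varepsilon^{-1}p\log p)$ time; combining with the first structure gives the stated preprocessing and space bounds.

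To answer a query for a square $S$ of side length $2r$ with centre $(x_c,y_c)$, I would first query the structure of Lemma~\ref{lemma:starting_and_ending_point_data_structure} on $S$ to obtain $T_2$ with $|T_2| = O(c\varepsilon^{-1})$ in $O(\log p + c\varepsilon^{-1})$ time. Next I would query the trough structure with the point $(x_c,y_c,r) \in \mathbb{R}^3$. By the definition of $\trough(e,\varepsilon)$, the objects reported are precisely the edges $e$ with $d((x_c,y_c),e) \le 4r$ and $|e| \ge \varepsilon r/2$. On the one hand, every edge meeting the concentric square $S'$ of side length $4r$ lies within distance $2\sqrt2\,r < 4r$ of $(x_c,y_c)$, so the reported set contains every edge of length at least $\varepsilon r/2$ that meets $S'$ (the edges used to build $T_3$ in Lemma~\ref{lemma:points_in_square_lemma_version_2}). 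On the other hand, in an $O(c\varepsilon^{-1})$-low-density environment at most $O(c\varepsilon^{-1})$ objects can contain any fixed point, so only $O(c\varepsilon^{-1})$ edges are reported, in $O(\log^2 p + c\varepsilon^{-1})$ time. For each reported edge $e$ I would clip it to $S'$ and place $O(\varepsilon^{-1})$ evenly spaced points on the chord $e \cap S'$ at spacing at most $\varepsilon r/2$; the union of these is $T_3$, with $|T_3| = O(c\varepsilon^{-2})$, computed in $O(c\varepsilon^{-2})$ time. Returning $T = T_2 \cup T_3$ gives $|T| = O(c\varepsilon^{-2})$ and total query time $O(\log^2 p + c\varepsilon^{-2})$, and $T \subset F$ since graph vertices lie on incident edges.

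Correctness of property~(ii) is inherited from Lemma~\ref{lemma:points_in_square_lemma_version_2}: its proof exhibits a covering set of the form $T_2^{\mathrm{can}} \cup T_3^{\mathrm{can}}$, where $T_2^{\mathrm{can}}$ is exactly the output of Lemma~\ref{lemma:starting_and_ending_point_data_structure} and $T_3^{\mathrm{can}}$ is sampled from the edges of length at least $\varepsilon r/2$ meeting $S'$; since our $T_2 = T_2^{\mathrm{can}}$ and our $T_3 \supseteq T_3^{\mathrm{can}}$, the set $T$ still contains, for every $f \in F \cap S$, a point $t$ with $\graphdist(f,t) \le \varepsilon r$. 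The main obstacle is the choice of the $z$-coordinate of the trough query point: it must be large enough that $4z$ exceeds the half-diagonal $2\sqrt2\,r$ of $S'$ (so no needed long edge is missed) yet small enough that the induced length threshold $\varepsilon z/2$ does not exceed $\varepsilon r/2$ (so no edge of length in $[\varepsilon r/2, \varepsilon r)$ is missed), and the value $z = r$ threads this needle precisely because of the constants $4$ and $8\varepsilon^{-1}$ built into the definition of $\trough(e,\varepsilon)$. The remaining subtlety is justifying that a point query in a $k$-low-density environment returns only $O(k)$ objects, which follows by applying the low-density condition to an arbitrarily small cube centred at the query point.
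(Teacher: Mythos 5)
Your proposal follows essentially the same route as the paper's proof: build the data structure of Lemma~\ref{lemma:starting_and_ending_point_data_structure} together with a Schwarzkopf--Vleugels range-searching structure over $\{\trough(e,\varepsilon)\}_{e\in E}$, query the former for $T_2$ and the latter at the point $(x,y,r)$ to obtain the long edges, sample $O(\varepsilon^{-1})$ points on each $e \cap S'$ to form $T_3$, and appeal to Lemma~\ref{lemma:points_in_square_lemma_version_2} for correctness. You are somewhat more explicit than the paper in a few helpful places — naming the $\varepsilon/2$ parameter passed to Lemma~\ref{lemma:starting_and_ending_point_data_structure}, noting the $2\sqrt2\,r < 4r$ bound that shows no edge meeting $S'$ is missed, and observing that the output count is bounded by the low-density parameter — but these are clarifications of the same argument, not a different one.
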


\begin{proof}
We construct the data structure in Lemma~\ref{lemma:starting_and_ending_point_data_structure}. For each edge $e \in E$, we construct $\trough(e,\varepsilon)$, and we use Lemma~\ref{lemma:schwarzkopf_and_vleugels} to construct a range searching data structure on the set of troughs. Note that troughs form a low-density environment by Lemma~\ref{lemma:troughs_are_low_density}. This completes the construction procedure.

Given a query square $S$, we use Lemma~\ref{lemma:starting_and_ending_point_data_structure} to query a set $T_2$. Next, we state the query for $T_3$. Let the centre of $S$ be $(x,y)$, and its side length be $2r$. Query the data structure in Lemma~\ref{lemma:schwarzkopf_and_vleugels} for all troughs that contain the query point $(x,y,r)$. Suppose the data structure returns $\{\trough(e_i)\}_{i=1}^k$. Let $S'$ be the square concentric with $S$, but with side length $4r$. For each $e_i$, choose $O(\varepsilon^{-1})$ evenly spaced points on the chord $e_i \cap S'$, so that the distance between consecutive points is $\leq \varepsilon r / 2$. This completes the query for set~$T_3$.

Next, we prove the correctness of the query. For $T_2$, the proof of correctness follows from Lemma~\ref{lemma:starting_and_ending_point_data_structure}. For $T_3$, we require all edges with length at least $\varepsilon r / 2$ that intersect $S'$. It suffices to show that querying Lemma~\ref{lemma:schwarzkopf_and_vleugels} for all troughs containing the query point $(x,y,r)$ is sufficient to obtain all such edges. Recall from the definition of the trough that $(x,y,r) \in \trough(e,\varepsilon)$ if and only if $d((x,y),e) \leq 4r$ and $4r \leq 8 \varepsilon^{-1}|e|$. Note that $d((x,y),e) \leq 4r$ covers all edges that intersect $S'$, and $4r \leq 8 \varepsilon^{-1}|e|$ covers all edges with length at least $\varepsilon r/2$. Therefore, the queries for $T_2$ and $T_3$ are correct. Lemma~\ref{lemma:points_in_square_lemma_version_2} proves that $T_2 \cup T_3$ satisfies Properties~$(i)$ and~$(ii)$ in the stated lemma.

The data structure in Lemma~\ref{lemma:starting_and_ending_point_data_structure} has $O(p \log p)$ size. The data structure in Lemma~\ref{lemma:schwarzkopf_and_vleugels} has $O(p \log^2 p + c p) = O(p \log^2 p)$ size. Therefore, the overall size of our data structure is $O(p \log^2 p)$. 

The preprocessing time of Lemma~\ref{lemma:starting_and_ending_point_data_structure} and Lemma~\ref{lemma:schwarzkopf_and_vleugels} are $O(p^2 \log p)$ and $O(p \log^2 p + c \varepsilon^{-1} p \log p)$ respectively. Therefore, the overall preprocessing time is $O(p^2 \log p)$.

Finally, the query time of Lemma~\ref{lemma:starting_and_ending_point_data_structure} is $O(\log p  + c \varepsilon^{-1})$. The query time of Lemma~\ref{lemma:schwarzkopf_and_vleugels} is $O(\log^2 n + c\varepsilon^{-1})$ time. Constructing the evenly spaced points takes $O(\varepsilon^{-1})$ time per chord, and there are $O(c\varepsilon^{-1})$ chords overall. Therefore, the overall query time is $O(\log^2 p + c\varepsilon^{-2})$. 
\end{proof}

Finally, we are ready to construct the map matching data structure for general trajectory queries on $c$-packed graphs.

\main*

\begin{proof}
The preprocessing procedure is to build the data structures in Lemmas~\ref{lemma:map_matching_segment_query_if_u_and_v_are_known} and~\ref{lemma:starting_and_ending_points_data_structure_version_2}. For both data structures, we use the parameter $\varepsilon' = \varepsilon/9$ instead of~$\varepsilon$.

Given a query trajectory $Q$ and a Fr\'echet distance of $r$, the decision problem is to decide whether~$r^* \leq r$ or~$r^* \geq r$, where $r^* = \min_{\pi} \frechet(\pi, Q)$ as~$\pi$ ranges over all paths in~$P$ that start and end at a vertex of~$P$. Recall that the vertices of~$Q$ are $a_1,\ldots,a_q$. We divide the decision problem into three steps. In step one, we construct a square of side length $2r$ centred at~$a_i$. For $1 \leq i \leq q$, we query Lemma~\ref{lemma:starting_and_ending_points_data_structure_version_2} with this square to obtain a set of points $T_i$ that can match to~$a_i$. In step two, we build a directed graph over~$\cup_{i=1}^q T_i$, which we define as follows. Let~$b_{i,j} \in T_i$ and~$b_{i+1,k} \in T_{i+1}$. Let~$c_{i,j}$ and~$d_{i,j}$ be graph vertices so that~$b_{i,j}$ is on the edge~$c_{i,j}d_{i,j}$. Define~$c_{i+1,k}$ and~$d_{i+1,k}$ analogously. If $b_{i,j}$ and $b_{i+1,k}$ lie on the same edge, then the Fr\'echet distance $d_F(b_{i,j}b_{i+1,k},a_ia_{i+1})$ is the smaller of the two lengths $|b_{i,j}a_i|$ and $|b_{i+1,k}a_{i+1}|$. Otherwise, $b_{i,j}$ and $b_{i+1,k}$ lie on two different edges, and we can suppose without loss of generality that the path~$\pi$ passes through the pair of endpoints~$(c_{i,j},c_{i+1,k})$. Analogous arguments can be made if the path~$\pi$ instead passes through the pairs of endpoints $(c_{i,j},d_{i+1,k})$, $(d_{i,j},c_{i+1,k})$ or $(d_{i,j},d_{i+1,k})$. Let~$a_i'$ be the point on~$a_ia_{i+1}$ that is the closest to~$a_i$ and satisfies $\frechet(b_{i,j}c_{i,j}, a_ia_i') \leq r$. Let~$a_{i+1}'$ be the point on~$a_ia_{i+1}$ that is the closest to~$a_{i+1}$ and satisfies $\frechet(c_{i+1,k}b_{i+1,k}, a_{i+1}'a_{i+1}) \leq r$. See Figure~\ref{fig:stage3_3}.

\begin{figure}[ht]
    \centering
    \includegraphics{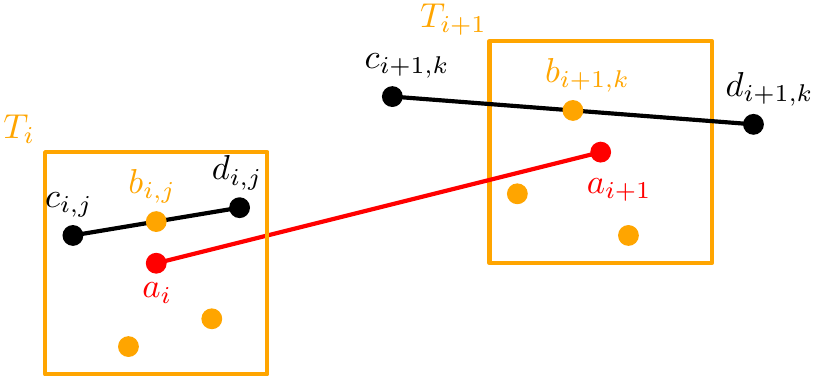}
    \caption{If~$b_{i,j} \in T_i$ (orange) is a point that can match to $a_i$ (red), we let the edge containing~$b_{i,j}$ be $c_{i,j}d_{i,j}$ (black). The analogous edge $c_{i+1,k}d_{i+1,k}$ is defined for~$b_{i+1,k}$.}
    \label{fig:stage3_3}
\end{figure}

We query the data structure in Lemma~\ref{lemma:map_matching_segment_query_if_u_and_v_are_known} to obtain a $(1+\varepsilon')$-approximation of $\min_{\pi} \frechet(\pi, a_i'a_{i+1}')$ where $\pi$ ranges over all paths between the graph vertices~$c_{i,j}$ and~$c_{i+1,k}$. In our directed graph over~$\cup_{i=1}^q T_i$, define the capacity of the directed edge from~$b_{i,j}$ to~$b_{i+1,k}$ as follows. If $b_{i,j}$ and $b_{i+1,k}$ are on the same edge, then the capacity is $\min(|b_{i,j}a_i|,|b_{i+1,k}a_{i+1}|)$. If $b_{i,j}$ and $b_{i+1,k}$ are on different edges, the capacity is the minimum of the four $(1+\varepsilon')$-approximations of $\min_{\pi} \frechet(\pi, a_i'a_{i+1}')$ where $\pi$ ranges over all paths between the pairs of vertices~$(c_{i,j},c_{i+1,k})$, $(c_{i,j},d_{i+1,k})$, $(d_{i,j},c_{i+1,k})$ and $(d_{i,j},d_{i+1,k})$. This completes step two, that is, building the directed graph. The third step is, for $r' \in \{r, (1+\varepsilon')r\}$, to decide whether there exists a path in the directed graph from $T_1$ to $T_q$, so that the capacity of each edge in the path is at most~$r'$. We distinguish three cases (a), (b) and (c): 

\begin{enumerate}[label=(\alph*), noitemsep]
    \item If there exists a path in the case $r' = r$, we return that $r^* \leq r$. 
    \item If there does not exist a path in the case $r' = (1+\varepsilon')^2 r$, we return that $r^* \geq r$.
    \item If there exists a path in the case $r' = (1+\varepsilon')^2 r$ but not for the case $r' = r$, we return that $$r^* \in [(1+\varepsilon')^{-2} r, (1+\varepsilon')^2 r].$$
\end{enumerate}

Note that the third case does not technically answer the decision problem, however, in this case, we will show that $(1+\varepsilon')^2 r$ is a $(1+\varepsilon)$-approximation of $r^*$, as required by the theorem statement. This completes the description of the query procedure in the decision version. Next, we prove its correctness, which we separate into cases (a), (b) and (c).

\begin{enumerate}[label=(\alph*), noitemsep]
    \item Suppose there exists a path with capacity at most $r'$, where $r' \geq r$. Let this path be~$b_1,\ldots,b_q$, where~$b_i \in T_i$ for $1 \leq i \leq q$. Let the capacity of the directed edge from~$b_i$ to~$b_{i+1}$ be~$C_i$. Then~$C_i \leq r'$, by definition. If $b_i$ and $b_{i+1}$ are on the same edge, define $\pi'_i$ to be the edge $b_ib_{i+1}$, so that $d_F(\pi'_i,a_ia_{i+1}) \leq r'$. Otherwise, there exists graph vertices~$c_i$ and $c_{i+1}$, and points $a_i'$ and $a_{i+1}'$ on $a_i a_{i+1}$ satisfying $\frechet(b_{i}c_{i}, a_ia_i') \leq r$, $\frechet(b_{i+1}c_{i+1}, a_{i+1}'a_{i+1}) \leq r$, and $\frechet(\pi_i, a_i'a_{i+1}') \leq C_i$ for some path $\pi_i$ between $c_{i}$ and $c_{i+1}$. Define $\pi_i'$ to be the concatenation of~$b_ic_i$, $\pi_i$ and $c_{i+1}b_{i+1}$. So $\frechet(\pi_i', a_ia_{i+1}) \leq \max(C_i,r) \leq r'$. Define $\pi'$ to be the concatenation of $\pi_i'$ for all $1 \leq i \leq q$. Then $\frechet(\pi', Q) \leq r'$. Therefore, $r^* = \min_{\pi} \frechet(\pi, Q) \leq \frechet(\pi',Q) \leq r'$, so $r^* \leq r'$. In the first case, there exists a path for $r' = r$, so $r^* \leq r$, as required.
    \item Suppose there does not exist a path with capacity at most $r'$. Let $\pi^*$ be the path in~$P$ so that $r^* = \frechet(\pi^*, Q)$. Let the points on~$\pi^*$ that match to $a_1,\ldots,a_q \in Q$ be $u_1^*,\ldots,u_q^* \in P$. By Lemma~\ref{lemma:starting_and_ending_points_data_structure_version_2}, there exist points~$b_{i} \in T_i$ so that $\graphdist(b_{i}, u_i^*) \leq \varepsilon' r$, for all $1 \leq i \leq q$. Let~$r_i$ be the minimum Fr\'echet distance $\frechet(\pi,a_ia_{i+1})$ where $\pi$ ranges over all paths between~$b_{i}$ and~$b_{i+1}$. Then~$r_i \leq \max(\frechet(b_{i} u_i^*, a_i), \frechet(\pi^*[u_i^*,u_{i+1}^*],a_ia_{i+1}), \frechet(u_{i+1}^* b_{i+1}, a_{i+1}))$, since the concatenation of~$b_{i}u_i^*$, the subtrajectory $\pi^*[u_i^*,,u_{i+1}^*]$ of $\pi^*$, and $u_{i+1}^* b_{i+1}$ is a valid path from~$b_i$ to~$b_{i+1}$. See Figure~\ref{fig:stage3_4}.

    Note that $\frechet(b_iu_i^*,a_i) \leq \frechet(u_i^*,a_i) + \graphdist(b_i, u_i^*) \leq r^* + \varepsilon' r$. Therefore,~$r_i \leq r^* + \varepsilon' r$. Then, the capacity of the edge from~$b_i$ to~$b_{i+1}$ is at most $(1+\varepsilon') r_i \leq (1+\varepsilon')(r^* + \varepsilon' r)$. Putting this together, there exists a path from $T_1$ to $T_q$ with capacity at most $(1+\varepsilon')(r^* + \varepsilon' r)$. In the second case, there does not exist a path with capacity $r' = (1+\varepsilon')^2 r$. Therefore, $(1+\varepsilon')^2 r \leq (1+\varepsilon')(r^* + \varepsilon' r)$ which implies $(1+\varepsilon') r \leq r^* + \varepsilon' r$ and $r \leq r^*$, as required. 

    \item From proof of the first case, $r^* \leq r'$, if there exists a path for $r'$. Therefore, $r^* \leq (1+\varepsilon')^2 r$. From the proof of the second case, $r^* \geq r'$ if there exists a path for $(1+\varepsilon')^{2} r'$. Therefore, $r^* \geq (1+\varepsilon)^{-2} r$. Putting this together, we get $r^* \in [(1+\varepsilon')^{-2} r, (1+\varepsilon')^2 r]$. In particular, we have $(1+\varepsilon)^2 r \geq r^*$, and $(1+\varepsilon)^2 r \leq (1+\varepsilon')^4 r^* \leq (1+3 \varepsilon')^2 r^* \leq (1+9 \varepsilon') r^* = (1+\varepsilon) r^*$, so $(1+\varepsilon')^2 r$ is a $(1+\varepsilon)$-approximation of $r^*$, as required.
\end{enumerate}

    \begin{figure}[ht]
        \centering
        \includegraphics{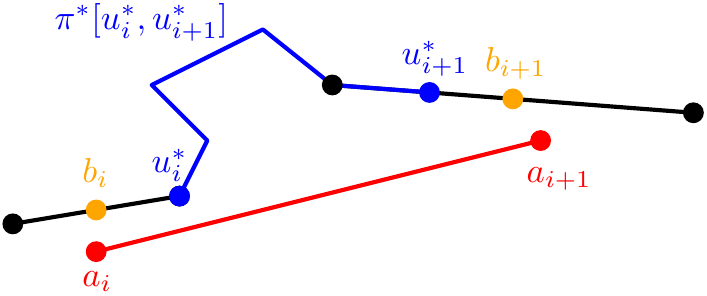}
        \caption{The trajectory vertices $a_i$ and $a_{i+1}$ (red) match to $u_i^*$ and $u_{i+1}^*$ on the optimal path $\pi^*$ (blue). We construct a path from~$b_i$ to~$b_{i+1}$ (orange) by concatenating the edge~$b_i u_i^*$, the subtrajectory $\pi^*[u_i^*,u_{i+1}^*]$ and $u_{i+1}^*b_{i+1}$.}
        \label{fig:stage3_4}
    \end{figure}

For the minimisation version, we apply parametric search. Define the decision problem $D(r)$ to be TRUE if there exists a path for $r'=r$, and FALSE if there does not exist a path for $r' = (1+\varepsilon')^2 r$. 

It suffices to show~$(i)$ that $D(r)$ is monotone and~$(ii)$ that all operations in $D(r)$ are either independent of $r$, or can be made equivalent to a constant number of comparisons $\{r > c_i\}$ where~$c_i$ is a critical value. First we show~$(i)$. Suppose $D(r_1)$ evaluates to TRUE, and~$r_1 < r_2$. Then $r^* \leq r_1 \leq r_2$ and we cannot have $D(r_2)$ evaluation to FALSE. So either $D(r_2)$ is also TRUE, or we halt the parametric search and obtain a $(1+\varepsilon)$-approximation of $r^*$. Similarly, if $D(r_1)$ evaluates to FALSE, and~$r_1 > r_2$, then we cannot have $D(r_2)$ evaluating to TRUE. Therefore, $D(r)$ is monotone. 

Next, we show~$(ii)$. The first step is to query Lemma~\ref{lemma:map_matching_segment_query_if_u_and_v_are_known} to obtain a set of points $T_i$ that can match to $a_i$. The critical values of this step are when the query point $(x,y,r)$ lies on the boundary of the troughs in the low-density environment. This can be evaluated as a low order polynomial in terms of $r$. The second step is to compute the points $a_i$ and $a_i'$, and query the data structure in Lemma~\ref{lemma:map_matching_segment_query_if_u_and_v_are_known} to obtain a $(1+\varepsilon')$-approximation of $\min_{\pi} \frechet(\pi, a_i'a_{i+1}')$. The critical values occur when $a_i$ and $a_{i}'$ match to different pairs of grid points in Lemma~\ref{lemma:exponential_grid_between_transit_pairs}, which can be resolved using a low order polynomial in terms of $r$. The third step is to decide whether there exists a path in the directed graph where the capacity of each edge is at most~$r'$. The critical values are when the capacity of an edge is exactly $r'$, which can be resolved as a low order polynomial of~$r$. This completes the proof of~$(i)$ and~$(ii)$. 

We analyse the space and preprocessing time of the data structure. First, we analyse the space. The data structures in Lemmas~\ref{lemma:map_matching_segment_query_if_u_and_v_are_known} and~\ref{lemma:starting_and_ending_points_data_structure_version_2} require $O(c \varepsilon^{-4} \log(1/\varepsilon) p \log p)$ and $O(p \log^2 p)$ space respectively. The overall space requirement is $O(c \varepsilon^{-4} \log(1/\varepsilon) p \log p + p \log^2 p)$. Next, the preprocessing times of Lemmas~\ref{lemma:map_matching_segment_query_if_u_and_v_are_known} and~\ref{lemma:starting_and_ending_points_data_structure_version_2} are $O(c^2 \varepsilon^{-4} \log^2(1/\varepsilon) p^2 \log^2 p)$ and $O(p^2 \log p)$ respectively. The overall preprocessing time is $O(c^2 \varepsilon^{-4} \log^2(1/\varepsilon) p^2 \log^2 p)$.

For the query time of the decision version, for each $1 \leq i \leq q$ we query the data structure in Lemma~\ref{lemma:points_in_square_lemma_version_2} to construct the set $T_i$. In total, this takes $O(q \cdot (\log^2 p + c \varepsilon^{-2}))$ time. Next, we build a directed graph on $\cup_{i=1}^q T_i$. There are $O(q \cdot c^2 \varepsilon^{-4})$ directed edges between $T_i$ and $T_{i+1}$, for $1 \leq i < q$. Computing the capacity of the directed edge, by querying Lemma~\ref{lemma:map_matching_segment_query_if_u_and_v_are_known}, takes $O(\log p + c \varepsilon^{-1})$ time. Finally, deciding whether there is a directed path from $T_1$ to $T_q$ takes $O(q \cdot c^2 \varepsilon^{-4})$ time. Therefore, the overall running time of the decision version is $O(q \cdot (\log^2 p + c^2 \varepsilon^{-4} \log p))$.

Finally, we analyse the running time of parametric search. The running time of parametric search~\cite{DBLP:journals/jacm/Megiddo83} is $O(N_p T_p + T_p T_s \log N_p)$, where $T_s$ is the running time of the sequential algorithm, $N_p$ is the number of processors for the parallel algorithm, and $T_p$ is the number of parallel steps in the parallel algorithm. The sequential algorithm is the same as the decision algorithm, so $T_s = O(q \cdot (\log^2 p + c^2 \varepsilon^{-4} \log p))$. The parallel algorithm is to simulate the decision algorithm on $N_p = q$ processors. The first two steps can be parallelised to run on~$q$ processors in $O(\log^2 p + c^2 \varepsilon^{-4} \log p)$ parallel steps. In the third step, it suffices to check if each of the edges has capacity at most~$r'$. Computing the directed path generates no additional critical values, so it does not need to be simulated by the parallel algorithm. The third step can be parallelised onto~$q$ processors to run in $O(c^2 \varepsilon^{-4})$ parallel steps. The total number of parallel steps is $T_p = O(\log^2 p + c^2 \varepsilon^{-4} \log p)$. Therefore, substituting these values into the running time of parametric search, we get that the overall running time of parametric search is $O(q \log q \cdot (\log^4 p + c^4 \varepsilon^{-8} \log^2 p))$. 
\end{proof}

\section{Lower bound for geometric planar graphs}
\label{sec:lower_bounds}

In this section, we no longer assume that the graph~$P$ is $c$-packed. The main result of this section is that for geometric planar graphs, unless SETH fails, no data structure can preprocess a graph in polynomial time to answer map matching queries in $O((pq)^{1-\delta})$ time for any~$\delta > 0$, and for any polynomial restrictions of~$p$ and~$q$. In Section~\ref{sec:lower_bound;subsec:warmup}, we construct a lower bound for the warm-up problem of Fr\'echet distance queries on trajectories. In Section~\ref{sec:lower_bound;subsec:map_matching}, we construct a lower bound for map matching queries on geometric planar graphs.

\subsection{Fr\'echet distance queries on trajectories}
\label{sec:lower_bound;subsec:warmup}

Our warm-up problem is to extend the lower bound of Bringmann~\cite{DBLP:conf/focs/Bringmann14} to Fr\'echet distance queries on trajectories. The lower bound assumes a weaker version of SETH. 

\begin{definition}[SETH$'$]
The CNF-SAT problem is to decide whether a formula $\varphi$ on $N$ variables $x_1, \ldots x_N$ and $M$ clauses~$C_1, \ldots, C_M$ has a satisfying assignment. SETH$\,'$ states that there is no $\tilde O((2-\delta)^N)$ algorithm for CNF-SAT for any $\delta > 0$, where $\tilde O$ hides polynomial factors in~$N$ and~$M$.
\end{definition}

If SETH$'$ fails, then so does SETH~\cite{DBLP:journals/tcs/Williams05}. Next, we provide a proof sketch of Theorem~1.2 in~\cite{DBLP:conf/focs/Bringmann14}.

\begin{lemma}[Theorem 1.2 in~\cite{DBLP:conf/focs/Bringmann14}] 
\label{lemma:bringmann}
Let $n$ and $m$ denote the complexities of a pair of trajectories. There is no $1.001$-approximation with running time $O((nm)^{1-\delta})$ for the Fr\'echet distance for any $\delta > 0$, unless SETH fails. This holds for any polynomial restrictions of $n$ and $m$.
\end{lemma}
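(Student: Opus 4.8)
The plan is to reproduce Bringmann's reduction from the \emph{Orthogonal Vectors} problem (OV): given sets $A=\{a_1,\dots,a_{N_1}\}$ and $B=\{b_1,\dots,b_{N_2}\}$ of vectors in $\{0,1\}^d$ with $d=N^{o(1)}$, decide whether some pair has $\langle a_i,b_j\rangle=0$. By the standard SETH-to-OV reduction (the sparsification lemma together with split-and-list, splitting the CNF variables into two possibly unequal halves), SETH$'$ implies that OV admits no $O((N_1 N_2)^{1-\delta})$ algorithm for any $\delta>0$, and this stays true in the unbalanced regime. It therefore suffices to turn an OV instance into a pair of plane curves $X$ and $Y$ whose complexities $n=|X|$ and $m=|Y|$ satisfy $n=N_1^{1+o(1)}$ and $m=N_2^{1+o(1)}$, together with a threshold, so that ``$\exists i,j:\; a_i\perp b_j$'' forces $\frechet(X,Y)\le 1$ while otherwise $\frechet(X,Y)\ge 1+c_0$ for an absolute constant $c_0$; since Bringmann's gadgets can be tuned so that $c_0>0.001$, a $1.001$-approximation then decides the OV instance.

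I would present the construction in layers. At the innermost layer are \emph{coordinate gadgets}: constant-size polygonal curves, one per coordinate $\ell$, whose Fréchet distance at threshold $1$ encodes the bit predicate ``$a_i[\ell]=0$ or $b_j[\ell]=0$''. Concatenating the $d$ of them with short connectors gives \emph{vector gadgets} that realise the predicate ``$a_i\perp b_j$''. The outer layer wraps these into an OR-structure that lets the reparametrisation of the Fréchet matching (i) choose, for the current portion of $Y$, which $a_i$-gadget of $X$ it is tested against, racing harmlessly past the others, and (ii) choose which $b_j$ the current portion represents. The delicate part is the geometry of this outer layer: it must grant the traversal exactly enough slack to realise both existential quantifiers, yet forbid it from cheating — for instance by ``slipping'' mid-gadget from one block into the next, or by passing a test that should fail. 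Bringmann enforces this with strategically placed guard vertices lying at distance slightly above $1$ from the ``wrong'' portion of the opposite curve, so that they act as barriers that no cost-$\le 1$ traversal may cross out of order. I would reproduce these barrier arguments in both directions: an orthogonal pair yields an explicit cost-$\le 1$ traversal, and in the absence of any orthogonal pair every traversal is forced to cost at least $1+c_0$.

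The quantitative conclusion is then immediate: an algorithm computing a $1.001$-approximation of $\frechet(X,Y)$ in time $O((nm)^{1-\delta})$ would solve OV in time $O\!\left((N_1 N_2)^{(1+o(1))(1-\delta)}\right)=O\!\left((N_1 N_2)^{1-\delta'}\right)$ for some $\delta'>0$, contradicting SETH$'$ and hence SETH. Tuning the split ratio of the OV instance makes $n$ and $m$ stand in any prescribed polynomial relation, which yields the ``for any polynomial restrictions of $n$ and $m$'' clause; subdividing edges, which leaves $\frechet$ unchanged, can absorb any residual lower-order mismatch between the complexities. I expect the main obstacle to be exactly the outer-layer gadget geometry of the second step: constructing barriers that simultaneously pin the traversal to the intended block-by-block alignment, preserve both existential choices, and keep the total curve complexity linear in $N_1$ and $N_2$. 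This is the technical heart of Bringmann's construction, and the part that needs the most care to state and verify.
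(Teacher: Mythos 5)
Your proposal is correct and takes essentially the same approach as the paper: both reproduce Bringmann's construction and handle the unbalanced regime $m=\Theta(n^\gamma)$ by choosing the variable-split ratio (equivalently, the OV set sizes) to match the prescribed polynomial relation. The only presentational difference is that you factor the reduction explicitly through the Orthogonal Vectors problem, whereas the paper's sketch stays with the CNF-SAT partition $V_1,V_2$ and lists of assignments $A_1,A_2$ directly, invoking Bringmann's Lemmas~3.7 and 3.9 for the gap; these are the same reduction presented with or without the OV intermediate, so the content and the quantitative conclusion coincide.
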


\begin{proof}[Proof (Sketch)]
Suppose for the sake of contradiction that there exists a positive constant $\delta$ so that there is a $1.001$-approximation with running time $O((nm)^{1-\delta})$ for the Fr\'echet distance. We summarise the main steps of Theorem~1.2 in~\cite{DBLP:conf/focs/Bringmann14}, which generalises the $m=n$ case to the $m \neq n$ case. 

Suppose $m = \Theta(n^\gamma)$ for some $\gamma > 0$. We are given a CNF-SAT instance $\varphi$ with variables $x_1, \ldots, x_N$ and clauses~$C_1, \ldots, C_M$. We partition its variables $x_1, \ldots, x_N$ into $V_1 = \{x_1,\ldots,x_\ell\}$ and $V_2 = \{x_{\ell+1},\ldots,x_N\}$, where $\ell = N/(\gamma + 1)$. Let $A_k$ be all the assignments of $V_k$ for $k \in \{1,2\}$. Using the same method as the $m=n$ case in~\cite{DBLP:conf/focs/Bringmann14}, we construct curves $P_1$ and $P_2$ so that $|P_1| = \Theta(M \cdot |A_1|)$ and $|P_2| = \Theta(M \cdot |A_2|)$. Moreover, by Lemma~3.7 and Lemma~3.9 in~\cite{DBLP:conf/focs/Bringmann14}, if $A_1 \times A_2$ contains a satisfying assignment, then $\frechet(P_1,P_2) \leq 1$, whereas if $A_1 \times A_2$ contains no satisfying assignment, then $\frechet(P_1,P_2) \geq 1.001$. 
Note that if $n = |P_1|$, then $m = |P_2| = \Theta(n^\gamma)$.

Therefore, any $1.001$-approximation of $\frechet(P_1,P_2)$ with running time $O((nm)^{1-\delta})$ yields an algorithm for CNF-SAT with running time $O((M \cdot  |A_1|)^{1-\delta} (M \cdot |A_2|)^{1-\delta}) = O(M^2 2^{(1-\delta)N})$, contradicting SETH$'$ and SETH.
\end{proof}

Next, we consider the problem of preprocessing a trajectory such that, given a query trajectory, their Fr\'echet distance can be computed efficiently. This is stated as an extremely challenging problem in Buchin~et~al.~\cite{DBLP:conf/esa/BuchinHOSSS22}. In Lemma~\ref{lemma:frechet_distance_query_lower_bound} we show that preprocessing essentially does not help with computing the Fr\'echet distance. In particular, we show that even with polynomial preprocessing time, one cannot obtain a truly subquadratic query time for computing the Fr\'echet distance. We prove this by considering the offline version of the data structure problem, in a similar fashion to Rubinstein~\cite{DBLP:conf/stoc/Rubinstein18}, Driemel and Psarros~\cite{DBLP:journals/corr/driemelpsarros} and Bringmann~et~al.~\cite{DBLP:conf/soda/BringmannDNP22}. 

\begin{lemma}
\label{lemma:frechet_distance_query_lower_bound}
Let $n$ denote the complexity of a trajectory. There is no data structure that can be constructed in $\poly(n)$ time, that when given a query trajectory of complexity $m$, can answer $1.001$-approximate Fr\'echet distance queries in $O((nm)^{1-\delta})$ query time for any $\delta > 0$, unless SETH fails. This holds for any polynomial restrictions of $n$ and $m$.
\end{lemma}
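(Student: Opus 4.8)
The plan is to show that any such data structure would decide CNF-SAT in time $\poly(M)\cdot(2-\epsilon)^N$ for some $\epsilon>0$, contradicting SETH$'$ and hence SETH~\cite{DBLP:journals/tcs/Williams05}. Following the offline approach of Rubinstein~\cite{DBLP:conf/stoc/Rubinstein18}, Driemel and Psarros~\cite{DBLP:journals/corr/driemelpsarros} and Bringmann et al.~\cite{DBLP:conf/soda/BringmannDNP22}, I would run the reduction underlying Lemma~\ref{lemma:bringmann} by \emph{preprocessing one of its two curves and submitting the other as a single query}; the only extra care needed is that the split of the variables be unbalanced, so that the preprocessed curve is short enough for $\poly(\cdot)$ preprocessing to stay subexponential in $N$.

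In detail, suppose for contradiction that for some fixed $\delta>0$ there is a data structure with preprocessing time $O(n^{\rho})$, where $\rho$ is a fixed constant, answering $1.001$-approximate Fr\'echet distance queries in $O((nm)^{1-\delta})$ time. Given a CNF-SAT formula $\varphi$ with $N$ variables and $M$ clauses, I would fix a split parameter $\ell$ and, exactly as in the construction of Bringmann~\cite{DBLP:conf/focs/Bringmann14} used for Lemma~\ref{lemma:bringmann}, partition the variables into $V_1$ of size $\ell$ and $V_2$ of size $N-\ell$ and build curves $P_1$ and $P_2$ with $|P_1|=\Theta(M\,2^{\ell})$ and $|P_2|=\Theta(M\,2^{N-\ell})$ such that $\frechet(P_1,P_2)\le 1$ if $\varphi$ is satisfiable and $\frechet(P_1,P_2)\ge 1.001$ otherwise.

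The heart of the argument is to build the data structure on the \emph{short} curve $P_1$, so that $n=\Theta(M\,2^{\ell})$, and then to query it with $P_2$, so that $m=\Theta(M\,2^{N-\ell})$; the returned $1.001$-approximation of $\frechet(P_1,P_2)$ lies on a definite side of the gap and thus decides satisfiability of $\varphi$. The total running time is
\[
    O(n^{\rho}) + O\big((nm)^{1-\delta}\big) \;=\; \poly(M)\cdot\big(2^{\rho\ell} + 2^{(1-\delta)N}\big),
\]
and choosing $\ell=\lceil N/(\rho+1)\rceil$ gives $2^{\rho\ell}\le 2^{(1-1/(\rho+1))N+O(1)}$, so the whole computation runs in $\poly(M)\cdot 2^{(1-\delta'')N}$ time with $\delta''=\min\{\delta,\tfrac{1}{\rho+1}\}-o(1)>0$; since SETH$'$ tolerates polynomial factors in $N$ and $M$, this contradicts SETH$'$ and hence SETH. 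Note that the hard instances produced this way satisfy $\log n=\Theta(N)=\Theta(\log m)$, hence $n=m^{\Theta(1)}$, so the lower bound is not an artefact of $n$ and $m$ being far apart; a prescribed relation $m=\Theta(n^{\gamma})$ can likewise be targeted by taking $\ell=N/(\gamma+1)$ and, when that would make the preprocessed curve too long, first enumerating the $2^{e}$ assignments of a constant fraction of $V_1$ and performing one build-and-query round per assignment, whose costs still sum to $\poly(M)\cdot 2^{(1-\Omega(1))N}$.

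The main obstacle will be controlling the preprocessing cost against the $2^N$ barrier. The reduction of Lemma~\ref{lemma:bringmann} is cleanest with an even split, but then both curves have complexity $2^{\Theta(N)}$ and preprocessing $P_1$ costs $\poly\big(2^{\Theta(N)}\big)$, which is not subexponential in $N$ and destroys the contradiction. Making the preprocessed curve short via the unbalanced split (and, if a specific polynomial relation between $n$ and $m$ must be hit, via the enumerated block) while still forcing the single query to probe every one of the $2^N$ variable assignments is the step that has to be handled carefully; everything else is a direct re-use of Lemma~\ref{lemma:bringmann} together with the standard offline-to-online trade-off.
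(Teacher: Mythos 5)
Your high-level plan is the right one and your main argument is sound: the unbalanced split $|V_1|=\ell$, $|V_2|=N-\ell$, the single build on the short side, the single query with the long side, and the choice $\ell=\lceil N/(\rho+1)\rceil$ all work as you describe, and this is essentially the paper's argument in the regime where the query is (polynomially) at least as long as the preprocessed curve. However, the last sentence of your proposal does not repair the case you flag, and that case is needed for the ``any polynomial restriction'' part of the claim.

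The trouble is for $\gamma \le \rho-1$, i.e.\ when the prescribed relation $m=\Theta(n^\gamma)$ forces the query to be much \emph{shorter} than the preprocessed curve. To hit that relation with a single build-and-query you must take $\ell=N/(\gamma+1)$, which makes the preprocessing cost $M^\rho 2^{\rho N/(\gamma+1)}$ superexponential whenever $\rho>\gamma+1$. Your proposed fix --- enumerate $2^e$ assignments of a constant fraction of $V_1$ and run ``one build-and-query round per assignment'' --- shrinks the preprocessed curve to $\Theta(M\,2^{\ell-e})$ while leaving the query curve at $\Theta(M\,2^{N-\ell})$, so the per-round ratio becomes $(N-\ell)/(\ell-e)>\gamma$; it no longer respects the prescribed restriction. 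And if you try to pick $\ell$ and $e$ to force $(N-\ell)/(\ell-e)=\gamma$ while keeping both $2^{e}\cdot n^{\rho}$ and $2^{e}\cdot(nm)^{1-\delta}$ subexponential, a short calculation shows the required inequalities collapse to $\ell>N$ precisely when $\gamma\le\rho-1$, so there is no valid choice. In short, enumerating on the \emph{preprocessed} side never helps: each enumerated assignment needs a fresh build, so preprocessing is never amortised.

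The paper resolves this by the genuinely ``offline'' step you cite but do not actually use: it fixes the split $\ell=N/(2\alpha+1)$ based only on the preprocessing exponent, builds the data structure on $P_1$ \emph{once}, and partitions the assignments of $V_2$ into $K=\Theta(|A_1|^{2\alpha-\gamma})$ blocks, issuing $K$ separate queries of complexity $\Theta(M|A_1|^{\gamma})$ against the single preprocessed structure. The $K$ queries together still probe all $2^N$ assignments, the $K$ factor is absorbed because each query is now short, and the one-time preprocessing stays subexponential. That is exactly the ``offline'' trick of Rubinstein / Driemel--Psarros / Bringmann et al.\ that your proposal invokes in name: to fix your argument you should enumerate (or partition) on the \emph{query} side $V_2$ and reuse a single preprocessing, rather than enumerating on $V_1$ and rebuilding each time. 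The paper also cleanly handles the opposite regime ($\gamma\ge 2\alpha$) by noting that the preprocessing cost is then dominated by the query cost, so the whole build-plus-one-query pipeline is already an $O((nm)^{1-\delta})$ offline algorithm and contradicts Lemma~\ref{lemma:bringmann} directly.
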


\begin{proof}
Suppose for the sake of contradiction that there exists positive constants $\alpha$ and $\delta$ so that one can construct a data structure in $O(n^\alpha)$ preprocessing time to answer $1.001$-approximate Fr\'echet distance queries with a query time of $O((nm)^{1-\delta})$. 

Suppose $m = \Theta(n^\gamma)$ for some $\gamma > 0$. We take two cases. In the first case, $\gamma \geq 2 \alpha$. Given a pair of trajectories with complexities $n$ and $m$, we can preprocess the first trajectory in $O(n^\alpha)$ time, and query a $1.001$-approximation of its Fr\'echet distance with the second trajectory in $O((nm)^{1-\delta})$ time. But $O(n^\alpha) = O(m^{1/2})$, so the overall running time is $O((nm)^{1-\delta})$. This contradicts Lemma~\ref{lemma:bringmann}.

In the second case, $\gamma \leq 2 \alpha$. We follow the same steps as Lemma~\ref{lemma:bringmann}. We are given a CNF-SAT instance $\varphi$ with variables $x_1,\ldots,x_N$ and clauses~$C_1,\ldots,C_M$. We partition its variables $x_1,\ldots,x_N$ into $V_1 = \{x_1,\ldots,x_\ell\}$ and $V_2 = \{x_{\ell+1},\ldots,x_N\}$, where $\ell = N /(2\alpha + 1)$. Let $A_k$ be all the assignments of $V_k$ for $k \in \{1,2\}$. Note that if $n = |A_1|$, then $m = |A_2| = \Theta(n^{2\alpha})$. Partition the set~$A_2$ into subsets $B_1, \ldots, B_K$ so that $|B_i| = \Theta(|A_1|^\gamma)$, for all $1 \leq i \leq K$, and $K = O(|A_1|^{2 \alpha - \gamma})$. Note that $A_1 \times A_2$ contains a satisfying assignment if and only if there exists $1 \leq i \leq K$ so that $A_1 \times B_i$ contains a satisfying assignment. Using the same method as the $m=n$ case in~\cite{DBLP:conf/focs/Bringmann14}, we construct curves $P_1$ and $Q_i$ so that $|P_1| = \Theta(M |A_1|)$ and $|Q_i| = \Theta(M |B_i|)$ for $1 \leq i \leq K$. Moreover, by Lemma~3.7 and Lemma~3.9 in~\cite{DBLP:conf/focs/Bringmann14}, if $A_1 \times B_i$ contains a satisfying assignment, then $\frechet(P_1,Q_i) \leq 1$, whereas if $A_1 \times B_i$ contains no satisfying assignment, then $\frechet(P_1,Q_i) \geq 1.001$. Note that if $n = |P_1|$, then $m = |Q_i| = \Theta(n^\gamma)$. 

Therefore, to decide if there is a satisfying assignment for the CNF-SAT instance~$\varphi$, it suffices to query a $1.001$-approximation of $\frechet(P_1,Q_i)$ for all $1 \leq i \leq K$. We preprocess the trajectory $P_1$ in $O((M |A_1|)^\alpha) = O(M^\alpha 2^{\alpha N/(2\alpha+1)}) = O(M^\alpha 2^{N/2})$ time. We answer all $K$ queries in time
\[
    \begin{array}{rcl}
        O(\sum_{i=1}^K(M|A_1|)^{1-\delta}(M|B_i|)^{1-\delta}) 
            &=& O(KM^2|A_1|^{1-\delta} |A_1|^{(1-\delta)\gamma}) \\
            &=& O(M^2 |A_1|^{(1-\delta) + (1-\delta)\gamma + 2 \alpha - \gamma}) \\
            &=& O(M^2 |A_1|^{(1-\delta) + 2 \alpha}) \\
            &=& O(M^2 2^{N(1 + 2\alpha - \delta)/(1+2\alpha)}) \\
            &=& O(M^2 2^{(1 -\delta/(1+2\alpha))N}).
    \end{array}
\]
Putting this together, we yield an algorithm for CNF-SAT with running time 
\[
    O(M^\alpha 2^{N/2} + M^2 2^{(1 - \frac \delta {1+2\alpha})N}),
\]
where $\alpha$ and $\delta$ are constants, contradicting SETH$'$and SETH.
\end{proof}

An open problem is whether one can adapt the lower bound of Buchin~et~al.~\cite{DBLP:conf/soda/BuchinOS19} to rule out approximation factors between $1.001$ and $3$. In particular, one would need to extend their construction for pairs of trajectories with an imbalanced number of vertices to rule out approximation algorithms with running time $O((nm)^{1-\delta})$ for all $\delta > 0$. 

Another open problem is whether one can extend the lower bound to range searching queries. Given a database of $k$ trajectories with $m$ vertices each and a query trajectory with $n$ vertices, Baldus and Bringmann~\cite{DBLP:conf/gis/BaldusB17} conjecture that a $O((kmn)^{1-\delta})$ time algorithm for range searching would falsify SETH.  

\subsection{Map matching queries on geometric planar graphs}
\label{sec:lower_bound;subsec:map_matching}

We return to the map matching problem. The main result of this section is that there is no data structure that can preprocess a geometric planar graph in polynomial time to answer map matching queries in truly subquadratic time. To build towards this result, we first show that Alt~et~al.'s~\cite{DBLP:journals/jal/AltERW03} $O(pq \log p)$ time algorithm for Problem~\ref{problem:map_matching} is optimal up to lower-order factors, conditioned on unbalanced OVH.

\begin{definition}[OVH]
The OV problem is to decide whether the sets $A, B \subseteq \{0,1\}^d$ contain a pair of binary vectors $(a,b) \in A \times B$ so that~$a$ and~$b$ are orthogonal. Let $n = |A|$ and $m = |B|$. OVH states that there is no $\tilde O((nm)^{1-\delta})$ time algorithm for OV for any $\delta > 0$, where $\tilde O$ hides polynomial factors in~$d$. This holds for any polynomial restrictions of $n$ and $m$.
\end{definition}

If OVH fails, then so does SETH~\cite{DBLP:journals/tcs/Williams05}. We use OVH to prove our lower bound for Problem~\ref{problem:map_matching}. For constructing our graph and our trajectory, we use the notation $Q = \bigcirc_{i=1}^q a_i = a_1 \circ \ldots \circ a_q$ to denote the polygonal curve~$Q$ obtained by linearly interpolating between the vertices $a_1, \ldots, a_q$. 

\begin{lemma}
\label{lemma:map_matching_lower_bound}
Let~$P$ be a geometric planar graph of complexity~$p$ and~$Q$ be a trajectory of complexity~$q$. There is no $2.999$-approximation with running time $O((pq)^{1-\delta})$ for computing $\min_{\pi} \frechet(\pi,Q)$ for any $\delta > 0$, unless SETH fails. This holds for any polynomial restrictions of~$p$ and~$q$. 
\end{lemma}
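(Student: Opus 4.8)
The plan is to reduce from Orthogonal Vectors, so that a fast $2.999$-approximate map matching algorithm on geometric planar graphs would contradict OVH and hence SETH. Given an OV instance with sets $A=\{a_1,\dots,a_n\}\subseteq\{0,1\}^d$ and $B=\{b_1,\dots,b_m\}\subseteq\{0,1\}^d$, I would build a geometric planar graph $P$ that encodes $A$ and a single trajectory $Q$ that encodes $B$ such that $\min_\pi \frechet(\pi,Q)\le 1$ if some pair $(a_i,b_j)$ is orthogonal, and $\min_\pi\frechet(\pi,Q)\ge 3$ otherwise. A $2.999$-approximation of $\min_\pi\frechet(\pi,Q)$ then distinguishes the two cases.

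\textbf{The gadgets and how the disjunctions are realised.} For the encoding I would reuse Bringmann's coordinate gadgets~\cite{DBLP:conf/focs/Bringmann14} (the ones behind Lemma~\ref{lemma:bringmann}), which already have the property that the ``vector curve'' of a single $a_i$ can be matched with Fr\'echet distance at most $1$ against the full concatenation of the vector curves of $b_1,\dots,b_m$ exactly when $a_i$ is orthogonal to some $b_j$; the reparametrisation simply waits on the $b$-blocks it does not match and synchronises with the one it does. The \emph{outer} disjunction over $a\in A$ is handled by the graph rather than by an OR-gadget: $P$ consists of a fixed ``rail'' polyline together with $n$ pairwise vertex-disjoint detours $G_1,\dots,G_n$, where $G_i$ is the coordinate gadget of $a_i$, attached between two common rail vertices so that a path $\pi$ in $P$ can run along the rail, take exactly one detour $G_i$, and continue. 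Hence $\min_\pi\frechet(\pi,Q)$ automatically minimises over the choice of $i$. The trajectory $Q$ follows the rail and in the middle performs the ``$B$-dance'', i.e.\ the concatenation $\bigcirc_{j=1}^m$ of the $b_j$-gadgets. Matching a path $\pi$ that uses detour $G_i$ against $Q$ decomposes into matching the rail portions (distance $0$) and matching $G_i$ against the $B$-dance, which is at most $1$ iff $a_i\perp b_j$ for some $j$; adding short ``anchor'' segments at the two ends of $Q$ forces any path with small Fr\'echet distance to $Q$ to span the whole rail, so it must be of the intended form. To turn Bringmann's $1$-versus-$(1+\varepsilon)$ gap into a $1$-versus-$3$ gap I would rescale the coordinate gadget so that a single mismatched coordinate already forces two matched points to lie at distance at least $3$.

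\textbf{Complexity bookkeeping.} Each coordinate gadget has $O(d)$ vertices, so $P$ has complexity $p=\Theta(nd)$ and $Q$ has complexity $q=\Theta(md)$. Since OV arising from SETH may be taken with $d=\Theta(\log(nm))$ and OVH absorbs polynomial factors in $d$, an algorithm running in $O((pq)^{1-\delta})=\tilde O((nm)^{1-\delta})$ would decide OV in $\tilde O((nm)^{1-\delta})$ time, contradicting OVH and hence SETH. The ``any polynomial restrictions of $p$ and $q$'' clause follows because OVH already holds for any polynomial restriction of $n$ and $m$ and our reduction alters the two sizes only by the polylogarithmic factor $d$; alternatively one can pad $A$ or $B$ with dummy all-ones vectors.

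\textbf{Main obstacle.} The delicate point is planarity. Because $Q$ is a single fixed curve while the optimal path may use \emph{any} detour $G_i$, every $G_i$ must be matchable against $Q$, which forces all the $G_i$ (and the $B$-dance of $Q$) to lie in a region of bounded diameter, yet they must remain pairwise non-crossing straight-line subgraphs. Realising Bringmann's coordinate gadgets as $n$ vertex-disjoint, non-crossing detours nested between two common rail points, without disturbing the ``distance $\le 1$ iff orthogonal'' behaviour, is the crux; I would lay the detours out as nested monotone polylines inside a thin rectangular corridor and verify that the Fr\'echet matching is unaffected by the nesting. Secondary issues are the gap amplification to factor $3$ and checking that no unintended path (a partial path, or one taking several detours) can achieve Fr\'echet distance below $3$ in a no-instance.
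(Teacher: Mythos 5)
Your overall framing is right: reduce from OV, use the graph structure (disjoint detours) to realise the outer disjunction over $a_i\in A$, use anchor points to force any low-cost path to span the whole rail, and use coordinate gadgets with a $1$-vs-$3$ gap (these are exactly the $1_A,0_A,0_B,1_B$ curves used in the paper, adapted from Buchin~et~al.~\cite{DBLP:conf/soda/BuchinOS19}). Your complexity bookkeeping ($p=\Theta(nd)$, $q=\Theta(md)$) and your handling of the polynomial-restriction clause are also fine.

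However, there is a genuine gap in how you realise the \emph{inner} disjunction over $j$ (the choice of which $b_j$ the selected $a_i$ should be matched against). You write that ``the reparametrisation simply waits on the $b$-blocks it does not match and synchronises with the one it does,'' and that the rail portions match at distance~$0$ while the single detour $G_i$ is matched against the full $B$-dance. This cannot work as stated. The $B$-dance $W_1\circ\cdots\circ W_q$ sits in the middle of $Q$ and each block $W_\ell$ sweeps a horizontal range of width roughly $36d$; no single point of a detour $G_i$ of complexity $\Theta(d)$ is within distance $1$ of all of $W_\ell$, so a Fr\'echet matching cannot ``wait'' while $Q$ traverses the unmatched blocks. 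You need explicit structure in $P$ that physically traverses something absorbing each unwanted $W_\ell$ at distance $\le 1$. The paper provides exactly this: the lower rail $U$ and upper rail $V$ each contain a \emph{loop} made of all-$0_A$ blocks (the curves $R$ and $S$), which can be re-traversed $j-1$ times before and $q-j$ times after the detour $T_i$; each loop traversal synchronises against one $W_\ell$ because $\frechet(0_A, 0_B)=\frechet(0_A,1_B)=1$. Without those loops your construction has no way to consume the $m-1$ unmatched $B$-blocks, and the YES-direction fails. Relatedly, you identify planarity (nesting the $n$ detours) as ``the crux,'' but in the paper's construction planarity is handled straightforwardly by stacking the $T_i$ at distinct small vertical offsets $3ih$; the genuinely delicate piece is the re-traversable rail loops, which your proposal omits entirely. (A secondary consequence: the NO-direction argument also has to rule out paths that re-traverse loops or take multiple detours; the paper handles this by showing any sub-$3$-cost path must use exactly one $T_i$ and that its coordinate blocks are forced to synchronise with the blocks of some $W_j$.)
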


\begin{proof}
Suppose for the sake of contradiction that there exists a positive constant $\delta$ so that there is a $2.999$-approximation with running time $O((pq)^{1-\delta})$ for Problem~\ref{problem:map_matching}.

We are given an OV instance $A, B \subseteq \{0,1\}^d$. Let $p = |A|$ and $q = |B|$, where there may be any polynomial restriction of~$p$ and~$q$. First, we will construct a graph~$P$ of complexity $O(dp)$ and a trajectory~$Q$ of complexity $O(dq)$. Then we will show that a $2.999$-approximation of $\min_{\pi} \frechet(\pi, Q)$ yields an $O((pq)^{1-\delta})$ time algorithm for OV.

Let $h$ be a small constant, which we will choose later on in the proof. Inspired by Buchin~et~al.~\cite{DBLP:conf/soda/BuchinOS19} and Bringmann~et~al.~\cite{DBLP:conf/soda/BringmannDNP22}, we define the following polygonal curves. See Figure~\ref{fig:lower_bound_1}. It is straightforward to verify that $\frechet(0_A,0_B) = \frechet(0_A,1_B) = \frechet(1_A,0_B) = 1$, and $\frechet(1_A,1_B) = 3$. 

\[
    \begin{array}{rcl}
        1_A &:=& (0,0) \circ (12,0) \circ (12,h) \circ (6,h) \circ (6, 2h) \circ (18,2h) \\
        0_B &:=& (0,0) \circ (13,0) \circ (13,h) \circ (5,h) \circ (5, 2h) \circ (18,2h) \\
        0_A &:=& (0,0) \circ (14,0) \circ (14,h) \circ (4,h) \circ (4, 2h) \circ (18,2h) \\
        1_B &:=& (0,0) \circ (15,0) \circ (15,h) \circ (3,h) \circ (3, 2h) \circ (18,2h)
    \end{array}
\]

\begin{figure}[ht]
    \centering
    \includegraphics{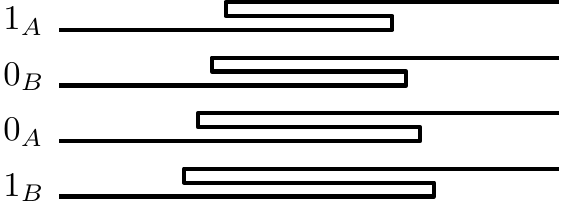}
    \caption{The polygonal curves $1_A$, $0_B$, $0_A$ and $1_B$.}
    \label{fig:lower_bound_1}
\end{figure}

We use $0_A$ and $1_A$ to construct~$P$. We start by constructing the curves $R$, $S$ and $T_i$.
\[
    \def\arraystretch{1.1}
    \begin{array}{rcl}
        R &:=& \bigcirc_{k=1}^d R_k \quad\text{where}\quad R_k := 0_A + (18k,0),\\
        S &:=& \bigcirc_{k=1}^d S_k \quad\text{where}\quad S_k := 0_A + (18k,7ph),\\
        T_i &:=& \bigcirc_{j=1}^d T_{i,k} \quad\text{where}\quad T_{i,k} := A[i][k]_A + (18k,3ih) \quad\text{for all}\quad 1 \leq i \leq p, \quad 1 \leq k \leq d\\
    \end{array}
\]
where $A[i][k]$ is the $k^{th}$ coordinate of the $i^{th}$ vector in~$A$, and $A[i][k]_A$ is either $0_A$ or $1_A$ depending on whether $A[i][k]$ is $0$ or $1$, and $+(x,y)$ translates the curve horizontally by~$x$ and vertically by~$y$. See Figure~\ref{fig:lower_bound_2}.

\begin{figure}[ht]
    \centering
    \includegraphics{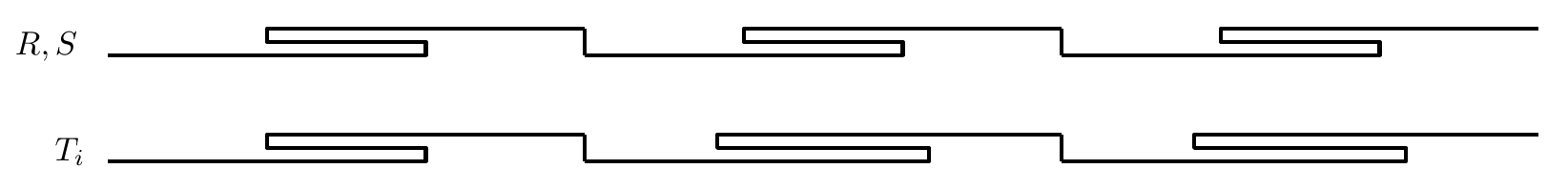}
    \caption{The curves~$R$ and~$S$ are obtained by concatenating translated versions of~$0_A$. The curves~$T_i$ are obtained by concatenating translated versions of~$0_A$ and~$1_A$.}
    \label{fig:lower_bound_2}
\end{figure}

Next, we use $R$ and $S$ to construct curves~$U$ and~$V$. Note that~$U$ and~$V$ each contain a loop.
\[
    \begin{array}{rcl}
        U &:=& (0,-18) \circ (0,0) \circ R \circ (36d,3h) \circ (0,3h) \circ (0,0), \\
        V &:=& (0,6ph) \circ S \circ (36d,6ph) \circ (0,6ph) \circ (0,18),
    \end{array}
\]

See Figure~\ref{fig:lower_bound_3}, left. Finally, we connect the curves $T_i$,~$U$ and~$V$ to obtain the graph~$P$. For $1 \leq i \leq p$, we connect~$U$ to~$T_i$ with the edge $(0,3h) \circ (18,3ih)$. For $1 \leq i \leq p$, we connect~$T_i$ to~$V$ with the edge $(18d,3ih+2h) \circ (36d, 6ph)$. We take the union of $T_i$,~$U$,~$V$, and these $2p$ connections to obtain the graph~$P$, completing its construction. See Figure~\ref{fig:lower_bound_3}, right. It is straightforward to verify that~$P$ is connected and planar, and $|P| = O(dp)$. 

\begin{figure}[ht]
    \centering
    \includegraphics[width=\textwidth]{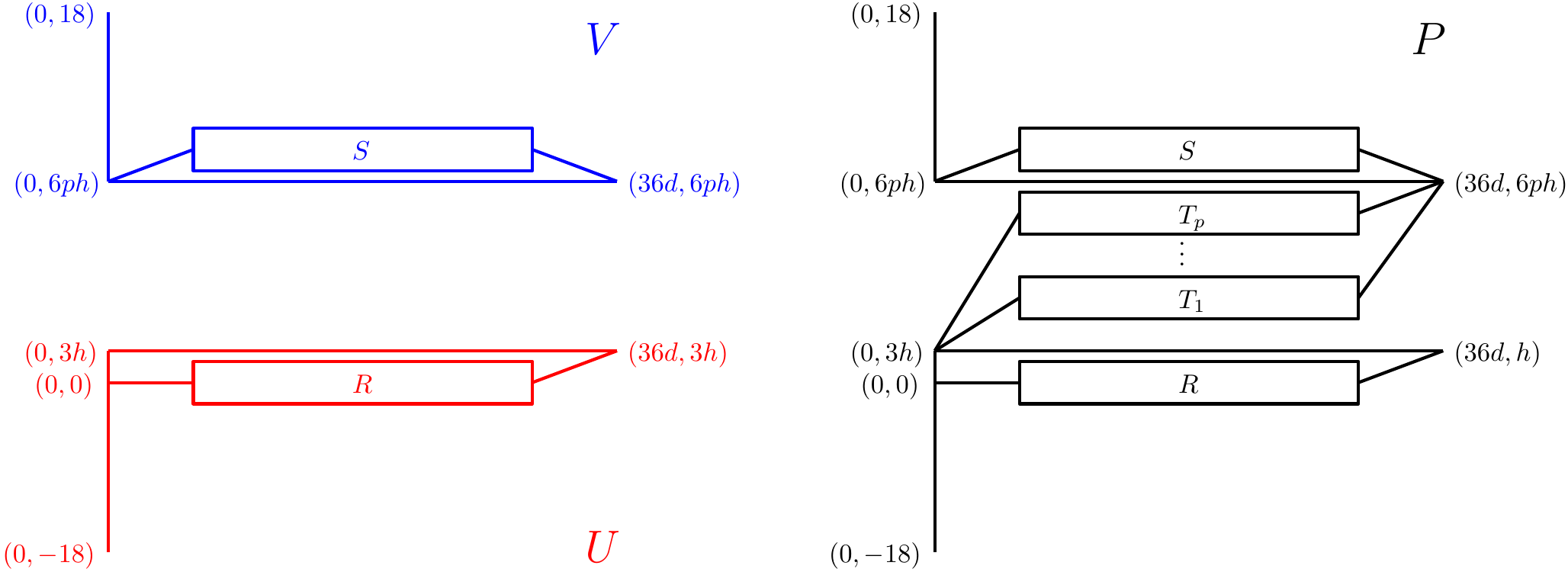}
    \caption{(Left) The lower curve~$U$ in red, the upper curve~$V$ in blue. (Right) The graph~$P$ obtained by connecting~$U$,~$V$ and $T_i$.}
    \label{fig:lower_bound_3}
\end{figure}

Now, we use $0_B$ and $1_B$ to construct~$Q$.
\[
    \begin{array}{rcl}
        W_j &:=& \bigcirc_{j=1}^d W_{j,k} \quad\text{where}\quad W_{j,k} := B[j][k]_B + (18k,0) \quad\text{for all}\quad 1 \leq j \leq q,\quad 1\leq k \leq d,\\
        X &:=& \bigcirc_{j=1}^q X_j \quad\text{where}\quad X_j := (0,0) \circ W_j \circ (36d,0) \circ (0,3h) \quad\text{for all}\quad 1 \leq j \leq q,\\
        Q &:=&  (0,-18) \circ X \circ (0,18).
    \end{array}
\]

Note that $B[j][k]$ is the $k^{th}$ coordinate of the $j^{th}$ vector in~$B$, where $B[j][k]_B$ is either $0_B$ or $1_B$ depending on whether $B[j][k]$ is $0$ or $1$, and $+(x,y)$ translates the curve horizontally by~$x$ and vertically by~$y$. This completes the construction of~$Q$. See Figure~\ref{fig:lower_bound_4}. It is straightforward to verify that $|Q| = O(dq)$. 

\begin{figure}[ht]
    \centering
    \includegraphics[width=\textwidth]{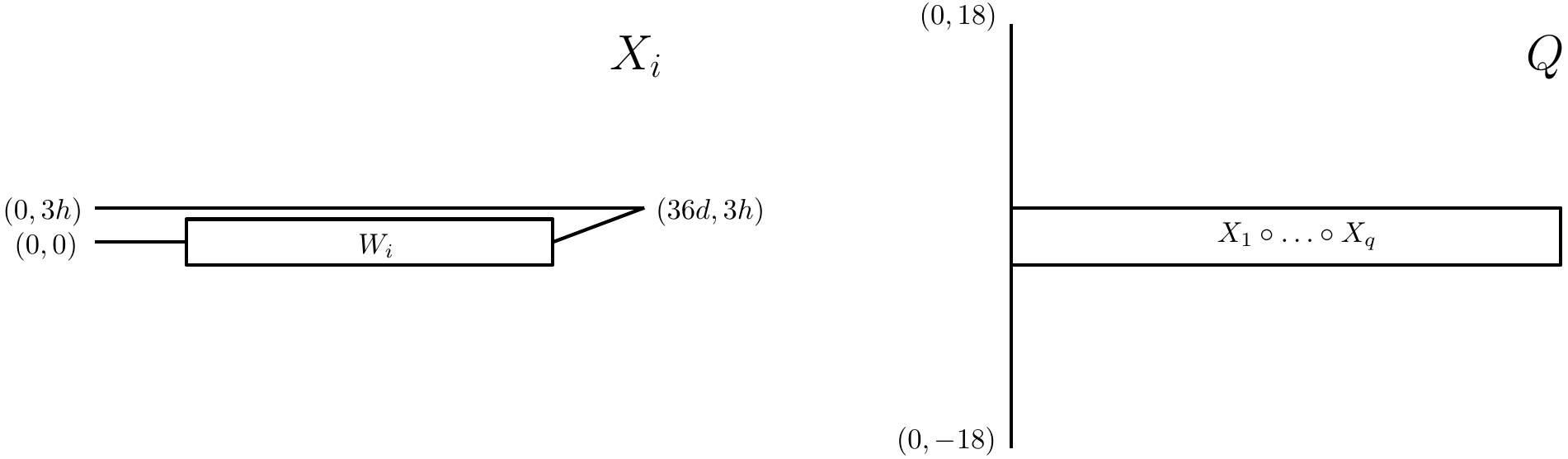}
    \caption{(Left) Curve $X_i$ is obtained by concatenating $(0,0)$, $W_i$, $(36d,3h)$ and $(0,3h)$. (Right) Curve~$Q$ is obtained by concatenating $(0,-18)$, $X_i$ for $1 \leq i \leq q$, and $(0,18)$.}
    \label{fig:lower_bound_4}
\end{figure}

We will show that if our OV instance $A,B$ is a YES-instance, then $\min_{\pi} \frechet(\pi,Q) \leq 1.001$. Suppose that $A[i]$ and $B[j]$ are orthogonal. We will construct a path $\pi \in P$ with $\frechet(\pi,Q) \leq 1.001$, which we define as follows.
\[
    \begin{array}{rclcl}\
        Y &:=& \bigcirc_{\ell=1}^{j-1} Y_\ell \quad\text{where}\quad Y_\ell := (0,0)  \circ R \circ (36d,3h) \circ (0,3h) \quad\text{for all}\quad 1 \leq \ell \leq j-1,\\
        Z &:=& \bigcirc_{\ell=j+1}^q Z_\ell \quad\text{where}\quad Z_\ell := (0,6ph) \circ S \circ (36d,6ph) \circ (0,6ph) \quad\text{for all}\quad j+1 \leq \ell \leq q,\\
        \pi &:=& (0,-18) \circ Y \circ T_i \circ Z \circ (0,18),
    \end{array}
\]

It is straightforward to verify that $\pi$ is a path of~$P$. Next, we provide a matching of~$\pi$ and~$Q$ with Fr\'echet distance at most $1.001$, assuming $h$ is sufficiently small.
\[
    \begin{array}{rclcrcl}
        \pi &=& (0,-18) \circ (0,0)
            &\qquad& 
                Q &=& (0,-18) \circ (0,0) 
            \\
        && \circ \bigcirc_{\ell=1}^{j-1} Y_\ell
            &\qquad&
                && \circ \bigcirc_{\ell=1}^{j-1} X_{\ell}
            \\
        && \circ \ T_i
            &\qquad&
                && \circ \ X_j
            \\
        && \circ \bigcirc_{\ell=j+1}^q Z_\ell
            &\qquad&
                && \circ \bigcirc_{\ell=j+1}^{q} X_{\ell}
            \\
        && \circ \ (0,6ph) \circ (0,18)
            &\qquad&
                && \circ \ (0,3h) \circ (0,18)
    \end{array}
\]

It suffices to show that $\frechet(Y_\ell, X_\ell) \leq 1$, $\frechet(T_i,X_j) \leq 1$ and $\frechet( Z_\ell,X_\ell) \leq 1.001$, for a sufficiently small choice of $h$. This is equivalent to showing that $\frechet(R,W_\ell) \leq 1$, $\frechet(T_i,W_j) \leq 1$, and $\frechet(S,W_\ell) \leq 1.001$. We traverse these pairs of trajectories synchronously. For $1 \leq k \leq d$, we have $R_k = 0_A + (18k,0)$, $W_{j,k} = B[j][k]_B + (18k,0)$, $T_{i,k} = A[i][k]_A + (18k,0)$ and $S_k = 0_A + (18k,7ph)$. Since $\frechet(0_A,B[j][k]_B) \leq 1$, we have $\frechet(R_k,W_{j,k}) \leq 1$. Putting this together for $1 \leq k \leq d$, we have $\frechet(R,W_\ell) \leq 1$. Also, we have $\frechet(A[i][k]_A,B[j][k]_B) \leq 1$ for all $1 \leq k \leq d$, since $A[i]$ and $B[k]$ are orthogonal. Therefore, $\frechet(T_{i,k},W_{j,k}) \leq 1$, and putting this together for $1 \leq k \leq d$, we have $\frechet(T_i,W_j) \leq 1$. Finally, since $\frechet(0_A,B[j][k]_B) \leq 1$, we have $\frechet(S_k,W_{j,k}) \leq 1 + 7ph$. We can obtain that $1 + 7ph \leq 1.001$ by setting $h = 0.0001/p$. Putting this together for $1 \leq k \leq d$, we have $\frechet(S,W_\ell) \leq 1.001$. To summarise, we have $\frechet(Y_\ell, X_\ell) \leq 1$, $\frechet(T_i,X_j) \leq 1$ and $\frechet( Z_\ell,X_\ell) \leq 1.001$, so $\frechet(\pi, Q) \leq 1.001$ as required.

We show that if our OV instance $A,B$ is a NO-instance, then $\min_{\pi} \frechet(\pi,Q) \geq 3$. Suppose for the sake of contradiction that $A,B$ is a NO-instance but there exists $\pi \in P$ so that $\frechet(\pi,Q) < 3$. First, note that $\pi$ must start at $(0,-18)$ and end at $(0,18)$ since no other vertices in~$P$ can match to the start and end points of~$Q$. Note that $(0,-18) \in U$ and $(0,18) \in V$, and that any path~$\pi$ from~$U$ to~$V$ must pass through one of the curves $T_i$. Without loss of generality, let $T_i$ be a subcurve of~$\pi$. Consider the point $(22,3ih) \in T_i \subset \pi$. This point must match to some point in~$Q$ that is not on the edges $(0,-18) \circ (0,0)$ or $(0,3h) \circ (0,18)$. Therefore, there exists some $j$ so that $(22,3ih) \in \pi$ matches to a point on $X_j$. The point $(22,3ih)$ cannot match to any of the edges $(0,0) \circ (18,0)$, $(18(d+1),2h) \circ (36d,0)$, $(36d,0) \circ (0,3h)$ or $(0,3h) \circ (0,0)$ on $X_j$. Therefore, the point $(22,3ih) \in T_i$ that matches to a point on $W_j$. As the path $W_j \subset Q$ is traversed, the path $\pi$ must continue to traverse the path along $T_i$, since $T_i$ is an isolated path that only connects to the rest of~$P$ at its endpoints. Therefore, $T_i$ and $W_j$ are traversed simultaneously. Specifically, the subcurves~$T_{i,k} \subset T_i$ and~$W_{j,k} \subset W_j$ are traversed simultaneously, since no points on~$T_{i,k}$ can match to points on~$W_{j,k'}$ for all~$k \neq k'$. This implies $\frechet(\pi,Q) \geq \frechet(T_{i,k},W_{j,k})$ for all $1 \leq k \leq d$. 

Finally, we use the fact that $A,B$ is a NO-instance to show that $\frechet(T_{i,k},W_{j,k}) = 3$ for some $1 \leq k \leq d$. Since $A,B$ is a NO-instance, $A[i]$ and $B[j]$ are not orthogonal. Therefore, there exists a $k$ so that $A[i][k] = B[j][k] = 1$. Therefore, $\frechet(A[i][k]_A,B[j][k]_B) = 3$. Since $T_{i,k} = A[i][k]_A + (18k,3ih)$ and $W_{j,k} = B[j][k]_B + (18k,0)$, we have $\frechet(A[i][k]_A,B[j][k]_B) = 3 + 3ih \geq 3$. Therefore, $\frechet(\pi,Q) \geq \frechet(T_{i,k},W_{j,k}) > 3$, contradicting the fact that $\frechet(\pi,Q) < 3$. Therefore, if our OV instance $A,B$ is a NO-instance, then $\min_{\pi} \frechet(\pi,Q) \geq 3$ as required.

To summarise, we can decide if $A,B$ is a YES-instance or a NO-instance by deciding whether $\min_{\pi} \frechet(\pi,Q) \leq 1.001$ or $\min_{\pi} \frechet(\pi,Q) \geq 3$. Therefore, any $2.999$-approximation with running time $O((pq)^{1-\delta})$ yields an algorithm for OV with running time $O(d^2(pq)^{1-\delta})$, where~$p = |A|$ and~$q = |B|$. This contradicts OVH and SETH.
\end{proof}

Finally, we combine the ideas in Lemma~\ref{lemma:frechet_distance_query_lower_bound} and~\ref{lemma:map_matching_lower_bound} to obtain the main theorem of the section. The theorem essentially states that for geometric planar graphs, preprocessing does not help for map matching. In particular, we show that even with polynomial preprocessing time on the graph, one cannot obtain a truly subquadratic query time for the map matching problem in Problem~\ref{problem:mapmatchingqueriesproblem}. Similar ideas were used in Driemel and Psarros~\cite{DBLP:journals/corr/driemelpsarros} and Bringmann et al.~\cite{DBLP:conf/soda/BringmannDNP22} for the Approximate Nearest Neighbour Fr\'echet query problem.

\lowerboundmapmatchingquery*

\begin{proof}
Suppose for the sake of contradiction that there exists positive constants $\alpha$ and $\delta$ so that one can construct a data structure in $O(n^\alpha)$ preprocessing time to answer $2.999$-approximate map matching queries with a query time of $O((pq)^{1-\delta})$. 

Suppose $q = \Theta(p^\gamma)$ for some $\gamma > 0$. We take two cases. In the first case, $\gamma \geq 2\alpha$. Given a geometric planar graph of complexity~$p$ and a trajectory of complexity~$q$, we can preprocess the graph in $O(p^\alpha)$ time, and query a $2.999$-approximation of $\min_{\pi} \frechet(\pi,Q)$ in $O((pq)^{1-\delta})$ time. But $O(p^\alpha) = O(q^{1/2})$, so the overall running time is $O((pq)^{1-\delta})$. This contradicts Lemma~\ref{lemma:map_matching_lower_bound}. 

In the second case, $\gamma \leq 2 \alpha$. We are given an OVH instance $A,B$ where $|A| = n$ and $|B| = m$. Since OVH holds for any polynomial restrictions of $n$ and $m$, we may assume that $m = \Theta(n^{2\alpha})$. Partition the set~$B$ into subsets $B_1, \ldots, B_K$ so that $|B_i| = \Theta(|A|^\gamma)$, for all $1 \leq i \leq K$, and $K = O(|A|^{2\alpha - \gamma})$. Note that $A \times B$ contains a pair of orthogonal vectors if and only if there exists $1 \leq i \leq K$ so that $A \times B_i$ contains a pair of orthogonal vectors. Given the OV instance $A,B_i$, we use  Lemma~\ref{lemma:map_matching_lower_bound} to construct a geometric planar graph~$P$ and a trajectories $Q_i$ so that $p = |P| = O(d |A|) = O(dn)$ and $q = |Q_i| = O(d |B_i|) = O(dn^\gamma)$. Moreover, by Lemma~\ref{lemma:map_matching_lower_bound}, if $(A,B_i)$ is a YES-instance, then $\min_{\pi}(\pi,Q_i) \leq 1.001$, whereas if $(A,B_i)$ is a NO-instance, then $\min_{\pi}(\pi,Q_i) \geq 3$. Note that $q = \Theta(p^\gamma)$.

Therefore, to decide if $A,B$ is a YES-instance or a NO-instance, it suffices to query a $2.999$-approximation of $\min_{\pi}(\pi,Q_i)$ for all $1 \leq i \leq K$. Recall that $m = \Theta(n^{2 \alpha})$. We preprocess the graph~$P$ in $O((dn)^\alpha) = O(d^\alpha m^{1/2})$ time. We answer all $K$ queries in time

\[
    \begin{array}{rcl}
        O(\sum_{i=1}^K(dn)^{1-\delta}(dn^\gamma)^{1-\delta})
        &=& O(Kd^2 n^{1-\delta+\gamma}) \\
        &=& O(d^2 n^{(1-\delta+\gamma + 2\alpha - \gamma}) \\
        &=& O(d^2 n^{2 \alpha + 1 - \delta}) \\
        &=& O(d^2 n^{(2 \alpha + 1)(1 - \delta/(1+2\alpha)}) \\
        &=& O(d^2 (mn)^{1 - \delta/(1+2\alpha)}).
    \end{array}
\]
Putting this together, we yield an algorithm for OVH with running time 
\[
    O(d^\alpha m^{1/2} + d^2 (mn)^{1 - \frac \delta {1+2\alpha}}),
\]
where $\alpha$ and $\delta$ are constants. This contradicts OVH under the polynomial restriction $m = \Theta(n^{2\alpha})$, thereby contradicting SETH.
\end{proof}

\section{Conclusion}

We showed that for $c$-packed graphs, one can construct a data structure of near-linear size, so that map matching queries can be answered in time near-linear in terms of the query complexity, and polylogarithmic in terms of the graph complexity. We showed that for geometric planar graphs, there is no data structure for answering map matching queries in truly subquadratic time, unless SETH fails. 

Our map matching queries return the minimum Fr\'echet distance between the query trajectory and any path in an undirected graph. The data structure can be modified for directed graphs and for matched paths that start and end along edges of the graph. We can also modify the data structure to return the length of the minimum Fr\'echet distance path. More generally, one can modify the data structure to return $\sum_{e \in \pi} f(e)$ for any function~$f$, where~$\pi$ is the path with approximate minimum Fr\'echet distance. One application of this is fare estimation for ride-sharing services. 

Our data structures return the minimum Fr\'echet distance of the matched path. One can modify our data structure to retrieve the path that attains the minimum Fr\'echet distance, however, the space requirement would increase to quadratic. An open problem is whether one can obtain a map matching data structure that retrieves the matched path, and uses subquadratic space. Another open problem is whether one can make $\varepsilon$ chooseable at query time, rather than at preprocessing time.

Yet another direction for future work is to improve the preprocessing, size, and query time of the data structure. Can one improve the preprocessing time to subquadratic? Can one reduce the dependencies on $c$, $\varepsilon^{-1}$, $\log q$ and $\log p$? For example, can one improve the query time by avoiding parametric search? Avoiding parametric search would also make the algorithm more likely to be implementable in practice. 

Another practical consideration is verifying whether real-world road networks are indeed $c$-packed. Since these road networks contain upwards of a million edges~\cite{DBLP:conf/alenex/ChenDGNW11}, a faster implementation for computing the $c$-packedness value of a graph~\cite{DBLP:conf/isaac/GudmundssonSW20} would be required. If real-world road networks are not $c$-packed, an interesting direction for future work would be to consider other realistic input models, such as $\phi$-low-density, which have small values of $\phi$ even on large road networks~\cite{chen2008faster}.

Finally, two open problems are proposed in Section~\ref{sec:lower_bounds}. Can one modify the lower bound of Buchin~et~al.~\cite{DBLP:conf/soda/BuchinOS19} to rule out approximation ratios between 1.001 and 3 for preprocessing a trajectory to answer Fr\'echet distance queries in truly subquadratic time? Can one extend the lower bounds to rule out efficient data structures for other Fr\'echet distance queries, for example, range searching queries?

\bibliographystyle{plain}
\bibliography{main.bib}

\end{document}